\newtheorem{proposition}{Proposition}
\newtheorem{lemma}{Lemma}
\newtheorem{conjecture}{Conjecture}
\theoremstyle{definition}
\newtheorem{definition}{Definition}
\theoremstyle{remark}
\newtheorem{remark}{Remark}
\begin{document}
\title[The Dirac--Dunkl operator and the higher rank Bannai--Ito algebra]{The  $\mathbb{Z}_2^n$ Dirac--Dunkl operator  \\and a higher rank Bannai--Ito algebra}
\author[H. De Bie]{Hendrik De Bie}
\address{Department of Mathematical Analysis, Faculty of Engineering and Architecture, Ghent University, Galglaan 2, 9000 Ghent, Belgium}
\email{Hendrik.DeBie@UGent.be}
\author[V.X. Genest]{Vincent X. Genest}
\address{Department of Mathematics, Massachusetts Institute of Technology, 77 Massachusetts Ave, Cambridge, MA 02139, USA}
\email{vxgenest@mit.edu}
\author[L. Vinet]{Luc Vinet}
\address{Centre de Recherches Math\'ematiques, Universit\'e de Montr\'eal, P.O. Box 6128, Centre-ville Station, Montr\'eal, QC H3C 3J7, Canada}
\email{vinetl@crm.umontreal.ca}
\subjclass[2010]{81Q80, 81R10, 81R12}
\date{}
\dedicatory{}
\begin{abstract}
The kernel of the $\mathbb{Z}_2^{n}$ Dirac--Dunkl operator is examined. The symmetry algebra $\mathcal{A}_{n}$ of the associated Dirac--Dunkl equation on $\mathbb{S}^{n-1}$ is determined and is seen to correspond to a higher rank generalization of the Bannai--Ito algebra. A basis for the polynomial null-solutions of the Dirac--Dunkl operator is constructed. The basis elements are joint eigenfunctions of a maximal commutative subalgebra of $\mathcal{A}_{n}$ and are given explicitly in terms of Jacobi polynomials. The symmetry algebra is shown to act irreducibly on this basis via raising/lowering operators. A scalar realization of $\mathcal{A}_{n}$ is proposed.
\end{abstract}
\maketitle
\section{Introduction}
The goal of the present paper is to examine the kernel of the $n$-dimensional Dirac--Dunkl operator with $\mathbb{Z}_2^{n}$ reflection group and to study the corresponding Dirac--Dunkl equation on the $(n-1)$-sphere. The invariance algebra $\mathcal{A}_{n}$ of this equation will be obtained and identified as a higher rank generalization of the Bannai--Ito algebra  (see Definition 1). Basis functions for the space of homogeneous polynomial null solutions of the Dirac--Dunkl operator will be constructed using a Cauchy--Kovalevskaia extension theorem and will be given explicitly in terms of Jacobi polynomials. This set of basis functions, characterized as the joint eigenvectors of a maximal commutative subalgebra of $\mathcal{A}_{n}$, will be shown to transform irreducibly under $\mathcal{A}_{n}$ through the action of raising and lowering operators.

\subsection{The $\mathbb{Z}_2^{n}$ Laplace--Dunkl operator on $\mathbb{R}^{n}$} Dunkl operators are families of differential-difference operators associated with finite reflection groups. Introduced in \cite{1989_Dunkl_TransAmerMathSoc_311_167}, these operators appear in many areas: they are at the core of the study of multivariate special functions associated with root systems \cite{2001_Dunkl&Xu, 2003_MacDonald}, they arise in harmonic analysis and integral transforms \cite{2012_DeBie&Orsted&Somberg&Soucek_TransAmerMathSoc_364_3875, 2003_Rosler}, they are closely related to representations of double affine Hecke algebras \cite{2005_Cherednik}, they are at the origin of Dunkl processes \cite{2008_Graczyk&Rosler&Yor}, and they play a central role in the analysis of Calogero--Sutherland systems \cite{2000_VanDiejen&Vinet}. Consider the Abelian reflection group $\mathbb{Z}_2^{n}=\mathbb{Z}_2\times \cdots \times \mathbb{Z}_2$; the corresponding Dunkl operators $T_1, \ldots, T_{n}$ acting on $\mathbb{R}^{n}$ are defined as follows:
\begin{align*}
T_i=\partial_{x_i}+\frac{\mu_i}{x_i}(1-r_i),\qquad i=1,\ldots,n,
\end{align*}
where $\mu_1, \ldots,\mu_n$ with $\mu_i> 0$ are real parameters, $\partial_{x_i}$ is the partial derivative with respect to $x_i$ and where $r_i$ is the reflection operator in the $x_i=0$ hyperplane, i.e. $r_i f(x_i)=f(-x_i)$. It is obvious that one has $T_i T_j=T_j T_i$ for all $i,j\in \{1,\ldots, n\}$. The Laplace--Dunkl operator $\Delta$ associated to $\mathbb{Z}_2^{n}$ is defined by
\begin{align*}
\Delta=\sum_{i=1}^{n}T_i^{2}.
\end{align*}
\subsection{The $\mathbb{Z}_2^{n}$ Dirac--Dunkl operator on $\mathbb{R}^{n}$}
The Clifford algebra $\mathcal{C}\ell_n$ of negative signature is generated by the elements $e_1,\ldots, e_n$ which satisfy the defining relations
\begin{align}
\label{Clifford}
\{e_i, e_{j}\}=-2 \delta_{ij},\qquad i,j\in \{1,\ldots, n\},
\end{align}
where $\{x,y\}=xy+yx$ stands for the  anticommutator and where $\delta_{ij}$ is the Kronecker delta. The Dirac--Dunkl operator $\underline{D}$ on $\mathbb{R}^{n}$ associated to the reflection group $\mathbb{Z}_2^{n}$ and the position operator $\underline{x}$ are respectively defined as
\begin{align}
\label{Dunkl-Dirac}
\underline{D}=\sum_{i=1}^{n}e_i T_i,\qquad \underline{x}=\sum_{i=1}^{n}e_i x_i.
\end{align}
The Dirac--Dunkl operator $\underline{D}$ squares to the Laplace--Dunkl operator; indeed, it is verified that $\underline{D}^{2}=-\Delta$. One has also $\underline{x}^{2}=-\rVert x\rVert^{2}$, where $\rVert x \rVert^2=\sum_{i=1}^{n}x_i^{2}$. The ``intermediate'' Dirac operators and position operators are defined as follows. Let $A\subset [n]$, where $[n]=\{1,\ldots ,n\}$, and define
\begin{align}
\label{Dunkl-Dirac-A}
\underline{D}_{A}=\sum_{i\in A}e_i T_i,\qquad \underline{x}_{A}=\sum_{i\in A} e_i x_i.
\end{align}
In similar fashion, define
\begin{align*}
\Delta_{A}=\sum_{i\in A}T_i^{2},\qquad \rVert x_{A}\rVert^{2}=\sum_{i\in A}x_i^{2}.
\end{align*}
In this notation, the ``full'' $n$-dimensional Dirac--Dunkl operator  $\underline{D}$ given in \eqref{Dunkl-Dirac} can be written as $\underline{D}_{[n]}$; similarly, one has $\underline{x}=\underline{x}_{[n]}$. For $\ell\leq n$, we shall also use the notation $\underline{D}_{[\ell]}$ and $\underline{x}_{[\ell]}$ for $\underline{D}_{\{1,\ldots, \ell\}}$ and $\underline{x}_{\{1,\ldots,\ell\}}$, respectively. 
\subsection{The $\mathbb{Z}_2^{n}$ Dirac--Dunkl operator on $\mathbb{S}^{n-1}$}
The Dirac--Dunkl operator on the $(n-1)$-sphere and its ``intermediate'' analogs are most naturally defined in the context of the $\mathfrak{osp}(1|2)$ realizations generated by the intermediate Dirac--Dunkl operators and position operators $\underline{D}_{A}$ and $\underline{x}_{A}$. These realizations, which occur for any root system, are due to \cite{2009_Orsted&Somberg&Soucek_AdvApplCliffAlg_19_403} (see also \cite{2012_DeBie&Orsted&Somberg&Soucek_TransAmerMathSoc_364_3875}). For the particular case of $\mathbb{Z}_2^{n}$, one has the following.
\begin{proposition}[\cite{2012_DeBie&Orsted&Somberg&Soucek_TransAmerMathSoc_364_3875}]
\label{prop1}
For $A\subset [n]$, the operators $\underline{D}_A$ and $\underline{x}_{A}$ generate the Lie superalgebra $\mathfrak{osp}(1|2)$ with defining relations
\begin{alignat}{2}
\label{OSP}
\begin{aligned}
&\{\underline{x}_{A},\underline{x}_{A}\}=-2 \rVert x_{A}\rVert^{2} &\qquad &\{\underline{D}_{A}, \underline{D}_{A}\}=-2 \Delta_{A}
\\
& \{\underline{x}_{A},\underline{D}_{A}\}=-2(\mathbb{E}_{A}+\gamma_{A}) & \qquad & [\underline{D}_{A}, \mathbb{E}_{A}+\gamma_{A}]=\underline{D}_{A}
\\
&[\underline{D}_{A},\rVert x_{A}\rVert^{2}]=2 \underline{x}_{A}&\qquad&  [\mathbb{E}_{A}+\gamma_{A}, \underline{x}_{A}]=\underline{x}_{A}
\\
& [\Delta_{A}, \underline{x}_{A}]=2 \underline{D}_{A} & \qquad  &[\Delta_{A},\mathbb{E}_{A}+\gamma_A]=2 \Delta_{A}
\\
& [\Delta_{A}, \rVert x_{A}\rVert^{2}]=4(\mathbb{E}_{A}+\gamma_{A})  &\qquad & [\mathbb{E}_{A}+\gamma_{A}, \rVert x_{A}\rVert^{2}]=2 \rVert x_{A}\rVert^{2}
\end{aligned}
\end{alignat}
where $\mathbb{E}_{A}=\sum_{i\in A}x_i\partial_{x_i}$ is the Euler operator for the set $A$, $[x,y]=xy-yx$ stands for the commutator and where 
\begin{align}
\label{gamma-A}
\gamma_A=\frac{\rvert A\rvert}{2}+\sum_{i\in A} \mu_i.
\end{align}
\end{proposition}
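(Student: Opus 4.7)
I would verify each of the ten relations listed in \eqref{OSP} by direct computation, relying on three basic ingredients: the Clifford relations \eqref{Clifford}, the commutativity $T_iT_j=T_jT_i$ of the Dunkl operators, and the fundamental identities $[T_i,x_j]=\delta_{ij}(1+2\mu_ir_i)$ together with the anticommutation $\{T_i,r_i\}=0$. The two purely quadratic identities $\{\underline{x}_A,\underline{x}_A\}=-2\rVert x_A\rVert^{2}$ and $\{\underline{D}_A,\underline{D}_A\}=-2\Delta_A$ fall out first: after expanding the sums over $A$, the $i\neq j$ cross terms pair a symmetric factor ($x_ix_j$ or $T_iT_j$) with an antisymmetric Clifford product and so vanish, while the diagonal terms collapse via $e_i^2=-1$ to the desired expressions.

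The technical heart of the proof is the mixed anticommutator $\{\underline{x}_A,\underline{D}_A\}=-2(\mathbb{E}_A+\gamma_A)$. The $i\neq j$ contributions again cancel, this time because $[T_i,x_j]=0$ combines with the antisymmetry $e_ie_j+e_je_i=0$. Each diagonal contribution then equals $-x_iT_i-T_ix_i$; inserting the explicit forms $x_iT_i=x_i\partial_{x_i}+\mu_i(1-r_i)$ and $T_ix_i=x_iT_i+1+2\mu_ir_i$ shows that the reflection terms $\mu_ir_i$ cancel exactly. Summing over $i\in A$ yields $-2\mathbb{E}_A-|A|-2\sum_{i\in A}\mu_i=-2(\mathbb{E}_A+\gamma_A)$ in view of \eqref{gamma-A}. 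This reflection cancellation is the only piece of subtle bookkeeping in the proof, and it is the step I expect to be the main obstacle, since it is what forces the particular combination \eqref{gamma-A} defining $\gamma_A$.

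The remaining eight commutators are all of Heisenberg--Euler type. The two Euler identities $[\mathbb{E}_A,x_i]=x_i$ and $[\mathbb{E}_A,T_i]=-T_i$ (both for $i\in A$), together with the scalar nature of $\gamma_A$, immediately give the four relations involving $\mathbb{E}_A+\gamma_A$ on one side: namely $[\mathbb{E}_A+\gamma_A,\underline{x}_A]=\underline{x}_A$, $[\underline{D}_A,\mathbb{E}_A+\gamma_A]=\underline{D}_A$, $[\mathbb{E}_A+\gamma_A,\rVert x_A\rVert^{2}]=2\rVert x_A\rVert^{2}$, and $[\Delta_A,\mathbb{E}_A+\gamma_A]=2\Delta_A$. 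The last three relations $[\underline{D}_A,\rVert x_A\rVert^{2}]=2\underline{x}_A$, $[\Delta_A,\underline{x}_A]=2\underline{D}_A$, and $[\Delta_A,\rVert x_A\rVert^{2}]=4(\mathbb{E}_A+\gamma_A)$ reduce to the single-variable computations $[T_i,x_i^2]=2x_i$, $[T_i^2,x_i]=2T_i$, and $[T_i^2,x_i^2]=4x_i\partial_{x_i}+4\mu_i+2$, each obtained from one application of $[T_i,x_i]=1+2\mu_ir_i$ together with $\{T_i,r_i\}=0$, the latter again eliminating all reflection contributions after symmetrization. Summing over $i\in A$ and multiplying by the appropriate Clifford generators then recovers the stated identities, completing the verification of \eqref{OSP}.
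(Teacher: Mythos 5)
Your verification is correct: the identities $[T_i,x_j]=\delta_{ij}(1+2\mu_i r_i)$ and $\{T_i,r_i\}=0$, the vanishing of the $i\neq j$ cross terms via the Clifford antisymmetry, and the cancellation of the reflection terms in $x_iT_i+T_ix_i=2x_i\partial_{x_i}+2\mu_i+1$ (which is what produces the constant $\gamma_A=\tfrac{|A|}{2}+\sum_{i\in A}\mu_i$) are exactly the ingredients needed, and your single-variable reductions $[T_i,x_i^2]=2x_i$, $[T_i^2,x_i]=2T_i$, $[T_i^2,x_i^2]=4x_i\partial_{x_i}+4\mu_i+2$ all check out. Note, however, that the paper offers no proof of this proposition to compare against: it is imported verbatim from the cited reference of De Bie, {\O}rsted, Somberg and Sou\v{c}ek, so your direct computation simply supplies the verification that the paper delegates to that source.
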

The sCasimir operator $\underline{S}_{A}$ for the $\mathfrak{osp}(1|2)$ realization \eqref{OSP} is defined as \cite{1995_Lesniewski_JMathPhys_36_1457}
\begin{align}
\label{sCasimir}
\underline{S}_{A}=\frac{1}{2}\Big([\underline{x}_{A},\underline{D}_{A}]-1\Big).
\end{align}
As follows from \eqref{OSP}, this operator satisfies the anticommutation relations
\begin{align}
\label{S-Anti}
\{\underline{S}_{A}, \underline{D}_{A}\}=0,\qquad \{\underline{S}_{A},\underline{x}_{A}\}=0.
\end{align}
The Dirac--Dunkl operator on $\mathbb{S}^{n-1}$, which will also be referred to as the spherical Dirac--Dunkl operator, will be denoted by $\Gamma_{[n]}$. It is defined as
\begin{align}
\label{Dirac-Dunkl-S}
\Gamma_{[n]}=\underline{S}_{[n]} \prod_{i=1}^{n} r_i.
\end{align}
Since one has $\{\prod_{i\in A}r_i, \underline{x}_{A}\}=\{\prod_{i\in A}r_i, \underline{D}_{A}\}=0$, it follows from \eqref{S-Anti} that $\Gamma_{[n]}$ commutes with the full Dirac--Dunkl operator $\underline{D}$ and position operator $\underline{x}$; that is
\begin{align}
\label{Commu}
[\Gamma_{[n]}, \underline{D}]=0,\quad [\Gamma_{[n]},\underline{x}]=0.
\end{align}
Since $\Gamma_{[n]}$ also commutes with the Euler operator $\mathbb{E}_{[n]}$ on $\mathbb{R}^{n}$, it has a well defined action on the $(n-1)$-dimensional sphere. Similarly, the intermediate spherical Dirac--Dunkl operators $\Gamma_{A}$ for $A\subset [n]$ are defined by
\begin{align}
\label{Gamma-A}
\Gamma_{A}=\underline{S}_{A}\prod_{i\in A}r_{i}.
\end{align}
\subsection{Dunkl monogenics and the spherical Dirac--Dunkl equation} Consider $\mathcal{P}(\mathbb{R}^{n})=\mathbb{R}[x_1,\ldots ,x_{n}]$, the ring of polynomials in the variables $x_1,\ldots, x_{n}$. One has $\mathcal{P}(\mathbb{R}^{n})=\bigoplus_{k=0}^{\infty}\mathcal{P}_k(\mathbb{R}^{n})$, where $\mathcal{P}_k(\mathbb{R}^{n})$ is the space of homogeneous polynomials of degree $k$ in $x_1,\ldots, x_{n}$. Let $V$ be a representation space for $\mathcal{C}\ell_{n}$, fixed once and for all. For example, one could fix $V$ as the spinor space or as the Clifford algebra itself. The space $\mathcal{M}_k(\mathbb{R}^{n};V)$ of Dunkl monogenics of degree $k$ associated with the reflection group $\mathbb{Z}_2^{n}$ is defined as
\begin{align*}
\mathcal{M}_{k}(\mathbb{R}^{n};V)=\mathrm{Ker}\;\underline{D}\,\cap\,(\mathcal{P}_k(\mathbb{R}^{n})\otimes V).
\end{align*}
Similarly, the space $\mathcal{H}_k(\mathbb{R}^{n})$ of Dunkl harmonics of degree $k$ is defined as $\mathcal{H}_k(\mathbb{R}^{n})=\mathrm{Ker}\,\Delta\, \cap\, \mathcal{P}_k(\mathbb{R}^{n})$. The space of $V$-valued Dunkl harmonics of degree $k$ decomposes as follows \cite{2009_Orsted&Somberg&Soucek_AdvApplCliffAlg_19_403}:
\begin{align*}
\mathcal{H}_k(\mathbb{R}^{n})\otimes V=\mathcal{M}_k(\mathbb{R}^{n};V)\oplus \underline{x}\,\mathcal{M}_{k-1}(\mathbb{R}^{n};V).
\end{align*}
This leads to the Fischer decomposition of $V$-valued homogeneous polynomials.
\begin{proposition}[\cite{2009_Orsted&Somberg&Soucek_AdvApplCliffAlg_19_403}]
The space $\mathcal{P}_k(\mathbb{R}^{n})\otimes V$ of $V$-valued homogeneous polynomials of degree $k$ has the direct sum decomposition
\begin{align*}
\mathcal{P}_k(\mathbb{R}^{n})\otimes V=\bigoplus_{j=0}^{k}\underline{x}^{j}\,\mathcal{M}_{k-j}(\mathbb{R}^{n};V).
\end{align*}
\end{proposition}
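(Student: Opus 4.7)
The plan is to prove the decomposition by induction on $k$, reducing to the one-step Fischer identity
\begin{equation*}
\mathcal{P}_k(\mathbb{R}^n)\otimes V \;=\; \mathcal{M}_k(\mathbb{R}^n;V)\,\oplus\,\underline{x}\bigl(\mathcal{P}_{k-1}(\mathbb{R}^n)\otimes V\bigr). \tag{$\star$}
\end{equation*}
The case $k=0$ is immediate since $\underline{D}$ annihilates constants, so $\mathcal{M}_0(\mathbb{R}^n;V)=V=\mathcal{P}_0(\mathbb{R}^n)\otimes V$. Assuming $(\star)$ at degree $k$ together with the full decomposition at degree $k-1$, one has
$$\mathcal{P}_k\otimes V \;=\; \mathcal{M}_k\,\oplus\,\underline{x}\!\bigoplus_{j=0}^{k-1}\underline{x}^j \mathcal{M}_{k-1-j} \;=\; \bigoplus_{j=0}^{k}\underline{x}^j \mathcal{M}_{k-j}(\mathbb{R}^n;V),$$
so the induction closes.

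To establish $(\star)$, I would show that $\underline{D}\,\underline{x}\colon \mathcal{P}_{k-1}\otimes V\to\mathcal{P}_{k-1}\otimes V$ is a bijection. The $\mathfrak{osp}(1|2)$ relation $\underline{D}\,\underline{x}=-\underline{x}\,\underline{D}-2(\mathbb{E}_{[n]}+\gamma)$ of Proposition~\ref{prop1}, applied inductively in $j$, yields for any $m\in\mathcal{M}_\ell(\mathbb{R}^n;V)$ the identity
$$\underline{D}\,\underline{x}(\underline{x}^j m)\;=\;\alpha_{j,\ell}\,\underline{x}^j m,\qquad \alpha_{2p,\ell}=-2(\ell+p+\gamma),\qquad \alpha_{2p+1,\ell}=-2(p+1),$$
with $\gamma=\gamma_{[n]}=\tfrac{n}{2}+\sum_i\mu_i$. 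Since $\mu_i>0$ for all $i$, we have $\gamma>0$ and all scalars $\alpha_{j,\ell}$ are nonzero. By the inductive hypothesis $\mathcal{P}_{k-1}\otimes V=\bigoplus_j\underline{x}^j\mathcal{M}_{k-1-j}(\mathbb{R}^n;V)$, so $\underline{D}\,\underline{x}$ acts as a nonzero scalar on each summand and is therefore invertible on the whole space.

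Invertibility of $\underline{D}\,\underline{x}$ delivers $(\star)$ at once: given $p\in\mathcal{P}_k\otimes V$, solve $\underline{D}\,\underline{x}\,q=\underline{D}\,p$ uniquely for $q\in\mathcal{P}_{k-1}\otimes V$ and write $p=(p-\underline{x}q)+\underline{x}q$ with $p-\underline{x}q\in\mathcal{M}_k$; for directness, $\underline{x}q\in\mathcal{M}_k$ would force $\underline{D}\,\underline{x}\,q=0$ and hence $q=0$ by injectivity. I expect the main obstacle to be the bookkeeping of the scalar recursion producing the $\alpha_{j,\ell}$: it is routine but requires care, and the positivity of $\gamma$ is essential, since at exceptional (negative) values of the Dunkl parameters $\underline{D}\,\underline{x}$ would develop a kernel and the decomposition could genuinely fail.
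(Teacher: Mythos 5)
Your argument is correct: the recursion $c_{j+1}=-c_j-2(\ell+j+\gamma_{[n]})$ coming from $\{\underline{x},\underline{D}\}=-2(\mathbb{E}_{[n]}+\gamma_{[n]})$ does give $\alpha_{2p,\ell}=-2(\ell+p+\gamma_{[n]})$ and $\alpha_{2p+1,\ell}=-2(p+1)$ (these agree with the $j=0$ cases of Lemma~13 later in the paper), all nonzero since $\mu_i>0$, and the induction scheme (full decomposition at $k-1$ $\Rightarrow$ invertibility of $\underline{D}\,\underline{x}$ on $\mathcal{P}_{k-1}\otimes V$ $\Rightarrow$ the one-step splitting $(\star)$ at $k$ $\Rightarrow$ full decomposition at $k$) closes properly. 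However, your route is genuinely different from the paper's: the paper offers no proof at all, attributing the statement to {\O}rsted--Somberg--Sou\v{c}ek and presenting it as an immediate consequence of the refinement $\mathcal{H}_k(\mathbb{R}^{n})\otimes V=\mathcal{M}_k(\mathbb{R}^{n};V)\oplus \underline{x}\,\mathcal{M}_{k-1}(\mathbb{R}^{n};V)$ combined with the (scalar) Fischer decomposition of $\mathcal{P}_k(\mathbb{R}^{n})$ into pieces $\rVert x\rVert^{2j}\mathcal{H}_{k-2j}(\mathbb{R}^{n})$. That two-stage route isolates the ``spin'' refinement of harmonics into monogenics as the only new ingredient beyond the classical Dunkl theory, whereas your argument is self-contained, proves existence and uniqueness of the decomposition in one stroke, and makes explicit exactly where positivity of $\gamma_{[n]}$ enters (the eigenvalues $-2(\ell+p+\gamma_{[n]})$ would vanish at exceptional negative parameter values); the small price is the bookkeeping of the scalar recursion and the need to note that multiplication by $\underline{x}$ is injective (since $\underline{x}^2=-\rVert x\rVert^2$) so that it carries the degree-$(k-1)$ direct sum to a direct sum.
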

The next proposition states that the space of Dunkl monogenics $\mathcal{M}_k(\mathbb{R}^{n};V)$ is an eigenspace for the spherical Dirac--Dunkl operator $\Gamma_{[n]}$. This means  in particular that Dunkl monogenics of degree $k$ satisfy a Dirac--Dunkl equation on $\mathbb{S}^{n-1}$.
\begin{proposition}
Let $\Psi_k\in \mathcal{M}_k(\mathbb{R}^{n};V)$ be a monogenic polynomial of degree $k$, then $\Psi_{k}$ satisfies the spherical Dirac--Dunkl equation
\begin{align}
\label{Dunkl-Dirac-Eq-S}
\Gamma_{[n]}\,\Psi_{k}=(-1)^{k}(k+\gamma_{[n]}-1/2)\Psi_{k},
\end{align}
where $\gamma_{[n]}$ is as in \eqref{gamma-A}.
\end{proposition}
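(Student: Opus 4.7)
The plan is to compute $\Gamma_{[n]} \Psi_k = \underline{S}_{[n]} \bigl(\prod_{i=1}^{n} r_i\bigr) \Psi_k$ by handling the two factors separately and recombining. The strategy uses only the defining $\mathfrak{osp}(1|2)$ relations from Proposition \ref{prop1}, the monogenic condition $\underline{D}\,\Psi_k = 0$, and the homogeneity of $\Psi_k$.

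First I would simplify the action of the reflection product. Since each $r_i$ acts trivially on $V$ and on the variables $x_j$ with $j \neq i$ and sends $x_i \mapsto -x_i$, the operator $\prod_{i=1}^n r_i$ reverses the sign of every spatial variable. On a homogeneous element of $\mathcal{P}_k(\mathbb{R}^n) \otimes V$, this yields the scalar factor $(-1)^k$, so $\bigl(\prod_{i=1}^n r_i\bigr)\Psi_k = (-1)^k \Psi_k$.

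Next I would compute the eigenvalue of $\underline{S}_{[n]}$ on $\Psi_k$. Starting from the sCasimir definition \eqref{sCasimir}, write
\begin{align*}
\underline{S}_{[n]} \Psi_k = \tfrac{1}{2}\bigl(\underline{x}\,\underline{D} - \underline{D}\,\underline{x} - 1\bigr)\Psi_k.
\end{align*}
The first term vanishes because $\underline{D}\,\Psi_k = 0$. For the second term I would invoke the anticommutation relation $\{\underline{x},\underline{D}\} = -2(\mathbb{E}_{[n]} + \gamma_{[n]})$ from \eqref{OSP}, which gives $\underline{D}\,\underline{x}\,\Psi_k = -2(\mathbb{E}_{[n]} + \gamma_{[n]})\Psi_k - \underline{x}\,\underline{D}\,\Psi_k = -2(k + \gamma_{[n]})\Psi_k$, using $\mathbb{E}_{[n]} \Psi_k = k\Psi_k$. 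Substituting yields $\underline{S}_{[n]}\Psi_k = (k + \gamma_{[n]} - 1/2)\,\Psi_k$.

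Combining the two observations, $\Gamma_{[n]}\Psi_k = \underline{S}_{[n]}\,(-1)^k \Psi_k = (-1)^k(k + \gamma_{[n]} - 1/2)\Psi_k$, which is the claimed identity. No real obstacle is expected; the only point demanding care is the sign bookkeeping, particularly verifying that the scalar $(-1)^k$ genuinely commutes past $\underline{S}_{[n]}$ (it does, being a scalar) and that $\Psi_k$ remains monogenic and homogeneous after the reflection, so that the eigenvalue computation is valid on the reflected polynomial as well.
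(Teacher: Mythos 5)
Your proposal is correct and follows essentially the same route as the paper: both reduce $\prod_{i=1}^n r_i$ to the scalar $(-1)^k$ on $k$-homogeneous polynomials, kill the $\underline{x}\,\underline{D}$ term via monogenicity, and convert $\underline{D}\,\underline{x}$ into the anticommutator $\{\underline{x},\underline{D}\}=-2(\mathbb{E}_{[n]}+\gamma_{[n]})$ evaluated with the Euler eigenvalue $k$. The sign bookkeeping in your computation of $\underline{S}_{[n]}\Psi_k=(k+\gamma_{[n]}-1/2)\Psi_k$ checks out.
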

\begin{proof}
By a direct calculation, one finds
\begin{align*}
\Gamma_{[n]}\Psi_{k}&=\frac{(-1)^{k}}{2}\big(\underline{x}_{[n]}\underline{D}_{[n]}-\underline{D}_{[n]}\underline{x}_{[n]}-1\big)\Psi_{k}=\frac{(-1)^{k}}{2}\big(-\underline{D}_{[n]}\underline{x}_{[n]}-1\big)\Psi_{k}
\\
&=\frac{(-1)^{k}}{2}\big(-\underline{x}_{[n]}\underline{D}_{[n]}-\underline{D}_{[n]}\underline{x}_{[n]}-1\big)\Psi_{k}
=\frac{(-1)^{k+1}}{2}\big(\{\underline{x}_{[n]},\underline{D}_{[n]}\}+1\big)\Psi_{k}
\\
&=\frac{(-1)^{k+1}}{2}\big(-2(\mathbb{E}_{[n]}+\gamma_{[n]})+1\big)\Psi_{k}=(-1)^{k}(k+\gamma_{[n]}-1/2)\Psi_{k},
\end{align*}
where we successively used the definition \eqref{Dirac-Dunkl-S}, the relation $(\prod_{j=1}^{n}r_j) p=(-1)^{k}p$ valid for $p\in \mathcal{P}_k(\mathbb{R}^{n})$, the kernel property $\underline{D}\Psi_k=0$, the $\mathfrak{osp}(1|2)$ relations \eqref{OSP} and the relation $\mathbb{E}_{[n]}q=k\,q$ valid for $q\in \mathcal{P}_k(\mathbb{R}^{n})$.
\end{proof}
One of the main results of this paper is the presentation and characterization of the symmetry algebra of the Dirac--Dunkl equation \eqref{Dunkl-Dirac-Eq-S} associated with the kernel of the $n$-dimensional Dirac--Dunkl operator $\underline{D}$. This novel abstract structure, defined in Section 2, will be identified as a higher rank generalization of the Bannai--Ito algebra. Let us now introduce the basics of this algebra in the rank-one case.

\subsection{The Bannai--Ito algebra}
The Bannai--Ito algebra was originally introduced in \cite{2012_Tsujimoto&Vinet&Zhedanov_AdvMath_229_2123} as the algebraic structure underlying the bispectral property of the Bannai--Ito polynomials, which sit atop of the hierarchy of ``$-1$'' orthogonal polynomials \cite{2013_Genest&Vinet&Zhedanov_SIGMA_9_18, 2012_Tsujimoto&Vinet&Zhedanov_AdvMath_229_2123}. This algebra has generators $A_1, A_2, A_3$ and defining relations
\begin{align}
\label{BI}
\{A_1,A_2\}=A_3+\alpha_3,\qquad \{A_2,A_3\}=A_1+\alpha_1,\qquad \{A_3,A_1\}=A_2+\alpha_2,
\end{align}
where $\alpha_1, \alpha_2,\alpha_3$ are central elements. The algebra \eqref{BI} has appeared in \cite{2014_Genest&Vinet&Zhedanov_ProcAmMathSoc_142_1545} as a duality algebra for the Racah problem of $\mathfrak{osp}(1|2)$ (then called $sl_{-1}(2)$) and in \cite{2014_Genest&Vinet&Zhedanov_JPhysA_47_205202, 2015_Genest&Vinet&Zhedanov_CommMathPhys_336_243} as an invariance algebra for superintegrable systems with reflections; when $\alpha_i=0$, it can be viewed as the $q\rightarrow -1$ limit of $U_q'(\mathfrak{so}(3))$ \cite{1991_Gavrilik&Klimyk_LettMathPhys_21_215}. Recently, it has also been recognized that the Bannai--Ito algebra corresponds to a degeneration of the rank-one double affine Hecke algebra of type $(\check{C}_1, C_1)$ \cite{Genest-TMP-2015}.

In the paper \cite{2015_DeBie&Genest&Vinet_ArXiv_1501.03108} by the authors, it was shown that the Bannai--Ito algebra is the symmetry algebra of the Dirac--Dunkl equation on the 2-sphere associated to the $\mathbb{Z}_2^{3}$ reflection group. In that context, relevant finite-dimensional representations of \eqref{BI} were constructed, explicit formulas for the basis vectors of these representations were found, and a connection with the Bannai--Ito polynomials was established.

\subsection{Goal and outline} The main goal of the present paper is to extend the results of \cite{2015_DeBie&Genest&Vinet_ArXiv_1501.03108} to arbitrary dimension. As shall be seen this extension is involved, and yet the results are instructive and lend themselves to an elegant presentation. The outline is as follows. In Section two, the symmetry algebra of the Dirac--Dunkl equation is investigated. In Section three, an explicit basis for the space of Dunkl monogenics is constructed. In Section four, the higher rank Bannai--Ito algebra obtained in Section two is shown to act irreducibly on the basis found in Section three. In Section five, a scalar version of the model is presented. A short conclusion follows.

\section{Symmetry algebra of the Dirac--Dunkl operator}
In this section, the joint symmetries of the Dirac--Dunkl and spherical Dirac--Dunkl operators are obtained. These symmetries are shown to close under anticommutation. The resulting invariance algebra $\mathcal{A}_{n}$ is interpreted as a higher rank extension of the Bannai--Ito algebra. Two Casimir operators are exhibited.

Let us begin by showing that the intermediate spherical Dirac--Dunkl operators $\Gamma_{A}$ for $A\subset [n]$ introduced in \eqref{Gamma-A} are in fact symmetries of $\underline{D}$ and $\Gamma_{[n]}$.

\begin{lemma}
For $A\subset [n]$, the operator $\Gamma_{A}$ defined in \eqref{Gamma-A} satisfies
\begin{enumerate}[i)]
\item $[\Gamma_{A},\underline{D}]=[\Gamma_{A},\underline{x}]=0$
\item $[\Gamma_{A}, \Gamma_{[n]}]=0$
\end{enumerate}
where $\underline{D}$ and $\underline{x}$ are  as in \eqref{Dunkl-Dirac}.
\end{lemma}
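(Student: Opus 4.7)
The plan is to reduce both parts to routine bookkeeping with the commutation and anticommutation relations among the generators $\{e_i, x_i, T_i, r_i\}$. The crucial structural observation is that $\underline{S}_A$ and the reflection product $R_A := \prod_{i\in A} r_i$ are built exclusively from generators with indices in $A$, while for $k\in A^c := [n]\setminus A$ and $i\in A$ one has $\{e_k,e_i\}=0$ whereas $T_k, x_k, r_k$ all commute with $x_i, T_i, r_i$ because they act on different variables. I would also record two auxiliary facts I expect to use repeatedly: (a) $\{R_A,\underline{D}_A\}=\{R_A,\underline{x}_A\}=0$, which follows term-by-term from $\{r_i,T_i\}=\{r_i,x_i\}=0$ together with $[r_j,T_i]=[r_j,x_i]=0$ for $j\neq i$ and $[e_i,r_j]=0$; and (b) $[R_A,\underline{S}_A]=0$, a consequence of (a) since $R_A$ anticommuting with both $\underline{x}_A$ and $\underline{D}_A$ forces it to commute with $[\underline{x}_A,\underline{D}_A]$.

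For part i) I would split $\underline{D}=\underline{D}_A+\underline{D}_{A^c}$ and handle the two pieces separately. The piece $[\Gamma_A,\underline{D}_{A^c}]$ vanishes because each summand $e_k T_k$ with $k\in A^c$ commutes with every factor of $\Gamma_A=\underline{S}_A R_A$: the $T_k$ commutes with all $x_i,T_i,r_i$ indexed by $A$, the $e_k$ anticommutes with each $e_i$ ($i\in A$) and so commutes with the even-degree Clifford elements $e_ie_j$ appearing in $\underline{S}_A$, and $e_k$ also commutes with the bosonic factor $R_A$. The piece $[\Gamma_A,\underline{D}_A]$ vanishes because $\Gamma_A$ is a product of two operators each of which \emph{anti}commutes with $\underline{D}_A$, namely $\underline{S}_A$ by \eqref{S-Anti} and $R_A$ by fact (a) above; two anticommutations compose to a commutation. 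Replacing $T_i$ with $x_i$ throughout gives $[\Gamma_A,\underline{x}]=0$ by the same argument.

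For part ii) I would write $\Gamma_{[n]}=\underline{S}_{[n]}R$ with $R=R_A R_{A^c}$ and expand
\[
[\Gamma_A,\Gamma_{[n]}]=[\Gamma_A,\underline{S}_{[n]}]\,R+\underline{S}_{[n]}\,[\Gamma_A,R].
\]
The second bracket is zero: $R_{A^c}$ commutes with $\Gamma_A$ because it involves only generators outside $A$, while $R_A$ commutes with $\Gamma_A$ by fact (b) combined with $[R_A,R_A]=0$. For the first bracket, $\underline{S}_{[n]}=\tfrac12([\underline{x},\underline{D}]-1)$, and part i) together with the Jacobi identity immediately yields $[\Gamma_A,\underline{S}_{[n]}]=0$.

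The only mildly delicate step is verifying the auxiliary anticommutation $\{R_A,\underline{D}_A\}=0$ in full (and its analog for $\underline{x}_A$), where one must carefully track that the single $r_i$ inside $R_A$ anticommuting with $T_i$ produces exactly the sign needed while the other $r_j$'s pass through harmlessly; once that is in hand, the rest of the proof is a direct application of the observations above, and no deeper ingredient is required.
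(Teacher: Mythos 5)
Your proof is correct and follows essentially the same route as the paper: split $\underline{D}=\underline{D}_A+\underline{D}_{[n]\setminus A}$, kill the complementary piece by parity of the Clifford elements, kill the $A$-piece by composing the two anticommutations $\{\underline{S}_A,\underline{D}_A\}=0$ and $\{\prod_{i\in A}r_i,\underline{D}_A\}=0$, and deduce ii) from i). You merely spell out the steps the paper leaves implicit (notably the check that $\Gamma_A$ commutes with the full reflection product $\prod_{i=1}^n r_i$ in part ii)), which is a welcome amount of extra detail rather than a deviation.
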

\begin{proof}
For $A\subset[n]$, one has $\underline{D}=\underline{D}_{A}+\underline{D}_{[n]\setminus A}$ and $\underline{x}=\underline{x}_{A}+\underline{x}_{[n]\setminus A}$. Consequently, one can write
\begin{align*}
[\Gamma_{A},\underline{D}]=[\Gamma_{A},\underline{D}_{A}]+[\Gamma_{A},\underline{D}_{[n]\setminus A}]=[\Gamma_{A},\underline{D}_{A}]=0,
\end{align*}
where the relations \eqref{Clifford} and the property \eqref{S-Anti} were used; the result for $\underline{x}$ is proven in a similar fashion, yielding $i)$. In view of the definition \eqref{Gamma-A} of $\Gamma_{A}$, $ii)$ is seen to follow directly from $i)$.
\end{proof}
\begin{remark}
When $A=\emptyset$ or $A=\{k\}$, it follows from the definition \eqref{Gamma-A} that
\begin{align}
\label{Obs}
\Gamma_{\emptyset}=-\frac{1}{2},\qquad \Gamma_{\{k\}}=\mu_k.
\end{align}
\end{remark}
For two distinct subsets $A,B\subset [n]$, it is seen that the operators $\Gamma_{A}$ and $\Gamma_{B}$ will generally not commute. The symmetries $\Gamma_{A}$ for $A\subset[n]$ hence generate the non-Abelian invariance (or symmetry) algebra of the Dirac--Dunkl and spherical Dirac--Dunkl operators. For concreteness, we give below the explicit expression for the intermediate spherical Dirac--Dunkl operators. For $A\subset{[n]}$, $\Gamma_{A}$ has the form
\begin{align}
\label{Gamma-A-Explicit}
\Gamma_{A}=\Big(\smashoperator{\sum_{\{i,j\}\subset A}} M_{ij}+\frac{|A|-1}{2}+\sum_{k\in A}\mu_k r_k\Big) \prod_{i\in A} r_i,
\end{align}
where the first summation is performed on all 2-subsets of $A$ and where $M_{ij}$ reads
\begin{align}
\label{M}
M_{ij}=e_i e_j (x_{i}T_{j}-x_{j} T_{i}).
\end{align}
As is seen from \eqref{Gamma-A-Explicit}, $\Gamma_{A}$ is invariant under permutations of the indices. 

We now prove a series of lemmas unveiling the commutation relations between the symmetry operators $\Gamma_{A}$.
\begin{lemma}
For $A\subset [n]$, $j\in A$ and $k\notin A$, one has
\begin{align}
\label{Comm-1}
\{\Gamma_{A}, \Gamma_{\{j,k\}}\}=\Gamma_{(A\cup \{k\})\setminus \{j\}}+2\,\Gamma_{\{j\}}\Gamma_{A\cup \{k\}}+2\,\Gamma_{\{k\}} \Gamma_{A\setminus\{j\}}.
\end{align}
\end{lemma}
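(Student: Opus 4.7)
The plan is to prove the identity by direct computation from the explicit form \eqref{Gamma-A-Explicit} of $\Gamma_A$. Writing $A' = A\setminus\{j\}$ and $\Gamma_A = \underline{S}_A R_A$ with $R_A = \prod_{i\in A} r_i$, reading off \eqref{Gamma-A-Explicit} and observing that the bracketed operator commutes with $R_A$ gives the recursive decomposition $\underline{S}_A = \underline{S}_{A'} + \sum_{i\in A'} M_{ij} + \tfrac{1}{2} + \mu_j r_j$. The crucial observation is that $\underline{S}_{A'}$ commutes with every factor appearing in $\underline{S}_{\{j,k\}}$, since the index sets $A'$ and $\{j,k\}$ are disjoint; all nontrivial contributions to the anticommutator therefore originate from the ``$j$-coupled'' terms $\sum_{i\in A'} M_{ij}$, $\mu_j r_j$, and $M_{jk}$.

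I would first push the reflection blocks through the sCasimirs. Using $r_j T_j r_j = -T_j$ and $r_j x_j r_j = -x_j$ (with $r_j$ commuting with $T_p, x_p$ for $p\neq j$), one checks that $r_j M_{pj} r_j = -M_{pj}$ for any $p\neq j$; since $R_{A'}$ is supported on indices disjoint from $\{j,k\}$, it also commutes with $\underline{S}_{\{j,k\}}$. This yields the pair of identities $R_A \underline{S}_{\{j,k\}} = \underline{S}^{-}_{\{j,k\}} R_A$ and $R_{\{j,k\}} \underline{S}_A = \underline{S}^{-}_A R_{\{j,k\}}$, where $\underline{S}^{-}_{\{j,k\}}$ (resp.\ $\underline{S}^{-}_A$) is obtained by flipping the sign of $M_{jk}$ (resp.\ of $\sum_{i\in A'} M_{ij}$). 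Combined with the easy identity $R_A R_{\{j,k\}} = R_{(A\cup\{k\})\setminus\{j\}}$, this reduces the problem to the identity $\{\Gamma_A,\Gamma_{\{j,k\}}\} = \bigl(\underline{S}_A \underline{S}^{-}_{\{j,k\}} + \underline{S}_{\{j,k\}} \underline{S}^{-}_A\bigr) R_{(A\cup\{k\})\setminus\{j\}}$.

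The heart of the argument, and the main obstacle, is the combinatorial identity $\{M_{ij}, M_{jk}\} = -M_{ik}(1+2\mu_j r_j)$, valid for pairwise distinct $i,j,k$. Its proof is by direct expansion of the two products $M_{ij}M_{jk}$ and $M_{jk}M_{ij}$, using systematically the pairwise commutation relations $[T_a,T_b]=0$ and $[T_a,x_b]=0$ for $a\neq b$ together with the Dunkl exchange $[T_j,x_j]=1+2\mu_j r_j$; the correction $2\mu_j r_j$ is precisely what produces the $2\mu_j = 2\Gamma_{\{j\}}$ coefficient on the right-hand side of the lemma.

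Once this identity is in hand the rest is bookkeeping. Expanding $\underline{S}_A \underline{S}^{-}_{\{j,k\}} + \underline{S}_{\{j,k\}} \underline{S}^{-}_A$ and grouping terms by the number of reflections $r_j, r_k$ they contain, the purely reflection-and-scalar pieces produced by the factors $\tfrac12+\mu_j r_j$ and $\tfrac12+\mu_j r_j+\mu_k r_k$ reproduce, after reinsertion of $R_{(A\cup\{k\})\setminus\{j\}}$ on the right, exactly the terms $2\mu_j\Gamma_{A\cup\{k\}}$ and $2\mu_k \Gamma_{A\setminus\{j\}}$, while the cross-anticommutator $\sum_{i\in A'}\{M_{ij},M_{jk}\}=-\sum_{i\in A'}M_{ik}(1+2\mu_j r_j)$ supplies the $M_{ik}$-terms needed to complete $\underline{S}_{(A\cup\{k\})\setminus\{j\}}$ through its own recursion. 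Using $\Gamma_{\{j\}}=\mu_j$ and $\Gamma_{\{k\}}=\mu_k$ from the Remark, one recovers precisely the asserted sum $\Gamma_{(A\cup\{k\})\setminus\{j\}}+2\Gamma_{\{j\}}\Gamma_{A\cup\{k\}}+2\Gamma_{\{k\}}\Gamma_{A\setminus\{j\}}$.
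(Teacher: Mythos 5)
Your proposal is correct and follows essentially the same route as the paper: a direct calculation from the explicit realization \eqref{Gamma-A-Explicit}, hinging on the anticommutator identity $\{M_{ij},M_{jk}\}=-(1+2\mu_j r_j)M_{ik}$, which is exactly the paper's stated observation $\{M_{ij}r_ir_j,M_{jk}r_jr_k\}=(1+2\mu_jr_j)M_{ik}r_ir_k$ after the reflections are pushed through. Your write-up simply makes explicit the bookkeeping (the decomposition of $\underline{S}_A$, the conjugation by $r_j$, and the regrouping of terms) that the paper leaves to the reader.
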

\begin{proof}
The proof of \eqref{Comm-1} follows by a direct calculation using \eqref{Gamma-A-Explicit}. For simplicity, one can perform the calculation with $A=[\ell]$ and $\{j,k\}=\{1,\ell+1\}$ and extend the result by symmetry. Proceeding with this calculation, it is useful to observe that
\begin{align*}
\{M_{ij}r_i r_j, M_{jk}r_j r_k\}=(1+2 \mu_{j} r_j) M_{ik}r_{i} r_{k},\qquad i\neq k.
\end{align*} 
\end{proof}
\begin{lemma}
\label{Lemma-Comm}
For $A\subset B$, $B\subset A$ or $A\cap B=\emptyset$, one has
\begin{align*}
[\Gamma_{A}, \Gamma_{B}]=0.
\end{align*}
\end{lemma}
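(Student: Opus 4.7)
The plan is to split along the three cases in the hypothesis. Since $[\Gamma_A,\Gamma_B]=-[\Gamma_B,\Gamma_A]$, the nested situations $A\subset B$ and $B\subset A$ are symmetric, so essentially only two cases need to be handled: disjoint supports and nested inclusion. Throughout I would use the explicit expression \eqref{Gamma-A-Explicit} for $\Gamma_A$ together with the observation that $\Gamma_A$ is Clifford-even (each $M_{ij}$ carries the product $e_ie_j$), which makes many of the sign computations automatic.

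For the disjoint case $A\cap B=\emptyset$, I would argue directly that every building block of $\Gamma_A$ commutes with every building block of $\Gamma_B$. The Clifford monomials $e_ie_j$ with $\{i,j\}\subset A$ commute with the $e_ke_\ell$ with $\{k,\ell\}\subset B$ because any two even Clifford products on disjoint index sets commute; the Dunkl operators $T_i$ and multiplications by $x_i$ for $i\in A$ commute with those for indices in $B$, since the reflection $r_i$ appearing inside $T_i$ acts trivially on $x_j$ for $j\neq i$; and the reflections themselves mutually commute. Together these observations force $[\Gamma_A,\Gamma_B]=0$.

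For the nested case $A\subset B$, the plan is to prove the sharper statement that $\Gamma_A$ commutes individually with each of $\underline{D}_B$, $\underline{x}_B$, and $\prod_{i\in B}r_i$, which together yield commutation with $\Gamma_B=\underline{S}_B\prod_{i\in B}r_i$. Decomposing $\underline{D}_B=\underline{D}_A+\underline{D}_{B\setminus A}$, the first summand commutes with $\Gamma_A$ by exactly the computation used in the proof of Lemma 1 (the anticommutations \eqref{S-Anti} together with $\{\prod_{i\in A}r_i,\underline{D}_A\}=0$ produce a double sign flip), and the second summand commutes with $\Gamma_A$ by the disjoint-support reasoning of the previous paragraph applied to $A$ and $B\setminus A$. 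The same argument handles $\underline{x}_B$, whence $[\Gamma_A,\underline{S}_B]=0$ via $\underline{S}_B=\tfrac12([\underline{x}_B,\underline{D}_B]-1)$. For the reflection factor, write $\prod_{i\in B}r_i=\prod_{i\in A}r_i\cdot\prod_{j\in B\setminus A}r_j$: the right factor commutes with $\Gamma_A$ by disjointness, while the left factor commutes with $\underline{S}_A$ because $\underline{x}_A$ and $\underline{D}_A$ each anticommute with it and the two signs cancel in the commutator. Combining these facts gives $[\Gamma_A,\Gamma_B]=0$.

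The main obstacle is not any single step but the sign bookkeeping: for each atomic piece one must decide whether it commutes or anticommutes with $\prod_{i\in A}r_i$, and whether it is Clifford-even or odd. The payoff of this careful accounting is that the argument essentially iterates the one already used for Lemma 1 and invokes no identities beyond the $\mathfrak{osp}(1|2)$ relations \eqref{OSP}.
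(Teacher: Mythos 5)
Your proposal is correct and follows essentially the same route as the paper: the nested case is handled by re-running the argument of Lemma~1 with $[n]$ replaced by the larger set (commutation with $\underline{D}_B$, $\underline{x}_B$ and the reflection product via the double sign flips from \eqref{S-Anti}), and the disjoint case by checking that all building blocks of $\Gamma_A$ and $\Gamma_B$ in the explicit expression \eqref{Gamma-A-Explicit} commute. You simply spell out in more detail what the paper leaves as ``follows directly.''
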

\begin{proof}
When $A\subset B$ or $B\subset A$, the result follows directly from lemma 1, with $\underline{D}$, $\underline{x}$ and $\Gamma_{[n]}$ replaced by $\underline{D}_{B}$, $\underline{x}_{B}$ and $\Gamma_{B}$ or $\underline{D}_{A}$, $\underline{x}_{A}$ and $\Gamma_{A}$,  respectively. When $A\cap B=\emptyset$, the result is directly obtained using the explicit expression \eqref{Gamma-A-Explicit} for the intermediate spherical Dirac--Dunkl operators.
\end{proof}

\begin{lemma}
For $A, B\subset [n]$ such that $A\cap B=\{k\}$, one has
\begin{align}
\label{Comm-2}
\{\Gamma_{A}, \Gamma_{B}\}=\Gamma_{(A\cup B)\setminus \{k\}}+2\,\Gamma_{\{k\}}\Gamma_{A\cup B}+2\,\Gamma_{A\setminus\{k\}}\Gamma_{B\setminus\{k\}}.
\end{align}
\end{lemma}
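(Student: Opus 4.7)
The plan is to proceed by direct computation, generalizing the argument that established formula \eqref{Comm-1}. First, by the permutation invariance of the $\Gamma_C$'s (cf.\ \eqref{Gamma-A-Explicit}), I may assume without loss of generality that $A\setminus\{k\}=\{1,\ldots,m\}$, $k=m+1$, and $B\setminus\{k\}=\{m+2,\ldots,m+\ell+1\}$, so that the three index blocks $A\setminus\{k\}$, $\{k\}$, $B\setminus\{k\}$ are pairwise disjoint; this streamlines the combinatorics considerably.

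Using the explicit form \eqref{Gamma-A-Explicit}, I would write $\Gamma_C=\mathcal{L}_C R_C$ with $\mathcal{L}_C=\sum_{\{i,j\}\subset C}M_{ij}+\tfrac{|C|-1}{2}+\sum_{i\in C}\mu_i r_i$ and $R_C=\prod_{i\in C}r_i$. Since $r_k^2=1$ and the reflections pairwise commute, one has $R_AR_B=R_BR_A=R_{(A\cup B)\setminus\{k\}}$; and since $r_l$ commutes with $M_{ab}$ whenever $l\notin\{a,b\}$ and anticommutes whenever $l\in\{a,b\}$, all reflections in $\Gamma_A\Gamma_B+\Gamma_B\Gamma_A$ may be slid to the far right. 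The anticommutator then reduces to a bracket of polynomials in the $M_{ab}$ and $\mu_i r_i$, multiplied on the right by the common factor $R_{(A\cup B)\setminus\{k\}}$.

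The resulting bracket then decomposes according to the block-membership of its indices. Contributions where both indices of every $M$ lie in $A\setminus\{k\}$ (or both in $B\setminus\{k\}$), bundled together with the $\tfrac{|A|-1}{2}$, $\tfrac{|B|-1}{2}$ constants and the $\mu_i r_i$ terms with $i\neq k$, collapse into $2\,\Gamma_{A\setminus\{k\}}\Gamma_{B\setminus\{k\}}$; this product is well defined as a single operator because its two factors commute by Lemma~\ref{Lemma-Comm} (disjoint supports). The genuinely mixed sector is dominated by brackets of $M_{ik}$ (from $\mathcal{M}_A$) against $M_{kj}$ (from $\mathcal{M}_B$), whose evaluation rests on the identity already exploited in the proof for \eqref{Comm-1}:
\begin{align*}
\{M_{ik}r_ir_k,\,M_{kj}r_kr_j\}=(1+2\mu_k r_k)\,M_{ij}r_ir_j,\qquad i\neq j.
\end{align*}
Summed over $i\in A\setminus\{k\}$ and $j\in B\setminus\{k\}$, the ``$M_{ij}r_ir_j$'' piece supplies exactly the cross-pairs needed to promote $\mathcal{M}_{A\setminus\{k\}}+\mathcal{M}_{B\setminus\{k\}}$ into $\mathcal{M}_{(A\cup B)\setminus\{k\}}$, thereby producing $\Gamma_{(A\cup B)\setminus\{k\}}$. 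The ``$2\mu_k r_k M_{ij}r_ir_j$'' piece, together with the cross-brackets of $\mu_k r_k$ against the various $M_{ab}$'s and $\mu_i r_i$'s, reassembles into $2\mu_k\,\Gamma_{A\cup B}=2\,\Gamma_{\{k\}}\Gamma_{A\cup B}$ (using $\Gamma_{\{k\}}=\mu_k$).

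The principal obstacle is the bookkeeping: tracking the signs generated by sliding reflections past the $M_{ab}$'s, and verifying that the numerous constants, $\mu r$-terms, and cross-contributions reorganize with the correct multiplicities so that the three right-hand side operators emerge cleanly, with nothing left over. The $|B|=2$ instance in the preceding lemma already exhibits the essential mechanism in miniature, so the general calculation, although lengthy, is essentially mechanical.
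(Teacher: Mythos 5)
Your route is genuinely different from the paper's. The paper proves \eqref{Comm-2} by induction on $|B|$: the base cases $|B|=1$ and $|B|=2$ are handled by \eqref{Obs} and by the already-established relation \eqref{Comm-1}, and the inductive step expresses $\Gamma_{B\cup\{x\}}$ as $\frac{1}{2\mu_y}\bigl(\{\Gamma_{B},\Gamma_{\{x,y\}}\}-\Gamma_{(B\cup\{x\})\setminus\{y\}}-2\mu_x\Gamma_{B\setminus\{y\}}\bigr)$ for a suitable $y\in B$ with $y\notin A$, then exploits $[\Gamma_{A},\Gamma_{\{x,y\}}]=0$ and the identity $\{A,\{B,C\}\}=\{\{A,B\},C\}$ (valid when $[A,C]=0$) to reduce every resulting anticommutator to one covered by the induction hypothesis. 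You instead propose to redo, for arbitrary $|B|$, the direct computation that the paper performs only for $|B|=2$, starting from the realization \eqref{Gamma-A-Explicit}. Both are legitimate: the paper's induction buys a short argument in which all sign-tracking is confined to the rank-one computation \eqref{Comm-1}, while your direct attack is self-contained and makes transparent where each of the three right-hand terms of \eqref{Comm-2} originates (the cross-pairs $M_{ij}$ promoting $\mathcal{M}_{A\setminus\{k\}}+\mathcal{M}_{B\setminus\{k\}}$ to $\mathcal{M}_{(A\cup B)\setminus\{k\}}$, the $2\mu_k r_k$ pieces, and the commuting same-block products).

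The caveat is that your write-up stops precisely where the work begins: the ``bookkeeping'' you defer is the entire substance of a direct proof, and two points in your sector decomposition are stated too optimistically. First, after sliding the reflections to the right one is left not with a plain anticommutator of the prefactors but with a twisted one (each $M_{ab}$ acquires $(-1)^{|\{a,b\}\cap A|}$ when passed through $\prod_{i\in A}r_i$); you must verify that all mixed terms other than the $M_{ik}$-against-$M_{kj}$ brackets and those involving $\mu_k r_k$ cancel --- they do, because the operators involved commute and the twist converts their symmetrized product into a commutator, but this needs to be said. Second, the constants do not split block by block as claimed: $\Gamma_{A\setminus\{k\}}$ carries $\frac{|A|-2}{2}$ rather than $\frac{|A|-1}{2}$, and the reflection $r_k$ hidden in $\prod_{i\in A\cup B}r_i$ on the right-hand side must be matched against the $\mu_k r_k$ contributions on the left, so the leftover halves and the $r_k$-dependence must be reallocated across all three target operators before the identity closes. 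None of this invalidates the plan --- the computation does go through --- but as written the proposal is an outline rather than a proof, whereas the paper's induction avoids this accounting altogether.
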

\begin{proof}
By induction on $|B|$. When $|B|=1$, the result follows from \eqref{Obs}. When $|B|=2$, the result holds by Lemma 2. Suppose that the result holds for $|B|=n$ and consider the set $\widetilde{B}=B\cup \{x\}$ with $x\notin A$, $A\cap B=\{k\}$ and $|B|=n$. Let $y\in B$ with $y\notin A$, it follows from Lemma 2 that
\begin{align}
\label{inter1}
\Gamma_{\widetilde{B}}=\frac{1}{2\mu_y}\left(\{\Gamma_{B}, \Gamma_{\{x,y\}}\}-\Gamma_{\widetilde{B}\setminus \{y\}}-2\mu_x \Gamma_{B\setminus \{y\}}\right).
\end{align}
Using the fact that $[\Gamma_{A}, \Gamma_{\{x,y\}}]=0$ and the identity $\{A,\{B,C\}\}=\{\{A,B\},C\}$ which holds when $[A,C]=0$, one can write
\begin{align}
\label{inter2}
\{\Gamma_{A}, \Gamma_{\widetilde{B}}\}=\frac{1}{2\mu_y}\Big(\{\{\Gamma_{A}, \Gamma_{B}\}, \Gamma_{\{x,y\}}\}-\{\Gamma_{A},\Gamma_{\widetilde{B}\setminus \{y\}}\}-2\mu_x \{\Gamma_{A}, \Gamma_{B\setminus \{y\}}\}\Big).
\end{align}
Each anticommutator appearing in the above expression can be evaluated using the induction hypothesis. Upon expanding the result, one directly finds \eqref{Comm-2} with $B$ replaced by $\widetilde{B}$. This completes the induction.
\end{proof}
\begin{lemma}
For $A,B\subset [n]$ such that $A\cap B\neq \emptyset$, one has
\begin{align}
\label{Comm-3}
\{ \Gamma_{A}, \Gamma_{B}\}=\Gamma_{(A\cup B)\setminus (A\cap B)}+2\,\Gamma_{A\cap B}\Gamma_{A\cup B}+2\, \Gamma_{A\setminus( A\cap B)}\Gamma_{B\setminus(A\cap B)}.
\end{align}
\end{lemma}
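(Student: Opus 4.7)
The plan is to proceed by induction on $m := |A\cap B|$, mirroring the inductive structure of Lemma 4's proof. The base case $m=1$ coincides with Lemma 4 itself. (One may also verify that the stated formula correctly reduces to $[\Gamma_A,\Gamma_B]=0$ when $A\cap B=\emptyset$, recovering Lemma 3 via $\Gamma_\emptyset=-1/2$.) For the inductive step, I would fix $m\geq 2$, assume \eqref{Comm-3} for intersection sizes strictly less than $m$, and pick two distinct elements $k,z\in C:=A\cap B$, which is possible since $m\geq 2$.

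The key observation is that $\{k,z\}\subseteq A$, so Lemma 3 gives $[\Gamma_A,\Gamma_{\{k,z\}}]=0$. This commutation will play, on the $A$-side, the role that $[\Gamma_A,\Gamma_{\{x,y\}}]=0$ (with $\{x,y\}$ outside $A$) plays in the proof of Lemma 4. I would then apply Lemma 2 to $B\setminus\{k\}$ and the pair $\{k,z\}$ (whose intersection is the singleton $\{z\}$) and solve the result for $\Gamma_B=\Gamma_{(B\setminus\{k\})\cup\{k\}}$, yielding an analogue of \eqref{inter1},
\begin{align*}
\Gamma_B = \frac{1}{2\mu_z}\bigl(\{\Gamma_{B\setminus\{k\}},\Gamma_{\{k,z\}}\} - \Gamma_{B\setminus\{z\}} - 2\mu_k\Gamma_{B\setminus\{k,z\}}\bigr).
\end{align*}
Taking the anticommutator with $\Gamma_A$ and invoking the identity $\{a,\{b,c\}\}=\{\{a,b\},c\}$, valid when $[a,c]=0$, applied with $a=\Gamma_A$ and $c=\Gamma_{\{k,z\}}$, I would obtain
\begin{align*}
\{\Gamma_A,\Gamma_B\} = \frac{1}{2\mu_z}\bigl(\{\{\Gamma_A,\Gamma_{B\setminus\{k\}}\},\Gamma_{\{k,z\}}\} - \{\Gamma_A,\Gamma_{B\setminus\{z\}}\} - 2\mu_k\{\Gamma_A,\Gamma_{B\setminus\{k,z\}}\}\bigr).
\end{align*}
The three anticommutators involving $\Gamma_A$ on the right now have intersections with $A$ of sizes $m-1$, $m-1$, and $m-2$, so all fall under the inductive hypothesis (the last reducing to Lemma 3 when $m=2$). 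The outer anticommutator with $\Gamma_{\{k,z\}}$ in the first term I would unfold via the Leibniz-type identity $\{PQ,R\}=P\{Q,R\}-[P,R]Q$; the resulting basic anticommutators $\{\Gamma_S,\Gamma_{\{k,z\}}\}$ are handled by Lemma 2 when $|S\cap\{k,z\}|=1$ and by Lemma 3 otherwise.

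The main obstacle will be the combinatorial bookkeeping. The expansion produces many terms carrying coefficients in $\mu_k$ and $\mu_z$, and all such explicit dependence must collapse, since the right-hand side of \eqref{Comm-3} is manifestly independent of the chosen pair $k,z\in C$. I would organize the terms by classifying each intermediate subset according to its intersections with $C$ and with $\{k,z\}$; as consistency checks along the way, the final expression must be symmetric under $A\leftrightarrow B$ and must degenerate correctly to Lemma 4 when $|C|$ is artificially set to $1$. The structural parallel with Lemma 4's proof strongly suggests that the cancellations go through, but carrying them out explicitly is where the real tedium lies.
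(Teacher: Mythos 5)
Your proposal is correct and follows essentially the same route as the paper: induction on $|A\cap B|$ with Lemma 4 as the base case, the Lemma-2 identity \eqref{inter1} used to express $\Gamma_{B}$ through sets of strictly smaller intersection with $A$, and the commutation $[\Gamma_{A},\Gamma_{\{k,z\}}]=0$ (both elements lying in $A$) to push the anticommutator with $\Gamma_{A}$ inside as in \eqref{inter2}. The only cosmetic difference is that the paper runs the induction bottom-up, adjoining an element $x\in A$ to $B$, whereas you strip $k$ from $B$ and invoke a strong induction hypothesis; after relabeling the two arguments coincide, including the deferred combinatorial expansion.
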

\begin{proof}
By induction on $|A\cap B|$. When $|A\cap B|=1$, the result holds by virtue of Lemma 4. Suppose that \eqref{Comm-3} holds at level  $m$ and consider the set $\widetilde{B}=B \cup \{x\}$ with $x\in A$ and $|A\cap B|=m$. Consider $y\in A\cap B$; one can write $\Gamma_{\widetilde{B}}$ as in \eqref{inter1}. Since $x,y\in A$, $\Gamma_{A}$ commutes with $\Gamma_{\{x,y\}}$ and one can write $\{\Gamma_{A},\Gamma_{\widetilde{B}}\}$ as in \eqref{inter2}, the only difference with \eqref{inter2} being that $x,y\in A$. Upon applying the induction hypothesis, one finds \eqref{Comm-3} with $B$ replaced by $\widetilde{B}$ after a straightforward calculation.
\end{proof}
The results of the five preceding Lemmas can now be combined to give the complete set of relations between the symmetries $\Gamma_{A}$.
\begin{proposition}
The symmetries $\Gamma_{A}$ with $A\subset [n]$ of the Dirac--Dunkl and spherical Dirac--Dunkl operators satisfy the anticommutation relations
\begin{align}
\label{BI-Relations}
\{\Gamma_{A},\Gamma_{B}\}=\Gamma_{(A\cup B)\setminus (A\cap B)}+2\,\Gamma_{A\cap B}\Gamma_{A\cup B}+2\,\Gamma_{A\setminus (A\cap B)}\Gamma_{B\setminus(A\cap B)}.
\end{align}
\begin{proof}
The proposition follows from the combination of Lemmas 2--5.
\end{proof}
\end{proposition}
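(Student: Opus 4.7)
The plan is to establish the identity \eqref{BI-Relations} by exhausting the possible set-theoretic relationships between $A$ and $B$ and invoking the appropriate preceding lemma in each case. The heavy lifting has already been carried out in Lemmas 2--5; what remains is to check that the single compact formula in the statement subsumes all of them. In particular, Lemma 5 already delivers \eqref{BI-Relations} in its most intricate setting, so the real bookkeeping consists of showing that the degenerate cases handled by Lemma 3 are also consistent with \eqref{BI-Relations}, once the boundary values $\Gamma_{\emptyset}=-1/2$ from \eqref{Obs} are substituted in.

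First, if $A\cap B\neq\emptyset$ and neither set contains the other, Lemma 5 directly gives \eqref{BI-Relations}, and there is nothing further to do.

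Next, I would verify the cases covered by Lemma 3. When $A\cap B=\emptyset$, one has $\Gamma_{A\cap B}=-1/2$, $(A\cup B)\setminus(A\cap B)=A\cup B$, $A\setminus(A\cap B)=A$ and $B\setminus(A\cap B)=B$, so the right-hand side of \eqref{BI-Relations} collapses to $\Gamma_{A\cup B}-\Gamma_{A\cup B}+2\Gamma_{A}\Gamma_{B}=2\Gamma_{A}\Gamma_{B}$, which matches $\{\Gamma_{A},\Gamma_{B}\}=2\Gamma_{A}\Gamma_{B}$ by Lemma 3. When $A\subset B$, the substitutions $A\cap B=A$, $A\cup B=B$, $A\setminus(A\cap B)=\emptyset$ reduce the right-hand side to $\Gamma_{B\setminus A}+2\Gamma_{A}\Gamma_{B}-\Gamma_{B\setminus A}=2\Gamma_{A}\Gamma_{B}$, again in agreement with Lemma 3, and the case $B\subset A$ follows by the manifest symmetry of the formula in $A,B$.

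No serious obstacle is anticipated; the content is really a consistency check that the unified anticommutation law reproduces the trivial commutators via the cancellations induced by $\Gamma_{\emptyset}=-1/2$. The one subtlety worth flagging is that the induction driving Lemma 5 implicitly needs an element $x\in A\setminus B$ (or symmetrically $x\in B\setminus A$) in order to grow $B$, so when $A\subset B$ or $B\subset A$ Lemma 5 does not, on its own, cover the situation; this is precisely why the containment cases must be handled separately through Lemma 3 in the argument above.
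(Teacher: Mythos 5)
Your proof is correct and follows the same route as the paper, which simply combines Lemmas 2--5: Lemma 5 handles $A\cap B\neq\emptyset$, and the remaining degenerate cases reduce to Lemma 3 together with the substitution $\Gamma_{\emptyset}=-1/2$, exactly as you check. Your closing caveat is over-cautious — since a nonempty containment still has $A\cap B\neq\emptyset$, Lemma 5 as stated already covers it (its induction on $|A\cap B|$ goes through there too) — but your separate verification of that case via Lemma 3 is equally valid and changes nothing.
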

This result motivates the following definition for the abstract symmetry algebra of the spherical Dirac--Dunkl operator $\Gamma_{[n]}$, which shall be denoted by  $\mathcal{A}_{n}$.
\begin{definition}
Let $n\geq 3$ and  $A\subset [n]$. Call $\mathcal{A}_{n}$ the abstract associative algebra with generators $\Gamma_{A}$ and defining relations \eqref{BI-Relations}.
\end{definition}
\begin{remark}
It is visible from Definition 1 that one has $\mathcal{A}_{n}\subset\mathcal{A}_{n+1}$. That is $\mathcal{A}_{n}$ is a subalgebra of $\mathcal{A}_{n+1}$.
\end{remark}

We now show that operators of the form $\Gamma_{\{i,j\}}$ are sufficient to generate the entire invariance algebra .
\begin{lemma}
\label{Gen-Set}
The set of operators $\Gamma_{B}$ for which $B$ is a 2-subset of $[n]$ constitutes a generating set for  $\mathcal{A}_{n}$.
\end{lemma}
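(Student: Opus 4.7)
The plan is to prove by induction on $|A|$ that every generator $\Gamma_A$ of $\mathcal{A}_n$ lies in the subalgebra generated by $\{\Gamma_{\{i,j\}} : \{i,j\}\subset [n]\}$. The base cases $|A|\le 2$ are immediate: for $|A|=0$ or $|A|=1$ the operator $\Gamma_A$ is a scalar by \eqref{Obs}, and for $|A|=2$ it is one of the prescribed generators.

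For the inductive step, suppose the claim holds for every subset of cardinality less than $m$, and let $C\subset[n]$ with $|C|=m\geq 3$. I would fix two distinct elements $j,k\in C$ and set $A=C\setminus\{k\}$, so that $j\in A$, $k\notin A$, $|A|=m-1$, and $A\cup\{k\}=C$. With this choice the hypotheses of Lemma 2 are met, and the identity \eqref{Comm-1} (after substituting $\Gamma_{\{j\}}=\mu_j$ and $\Gamma_{\{k\}}=\mu_k$) reads
\begin{align*}
\{\Gamma_A,\Gamma_{\{j,k\}}\} = \Gamma_{C\setminus\{j\}} + 2\mu_j\,\Gamma_C + 2\mu_k\,\Gamma_{A\setminus\{j\}}.
\end{align*}
Since $\mu_j>0$ by the standing assumption on the Dunkl parameters, the coefficient of $\Gamma_C$ is invertible and one may solve
\begin{align*}
\Gamma_C = \frac{1}{2\mu_j}\bigl(\{\Gamma_A,\Gamma_{\{j,k\}}\} - \Gamma_{C\setminus\{j\}} - 2\mu_k\,\Gamma_{A\setminus\{j\}}\bigr).
\end{align*}
Every index set appearing on the right---$A$ and $C\setminus\{j\}$ of size $m-1$, $\{j,k\}$ of size $2$, and $A\setminus\{j\}=C\setminus\{j,k\}$ of size $m-2$---has cardinality strictly less than $m$, so by the induction hypothesis each $\Gamma$ on the right belongs to the 2-subset subalgebra, and hence so does $\Gamma_C$.

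I do not anticipate any genuine obstacle. The essential mechanism is already packaged into Lemma 2, which trades an operator of rank $m$ for operators of rank $m-1$, $m-2$, and $2$; the inductive loop closes provided only that the divisor $\mu_j$ is nonzero, which is guaranteed by the positivity assumption on the Dunkl multiplicities. The one small verification is that the combinatorial hypotheses of Lemma 2 ($j\in A$, $k\notin A$) are satisfied by the particular choice $A=C\setminus\{k\}$, which holds by construction.
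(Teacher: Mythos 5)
Your proof is correct and follows essentially the same route as the paper: the paper also inverts the anticommutation relation \eqref{BI-Relations} (equivalently \eqref{Comm-1}) for a pair of subsets of sizes $m-1$ and $2$ overlapping in a single index, divides by twice the corresponding $\mu$-parameter, and recurses down to 2-subsets. The only cosmetic difference is that you organize the recursion as an explicit induction on cardinality and invoke Lemma 2 directly rather than the general relation \eqref{BI-Relations}, which amounts to the same identity in this case.
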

\begin{proof}
First observe that any generator $\Gamma_{A}$ with $|A|=m$  where $m>2$ can be written as a polynomial in generators $\Gamma_{A'}, \Gamma_{A''}$  with $|A'|=m-1$ and $|A''|=2$. This can be done (non uniquely) by choosing two sets $C$ and $D$ with $|C|=m-1$ and $|D|=2$ such that $C\cup D=A$ and $C\cap D=\{k\}$, and using the relation
\begin{align*}
\Gamma_{A}=\Gamma_{C\cup D}=\frac{1}{2\mu_k}\left(\{\Gamma_{C},\Gamma_{D}\}-\Gamma_{(C\cup D)\setminus (C\cap D)}-2\,\Gamma_{C\setminus (C\cap D)}\Gamma_{D\setminus(C\cap D)}\right).
\end{align*}
Applying this procedure recursively, one can write any generator $\Gamma_{A}$ with $|A|>2$ as a polynomial in generators $\Gamma_{B}$ with $|B|=2$.
 \end{proof}
\begin{remark}
The algebra $\mathcal{A}_{n}$ can be considered as a higher rank generalization of the Bannai--Ito algebra \eqref{BI}. To see this, consider the case $n=3$. The symmetry algebra $\mathcal{A}_3$ of the spherical Dirac--Dunkl operator in three dimensions $\Gamma_{[3]}$ is generated by the operators $K_3=\Gamma_{\{1,2\}}$,  $K_1=\Gamma_{\{2,3\}}$ and  $K_2=\Gamma_{\{1,3\}}$ with relations
\begin{align*}
\{K_1,K_2\}=K_3+\omega_3,\qquad \{K_2,K_3\}=K_1+\omega_1,\qquad \{K_3,K_1\}=K_2+\omega_2,
\end{align*}
where $\omega_1$, $\omega_2$, $\omega_3$ are given by
\begin{align*}
\omega_1=2 \mu_1 \Gamma_{[3]}+2\mu_2\mu_3,\quad \omega_2=2\mu_2\Gamma_{[3]}+2\mu_1\mu_3,\quad \omega_3=2\mu_3 \Gamma_{[3]}+2\mu_1\mu_2.
\end{align*}
It is clear that $\omega_1$, $\omega_2$ and $\omega_3$ are central elements. Moreover, if one considers the realization of the algebra on the space of Dunkl monogenics $\mathcal{M}_{k}(\mathbb{R}^3; V)$, one can use the eigenvalue equation \eqref{Dunkl-Dirac-Eq-S} to obtain $\mathbb{R}$-valued $\omega_i$'s. For a detailed analysis of the $n=3$ case, the reader can consult \cite{2015_DeBie&Genest&Vinet_ArXiv_1501.03108}.
\end{remark}
The algebra $\mathcal{A}_{n}$ has an important Abelian subalgebra $\mathcal{Y}_{n}$. This subalgebra is generated by the $(n-2)$ pairwise commuting elements
\begin{align}
\label{Abelian-Sub}
\mathcal{Y}_{n}=\langle \Gamma_{[2]}, \Gamma_{[3]},\ldots, \Gamma_{[n-1]}\rangle.
\end{align}
Given that $[n-1]\subset [n]$, the commutativity of $\mathcal{Y}_{n}$ follows directly from Lemma \ref{Lemma-Comm}. Note that $\Gamma_{\emptyset}$, $\Gamma_{[1]}$ and $\Gamma_{[n]}$ are not included in the generating set of $\mathcal{Y}_{n}$, as they are central in $\mathcal{A}_{n}$. It is manifest from the defining relations \eqref{BI-Relations} that $\mathcal{Y}_{n}$ is the largest possible Abelian subalgebra of $\mathcal{A}_{n}$. Thus, borrowing from the terminology of Lie algebras, one can say that the algebra $\mathcal{A}_{n}$ has rank $(n-2)$. Observe that one can easily define another maximal Abelian subalgebra by applying a permutation of $S_{n}$ on $\mathcal{Y}_n$; this is done by taking $\pi \mathcal{Y}_{n}=\langle \Gamma_{\pi [2]}, \Gamma_{\pi[3]},\ldots, \Gamma_{\pi[n-1]}\rangle $ for $\pi \in S_{n}$. For example, applying the permutation $\pi=(123\cdots n)$ on $\mathcal{Y}_n$ gives an algebra that we shall denote by $\mathcal{Z}_{n}$
\begin{align}
\label{Abelian-Sub-2}
\mathcal{Z}_{n}=\langle \Gamma_{\{2,3\}}, \Gamma_{\{2,3,4\}},\ldots, \Gamma_{\{2,3,\ldots,n\}}\rangle.
\end{align}
\begin{remark}
While $\mathcal{A}_{3}$ coincides with a degenerate double affine Hecke algebra of type $(\check{C}_1, C_1)$ \cite{Genest-TMP-2015}, it appears that $\mathcal{A}_{n}$ does not coincide with the higher rank version of this degenerate double affine Hecke algebra of type $(\check{C}_{n-2}, C_{n-2})$, which has been investigated in \cite{2007_Groenevelt_TransGroups_12_77, 2009_Groenevelt_SelMath_15_377}.
\end{remark}
We shall now exhibit two non-trivial central elements of the algebra $\mathcal{A}_{n}$. These two elements, which shall be referred to as Casimir operators, are expressed as quadratic combinations of the generators.
\begin{lemma}
Let $Q_{[n]}$ be the element of $\mathcal{A}_{n}$ defined as
\begin{align}
\label{Cas1}
Q_{[n]}=\smashoperator{\sum_{\{i,j\}\subset [n]}}\Gamma_{\{i,j\}}^{2}.
\end{align}
Then for $A\subset [n]$ one has  $[Q_{[n]}, \Gamma_{A}]=0$.
\end{lemma}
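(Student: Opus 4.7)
The plan is to reduce the claim to the 2-subset generators. By Lemma \ref{Gen-Set}, the algebra $\mathcal{A}_n$ is generated by the operators $\Gamma_{\{i,j\}}$, so it suffices to verify $[Q_{[n]}, \Gamma_{\{a,b\}}] = 0$ for an arbitrary 2-subset $\{a,b\}\subset[n]$; commutation with every $\Gamma_A$ then follows because any $\Gamma_A$ is a polynomial in the 2-subset generators. Writing
\[
[Q_{[n]}, \Gamma_{\{a,b\}}] = \smashoperator{\sum_{\{i,j\}\subset[n]}} [\Gamma_{\{i,j\}}^{2}, \Gamma_{\{a,b\}}],
\]
I would partition the sum according to the size of $\{i,j\}\cap\{a,b\}$. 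The term with $\{i,j\}=\{a,b\}$ vanishes trivially, and the terms with $\{i,j\}\cap\{a,b\}=\emptyset$ vanish by Lemma \ref{Lemma-Comm}. The only surviving contributions come from the 2-subsets of the form $\{a,c\}$ and $\{b,c\}$ with $c\in[n]\setminus\{a,b\}$.

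The cancellation between these remaining contributions is handled by the elementary identity $[X^{2},Y]=[X,\{X,Y\}]$. Fixing $c\in[n]\setminus\{a,b\}$, Lemma 4 yields
\[
\{\Gamma_{\{a,c\}},\Gamma_{\{a,b\}}\} = \Gamma_{\{b,c\}} + 2\mu_a\,\Gamma_{\{a,b,c\}} + 2\mu_b\mu_c,
\]
together with the analogous expression obtained by interchanging $a$ and $b$. Commuting $\Gamma_{\{a,c\}}$ with the right-hand side, the scalar term $2\mu_b\mu_c$ drops out and the $\Gamma_{\{a,b,c\}}$ term is annihilated because $\{a,c\}\subset\{a,b,c\}$ and Lemma \ref{Lemma-Comm} applies. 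This reduces $[\Gamma_{\{a,c\}}^{2},\Gamma_{\{a,b\}}]$ to $[\Gamma_{\{a,c\}},\Gamma_{\{b,c\}}]$, and the symmetric computation reduces $[\Gamma_{\{b,c\}}^{2},\Gamma_{\{a,b\}}]$ to $[\Gamma_{\{b,c\}},\Gamma_{\{a,c\}}]$. These two are exact negatives of each other, so they cancel.

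Summing these pairwise cancellations over $c\in[n]\setminus\{a,b\}$ gives $[Q_{[n]},\Gamma_{\{a,b\}}]=0$, which completes the argument. The proof is entirely algebraic and relies only on the defining relations of $\mathcal{A}_n$, together with the commutation and anticommutation patterns established in Lemmas \ref{Lemma-Comm} and 4. I expect no serious obstacle: the one conceptual step is the recognition that the terms $\Gamma_{\{a,c\}}^{2}$ and $\Gamma_{\{b,c\}}^{2}$ in $Q_{[n]}$ must be grouped together in order for the cancellation to manifest; once this pairing is identified, the remaining manipulations are routine.
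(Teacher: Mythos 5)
Your proposal is correct and follows essentially the same route as the paper: reduce to the 2-subset generators via Lemma \ref{Gen-Set}, apply the identity $[X^2,Y]=[X,\{X,Y\}]$ together with the structure relations \eqref{BI-Relations}, and observe that the surviving commutators from the pairs $\{a,c\}$ and $\{b,c\}$ cancel pairwise. Your write-up simply makes explicit the bookkeeping (the trivial and disjoint cases, and the disappearance of the scalar and $\Gamma_{\{a,b,c\}}$ terms) that the paper leaves implicit.
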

\begin{proof}
In view of Lemma \ref{Gen-Set}, it suffices to prove that $Q_{[n]}$ commutes with every generator of the from $\Gamma_{\{i,k\}}$. Using the commutator identity $[A^2,C]=[A,\{A,C\}]$ and the relations \eqref{BI-Relations}, one finds
\begin{align*}
[Q_{[n]}, \Gamma_{\{i,k\}}]&=\smashoperator{\sum_{\substack{j\in [n]\\ j\neq k\neq i}}}\,\big([\Gamma_{\{i,j\}}^2, \Gamma_{\{i,k\}}]+[\Gamma_{\{j,k\}}^{2},\Gamma_{\{i,k\}}]\big)
\\
&=\smashoperator{\sum_{\substack{j\in [n]\\ j\neq k\neq i}}}\,\big([\Gamma_{\{i,j\}}, \Gamma_{\{j,k\}}]+[\Gamma_{\{j,k\}},\Gamma_{\{i,j\}}]\big)=0.
\end{align*}
\end{proof}
Since $\mathcal{A}_{n}\subset \mathcal{A}_{n+1}$, the previous can be equivalently formulated as follows.
\begin{lemma}
Let $Q_{A}$ be the element defined as 
\begin{align}
\label{QA}
Q_{A}=\smashoperator{\sum_{\{i,j\}\subset A}}\Gamma_{\{i,j\}}^2,
\end{align}
one then has $[Q_{A}, \Gamma_{B}]=0$, for all $B\subset A$.
\end{lemma}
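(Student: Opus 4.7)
My plan is to deduce this as a direct consequence of the previous lemma together with the observation explicitly invoked just above the statement, namely $\mathcal{A}_{n}\subset \mathcal{A}_{n+1}$. The key point is that the defining relations \eqref{BI-Relations} are ``local'' in the sense that the right-hand side only involves generators indexed by subsets of $A\cup B$. Consequently, for any fixed $A\subset[n]$, the subset $\{\Gamma_{C} : C\subseteq A\}$ generates a subalgebra of $\mathcal{A}_{n}$ whose relations close and coincide, under the relabeling of $A$ as $\{1,\ldots,|A|\}$, with the defining relations of $\mathcal{A}_{|A|}$.

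First I would make this embedding precise: fix a bijection $\sigma:A\to [|A|]$ and verify that the map $\Gamma_{C}\mapsto \Gamma_{\sigma(C)}$ extends to an algebra isomorphism from the subalgebra $\langle \Gamma_{C} : C\subseteq A\rangle\subset \mathcal{A}_{n}$ onto $\mathcal{A}_{|A|}$. This is immediate from \eqref{BI-Relations}, since both $(A'\cup B')\setminus(A'\cap B')$, $A'\cap B'$, $A'\cup B'$, $A'\setminus(A'\cap B')$ and $B'\setminus(A'\cap B')$ all remain subsets of $A$ whenever $A',B'\subseteq A$.

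Next I would observe that under this identification the element $Q_{A}=\sum_{\{i,j\}\subset A}\Gamma_{\{i,j\}}^{2}$ is mapped precisely to the Casimir element $Q_{[|A|]}$ of $\mathcal{A}_{|A|}$ appearing in the preceding lemma. Applying that lemma within $\mathcal{A}_{|A|}$, one obtains $[Q_{[|A|]},\Gamma_{A''}]=0$ for every $A''\subseteq[|A|]$, which pulls back to $[Q_{A},\Gamma_{B}]=0$ for every $B\subseteq A$.

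I do not anticipate any genuine obstacle here; the only thing that requires care is checking that the subalgebra generated by $\{\Gamma_{C}:C\subseteq A\}$ really is closed, so that the previous lemma applies verbatim. Alternatively, one could bypass the embedding entirely and simply repeat the two-line computation of the previous lemma with $[n]$ replaced by $A$, using Lemma \ref{Gen-Set} (applied to the 2-subsets of $A$) together with the identity $[X^{2},Y]=[X,\{X,Y\}]$ and the cancellation of the pair $[\Gamma_{\{i,j\}},\Gamma_{\{j,k\}}]+[\Gamma_{\{j,k\}},\Gamma_{\{i,j\}}]=0$. Either route yields the claim.
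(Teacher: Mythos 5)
Your proposal is correct and matches the paper's own treatment: the paper gives no separate proof, stating only that ``since $\mathcal{A}_{n}\subset \mathcal{A}_{n+1}$, the previous can be equivalently formulated'' as this lemma, which is exactly the relabeling/embedding reduction you spell out (your observation that the right-hand side of \eqref{BI-Relations} only involves subsets of $A\cup B$, so that $\langle \Gamma_{C}:C\subseteq A\rangle$ closes and is a copy of $\mathcal{A}_{|A|}$, is the content the paper leaves implicit). Your alternative route of redoing the two-line computation with $[n]$ replaced by $A$ is equally valid and equally short.
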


In the realization \eqref{Gamma-A-Explicit}, the Casimir operator $Q_{A}$ is related to the spherical Dunkl-Dirac operator $\Gamma_{A}$. The result is as follows.
\begin{lemma}
\label{Lem-Relation}
For $A\subset [n]$, one has
\begin{align}
\label{Relation}
Q_{A}=\Gamma_{A}^{2}+(|A|-2)\sum_{i\in A}\mu_i^2-\frac{1}{8}(|A|-1)(|A|-2).
\end{align}
\end{lemma}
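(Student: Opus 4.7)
The plan is to verify this as an operator identity in the realization \eqref{Gamma-A-Explicit}. Since $R_A:=\prod_{i\in A}r_i$ anticommutes with both $\underline{x}_A$ and $\underline{D}_A$, it commutes with $\underline{S}_A=\tfrac{1}{2}([\underline{x}_A,\underline{D}_A]-1)$, and $R_A^2=1$, giving $\Gamma_A^2=\underline{S}_A^2$. By \eqref{Gamma-A-Explicit} one has the decomposition $\underline{S}_A=N_A+c_A+K_A$ with $N_A=\sum_{\{i,j\}\subset A}M_{ij}$, $c_A=(|A|-1)/2$, $K_A=\sum_{k\in A}\mu_k r_k$, and likewise $\underline{S}_{\{i,j\}}=M_{ij}+\tfrac{1}{2}+\mu_ir_i+\mu_jr_j$ for each $2$-subset. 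The lemma thus becomes an identity among explicit operators, so I would expand both $\underline{S}_A^2$ and $\sum_{\{i,j\}\subset A}\underline{S}_{\{i,j\}}^2$ and compare term by term.

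The bookkeeping reveals that the contributions from $K_A^2=\sum_k\mu_k^2+2\sum_{\{i,j\}}\mu_i\mu_jr_ir_j$, from the linear pieces in $K_A$, and from the constants combine to produce precisely the corrective terms $(|A|-2)\sum_A\mu_i^2$ and $-\tfrac{1}{8}(|A|-1)(|A|-2)$ appearing in the lemma. The crucial simplification here is that $\{M_{ij},r_i\}=\{M_{ij},r_j\}=0$ while $r_k$ commutes with $M_{ij}$ for $k\notin\{i,j\}$, so $\{N_A,K_A\}=2\sum_{\{i,j\}\subset A}\sum_{k\in A\setminus\{i,j\}}\mu_kM_{ij}r_k$. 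After these cancellations, the identity collapses to the single relation
\[
N_A^2 \;=\; \sum_{\{i,j\}\subset A}M_{ij}^2 \;-\; (|A|-2)\,N_A \;-\; 2\!\sum_{\{i,j\}\subset A}\sum_{k\in A\setminus\{i,j\}}\mu_k\,M_{ij}\,r_k.
\]

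To establish this last identity, I would expand $N_A^2$ and group the off-diagonal products $M_{ij}M_{kl}$ according to the size of $\{i,j\}\cap\{k,l\}$. The overlap-one terms are organized by triples $\{i,j,k\}\subset A$; within each triple the three relevant anticommutators satisfy $\{M_{ij},M_{jk}\}=-(1+2\mu_jr_j)M_{ik}$ (and cyclic permutations), which is derived exactly as in the proof of Lemma 2 via the Dunkl commutator $[L_{ij},L_{jk}]=(1+2\mu_jr_j)L_{ik}$ with $L_{ij}=x_iT_j-x_jT_i$, together with $(e_ie_j)(e_je_k)=-e_ie_k$. Summed over all triples, these contribute precisely $-(|A|-2)N_A-2\sum\mu_kM_{ij}r_k$. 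The overlap-zero terms are organized by 4-subsets $\{i,j,k,l\}\subset A$; within each 4-subset they sum to $2(M_{ij}M_{kl}+M_{ik}M_{jl}+M_{il}M_{jk})=0$ by a Pl\"ucker-type identity which reduces to the polynomial identity $L_{ij}L_{kl}-L_{ik}L_{jl}+L_{il}L_{jk}=0$ among freely commuting $x$'s and $T$'s of four distinct indices. The main obstacle will be the combinatorial bookkeeping: one must verify that each 2-subset is counted with multiplicity $|A|-2$ across triples, and that each pair (2-subset, outside index) arises exactly once, so that the triple sum reproduces the right-hand side precisely.
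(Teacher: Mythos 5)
Your proposal is correct and follows essentially the same route as the paper: both proofs expand $\Gamma_A^2=\underline{S}_A^2$ and $\sum_{\{i,j\}\subset A}\Gamma_{\{i,j\}}^2$ directly in the realization \eqref{Gamma-A-Explicit} and reduce the claim to the triple-wise anticommutators $\{M_{ij},M_{ik}\}=-(1+2\mu_i r_i)M_{jk}$, with identical bookkeeping of the $\mu$-dependent and constant terms. The only point where you go beyond the paper's write-up is in explicitly justifying, via the Pl\"ucker identity for the $L_{ij}$, that the disjoint cross terms $M_{ij}M_{kl}$ cancel within each $4$-subset --- a step the paper leaves implicit in this lemma (a version of that observation is stated without proof only later, in the proof of the lemma on $C_A$).
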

\begin{proof}
The result is proved by a direct calculation using the explicit expression \eqref{Gamma-A-Explicit} for the operators $\Gamma_{A}$. On the one hand, one has 
\begin{align*}
\Gamma_{\{i,j\}}^2= M_{ij}^2+M_{ij}+(\mu_i r_i+\mu_j r_j+1/2)^2,
\end{align*}
which gives
\begin{align}
\label{A1}
Q_{A}=\smashoperator{\sum_{\{i, j\}\subset A}}\Big[M_{ij}^2+M_{ij}+2\mu_i\mu_jr_i r_j\Big]
+(|A|-1)\sum_{i\in A}(\mu_i r_i+\mu_i^2)+\textstyle{\frac{|A|(|A|-1)}{8}}.
\end{align} 
On the other hand, one has
\begin{multline*}
\Gamma_{A}^2=\smashoperator{\sum_{\{i,j\}\subset A}}\Big[M_{ij}^2+\smashoperator{\sum_{\substack{k\in A\\ k\neq i\neq j}}}\Big(\frac{1}{2}[\{M_{ij}, M_{ik}\}+\{M_{ij},M_{jk}\}]+2\mu_k M_{ij}r_{k}\Big)\Big]
\\[-.2cm]
+(|A|-1) \smashoperator{\sum_{\{i,j\}\subset A}}M_{ij}+\big[\textstyle{\frac{|A|-1}{2}}\displaystyle+\sum_{k\in A}\mu_k r_k\big]^2.
\end{multline*}
Since $\{M_{ij},M_{ik}\}=-(1+2\mu_i R_i) M_{jk}$, one obtains
\begin{align}
\label{A2}
\Gamma_{A}^2=\smashoperator{\sum_{\{i,j\}\subset A}}\Big(M_{ij}^2+M_{ij}\Big)+\Big(\textstyle{\frac{|A|-1}{2}}\displaystyle+\sum_{k\in A}\mu_k r_k\Big)^2.
\end{align}
Upon expanding the last term on the right-hand side of \eqref{A2} and comparing with \eqref{A1}, one readily finds \eqref{Relation}.
\end{proof}
When $n\geq 4$, one more Casimir operator of $\mathcal{A}_{n}$ can be identified.
\begin{lemma}
Let $C_{[n]}$ be defined as
\begin{align*}
C_{[n]}=\smashoperator{\sum_{\{i,j\}\subset [n]}}\,\Gamma_{\{i,j\}}\Gamma_{[n]\setminus \{i,j\}}-(n-2)\sum_{i\in [n]}\mu_i\, \Gamma_{[n]\setminus \{i\}}.
\end{align*}
For $A\subset [n]$ one has $[C_{[n]}, \Gamma_{A}]=0$.
\end{lemma}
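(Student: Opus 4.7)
The plan is to invoke Lemma \ref{Gen-Set}, which reduces the claim to showing $[C_{[n]},\Gamma_{\{k,\ell\}}]=0$ for an arbitrary 2-subset $\{k,\ell\}\subset[n]$. Fix such a pair. The first step is to truncate both sums defining $C_{[n]}$ using Lemma \ref{Lemma-Comm}: in the main sum, when $\{i,j\}\cap\{k,\ell\}$ is $\emptyset$ or equal to $\{k,\ell\}$, both factors $\Gamma_{\{i,j\}}$ and $\Gamma_{[n]\setminus\{i,j\}}$ commute with $\Gamma_{\{k,\ell\}}$ (by a disjointness or containment relation), so only pairs $\{i,j\}=\{p,m\}$ with $p\in\{k,\ell\}$ and $m\in[n]\setminus\{k,\ell\}$ survive. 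In the correction sum, only $i=k$ and $i=\ell$ contribute, since $\Gamma_{[n]\setminus\{i\}}$ commutes with $\Gamma_{\{k,\ell\}}$ whenever $i\notin\{k,\ell\}$.

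For each surviving pair $\{p,m\}$, I would apply the derivation identity $[AB,C]=A\{B,C\}-\{A,C\}B$ with $A=\Gamma_{\{p,m\}}$, $B=\Gamma_{[n]\setminus\{p,m\}}$, $C=\Gamma_{\{k,\ell\}}$; this is legitimate because $[A,B]=0$ by Lemma \ref{Lemma-Comm}. The two anticommutators are then expanded via \eqref{BI-Relations}: the relevant intersections $\{p,m\}\cap\{k,\ell\}=\{p\}$ and $([n]\setminus\{p,m\})\cap\{k,\ell\}=\{k,\ell\}\setminus\{p\}$ are each singletons, so each expansion produces three terms. Summing over $p\in\{k,\ell\}$ for fixed $m$, the ``$\mu$-free'' pieces cancel in pairs: the product $\Gamma_{\{k,m\}}\Gamma_{[n]\setminus\{\ell,m\}}$ arising from the $B$-anticommutator at $p=k$ is exactly the negative of the term $-\Gamma_{\{m,k\}}\Gamma_{[n]\setminus\{\ell,m\}}$ obtained from the $A$-anticommutator at $p=\ell$, and likewise for its partner.

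What remains is a sum of $\mu$-weighted products, some of which involve triple combinations such as $\Gamma_{\{k,m,\ell\}}\Gamma_{[n]\setminus\{k,m\}}$ whose factors share one element and therefore do not commute. Applying \eqref{BI-Relations} once more to symmetrize these problematic products, and then summing over $m\in[n]\setminus\{k,\ell\}$, should produce exactly $+(n-2)\,[\mu_k\Gamma_{[n]\setminus\{k\}}+\mu_\ell\Gamma_{[n]\setminus\{\ell\}},\Gamma_{\{k,\ell\}}]$, which precisely cancels the commutator of the correction term $-(n-2)\sum_i\mu_i\Gamma_{[n]\setminus\{i\}}$ with $\Gamma_{\{k,\ell\}}$. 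The factor $n-2$ emerges naturally as the count $|[n]\setminus\{k,\ell\}|$ of the inner sum over $m$, which is the combinatorial signature that the coefficient chosen in the statement of $C_{[n]}$ is correct. The main obstacle is this final collection step: the $\mu$-weighted residue contains several structurally distinct triple products that must be individually rewritten using \eqref{BI-Relations} and then matched with the expansion of the commutator with the correction term, a bookkeeping task that is mechanical but laborious.
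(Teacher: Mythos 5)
Your proposal is correct and follows essentially the same route as the paper: reduce to $2$-subsets via Lemma \ref{Gen-Set}, truncate both sums with Lemma \ref{Lemma-Comm}, expand the surviving commutators with $[AB,C]=A\{B,C\}-\{A,C\}B$ and \eqref{BI-Relations}, cancel the $\mu$-free terms pairwise, and match the remaining $\mu$-weighted residue against the correction term. The only (cosmetic) difference is in the final bookkeeping: the paper expands $(n-2)\mu_i[\Gamma_{[n]\setminus\{i\}},\Gamma_{\{i,k\}}]$ as a sum over the third index using the inverted anticommutation relation, whereas you collect the residue into $(n-2)$ copies of that commutator --- the same computation read in opposite directions.
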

\begin{proof}
In view of Lemma \ref{Gen-Set}, it suffices to prove that $C_{[n]}$ commutes with any generator $\Gamma_{A}$ with $|A|=2$. One has
\begin{multline*}
[C_{[n]}, \Gamma_{\{i,k\}}]=\smashoperator{\sum_{\substack{j\in [n]\\ j\neq i\neq k}}}[\Gamma_{\{i,j\}}\Gamma_{[n\setminus\{i,j\}]}, \Gamma_{\{i,k\}}]+[\Gamma_{\{j,k\}}\Gamma_{[n\setminus\{j,k\}]}, \Gamma_{\{i,k\}}]
\\
-(n-2)\mu_i\;[\Gamma_{[n]\setminus \{i\}}, \Gamma_{\{i,k\}}]-(n-2)\mu_{k}\;[\Gamma_{[n]\setminus \{k\}}, \Gamma_{\{i,k\}}].
\end{multline*}
Using the identity $[AB,C]=A\{B,C\}-\{A,C\}B$ and \eqref{BI-Relations}, one finds
\begin{multline*}
[C_{[n]}, \Gamma_{\{i,k\}}]=-(n-2)\mu_i\;[\Gamma_{[n]\setminus \{i\}}, \Gamma_{\{i,k\}}]-(n-2)\mu_{k}\;[\Gamma_{[n]\setminus \{k\}}, \Gamma_{\{i,k\}}]
\\
2\mu_i\smashoperator{\sum_{\substack{j\in [n]\\ j\neq i\neq k}}}\Gamma_{\{i,j\}}\Gamma_{[n]\setminus\{i,j,k\}}+\Gamma_{\{j,k\}}\Gamma_{[n]\setminus\{j\}}-\Gamma_{\{i,j,k\}}\Gamma_{[n]\setminus \{i,j\}}-\mu_j \Gamma_{[n]\setminus \{j,k\}}
\\
+2 \mu_k \smashoperator{\sum_{\substack{j\in [n]\\ j\neq i\neq k}}}\Gamma_{\{j,k\}}\Gamma_{[n]\setminus\{i,j,k\}}+\Gamma_{\{i,j\}}\Gamma_{[n]\setminus\{j\}}-\Gamma_{\{i,j,k\}}\Gamma_{[n]\setminus \{j,k\}}-\mu_j \Gamma_{[n]\setminus \{i,j\}}
\end{multline*}
Recalling that
\begin{align*}
\Gamma_{[n]\setminus \{i\}}=\frac{1}{2\mu_k}\Big[\{\Gamma_{[n]\setminus \{i,j\}}, \Gamma_{\{j,k\}}\}-\Gamma_{[n]\setminus \{i,k\}}-2\mu_j \Gamma_{[n]\setminus\{i,j,k\}}\Big],
\end{align*}
one can write
\begin{align*}
(n-2)\mu_i[\Gamma_{[n]\setminus \{i\}}, \Gamma_{\{i,k\}}]=\frac{\mu_i}{2\mu_k}\smashoperator{\sum_{\substack{j\in [n]\\ j\neq i\neq k}}}[\{\Gamma_{[n]\setminus \{i,j\}}, \Gamma_{\{j,k\}}\}, \Gamma_{\{i,k\}}].
\end{align*}
With the help of the identity $[\{A,B\}, C]=[A,\{B,C\}]+[B,\{A,C\}]$ and the relations \eqref{BI-Relations}, one finds
\begin{align*}
(n-2)\mu_i[\Gamma_{[n]\setminus \{i\}}, \Gamma_{\{i,k\}}]=\mu_i\smashoperator{\sum_{\substack{j\in [n]\\ j\neq i\neq k}}}[\Gamma_{[n]\setminus \{i,j\}}, \Gamma_{\{i,j,k\}}]+[\Gamma_{\{j,k\}},\Gamma_{[n]\setminus \{j\}}].
\end{align*}
Using this result, one finds
\begin{multline*}
[C_{[n]}, \Gamma_{\{i,k\}}]=
\\\mu_i\smashoperator{\sum_{\substack{j\in [n]\\ j\neq i\neq k}}}\{\Gamma_{\{j,k\}},\Gamma_{[n]\setminus\{j\}}\}-\{\Gamma_{\{i,j,k\}},\Gamma_{[n]\setminus\{i,j\}}\}+2\Gamma_{\{i,j\}}\Gamma_{[n]\setminus\{i,j,k\}}-2\mu_j \Gamma_{[n]\setminus \{j,k\}}
\\
+\mu_k\smashoperator{\sum_{\substack{j\in [n]\\ j\neq i\neq k}}}\{\Gamma_{\{i,j\}},\Gamma_{[n]\setminus\{j\}}\}-\{\Gamma_{\{i,j,k\}},\Gamma_{[n]\setminus\{j,k\}}\}+2\Gamma_{\{j,k\}}\Gamma_{[n]\setminus\{i,j,k\}}-2\mu_j \Gamma_{[n]\setminus \{i,j\}}
\end{multline*}
Using \eqref{BI-Relations} one last time confirms that $[C_{[n]},\Gamma_{A}]=0$ when $|A|=2$, and thus that $[C_{[n]},\Gamma_{A}]$ for all $A\subset [n]$.
\end{proof}
The previous Lemma is equivalent to the following.
\begin{lemma}
Let $C_{A}$ be defined as 
\begin{align*}
C_{A}=\smashoperator{\sum_{\{i,j\}\subset A}}\,\Gamma_{\{i,j\}}\Gamma_{A\setminus \{i,j\}}-(|A|-2)\sum_{i\in A}\mu_i\, \Gamma_{A\setminus \{i\}}.
\end{align*}
For $B\subset A$ one has $[C_{A}, \Gamma_{B}]=0$.
\end{lemma}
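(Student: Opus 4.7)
The plan is to deduce this lemma as an immediate corollary of the preceding one, by passing to a subalgebra of $\mathcal{A}_n$ isomorphic to $\mathcal{A}_{|A|}$. The key observation is that the defining relations \eqref{BI-Relations} are closed under restriction to subsets of any fixed $A\subset[n]$: whenever $B_1,B_2\subset A$, each of the sets $(B_1\cup B_2)\setminus(B_1\cap B_2)$, $B_1\cap B_2$, $B_1\cup B_2$, $B_1\setminus(B_1\cap B_2)$, and $B_2\setminus(B_1\cap B_2)$ appearing on the right-hand side of $\{\Gamma_{B_1},\Gamma_{B_2}\}$ is again contained in $A$. Hence the operators $\{\Gamma_B : B\subset A\}$ generate a subalgebra of $\mathcal{A}_n$ which, after the obvious relabeling with parameters $\{\mu_i\}_{i\in A}$, satisfies the defining relations of $\mathcal{A}_{|A|}$.

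A direct inspection shows that $C_A$ is precisely the analog of $C_{[n]}$ built inside this subalgebra: the sum $\sum_{\{i,j\}\subset A}\Gamma_{\{i,j\}}\Gamma_{A\setminus\{i,j\}}$ corresponds term by term to $\sum_{\{i,j\}\subset [n]}\Gamma_{\{i,j\}}\Gamma_{[n]\setminus\{i,j\}}$, and the correction term $(|A|-2)\sum_{i\in A}\mu_i\,\Gamma_{A\setminus\{i\}}$ corresponds to $(n-2)\sum_{i\in[n]}\mu_i\,\Gamma_{[n]\setminus\{i\}}$ with $n$ replaced by $|A|$. Applying the preceding lemma inside this subalgebra then yields $[C_A,\Gamma_B]=0$ for every $B\subset A$. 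There is no real obstacle here; the only point worth verifying carefully is the closure property of \eqref{BI-Relations} under restriction, which is immediate by inspection. Conceptually, this argument is the natural extension of Remark 2 (asserting $\mathcal{A}_n\subset\mathcal{A}_{n+1}$) from the chain $[n]\subset[n+1]$ to arbitrary set inclusions $B\subset A\subset[n]$.
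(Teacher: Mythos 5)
Your proposal is correct and is essentially the argument the paper intends: the paper states this lemma without a separate proof, presenting it as the reformulation of the preceding lemma via the inclusion of algebras noted in Remark 2 (exactly as it does for $Q_A$ with the phrase ``Since $\mathcal{A}_{n}\subset\mathcal{A}_{n+1}$, the previous can be equivalently formulated as follows''). Your explicit verification that the relations \eqref{BI-Relations} close on $\{\Gamma_B : B\subset A\}$, so that these generators satisfy the defining relations of $\mathcal{A}_{|A|}$ and the previous lemma applies verbatim, is precisely the justification the paper leaves implicit.
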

In the realization \eqref{Gamma-A-Explicit}, the operator $C_{A}$ can also be expressed in terms of $\Gamma_{A}$.
\begin{lemma}
For $A\subset [n]$, one has
\begin{align}
\label{Cas2-Value}
C_{A}=\left(\frac{|A|(|A|-3)}{4}\right)\Gamma_{A}.
\end{align}
\end{lemma}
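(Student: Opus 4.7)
My plan is to prove (\ref{Cas2-Value}) by paralleling the proof of Lemma~\ref{Lem-Relation}: insert the explicit realization (\ref{Gamma-A-Explicit}) into both sides and compare. Write $m=|A|$, $R_B=\prod_{i\in B} r_i$, and decompose $\Gamma_B=\Gamma_B^{*}R_B$ with
\begin{equation*}
\Gamma_B^{*}=N_B+\frac{|B|-1}{2}+S_B,\quad N_B=\sum_{\{i,j\}\subset B}M_{ij},\quad S_B=\sum_{k\in B}\mu_k r_k.
\end{equation*}
Since $M_{ij}$ anticommutes with $r_i$ and $r_j$ and commutes with every other $r_k$, $R_B$ commutes with $\Gamma_B^{*}$. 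The first step is to factor $R_A$ to the right in every summand of $C_A$: for $\Gamma_{\{i,j\}}\Gamma_{A\setminus\{i,j\}}$ this uses that $r_ir_j$ commutes with $\Gamma_{A\setminus\{i,j\}}^{*}$, and for $\mu_i\,\Gamma_{A\setminus\{i\}}$ it uses $R_{A\setminus\{i\}}=r_iR_A$ combined with $[r_i,\Gamma_{A\setminus\{i\}}^{*}]=0$. Identity (\ref{Cas2-Value}) is thereby equivalent to
\begin{equation*}
\sum_{\{i,j\}\subset A}\Gamma_{\{i,j\}}^{*}\,\Gamma_{A\setminus\{i,j\}}^{*}-(m-2)\sum_{i\in A}\mu_i r_i\,\Gamma_{A\setminus\{i\}}^{*}=\frac{m(m-3)}{4}\,\Gamma_A^{*}.
\end{equation*}

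Next, I would expand both sums by distributivity and sort the resulting monomials by type: constants, single $M_{ij}$, single $r_k$, reflection products $r_ar_k$ with $a\neq k$, mixed products $r_aM_{ij}$ with $a\notin\{i,j\}$, and quadratic products $M_{ij}M_{pq}$ with $\{i,j\}\cap\{p,q\}=\emptyset$. There are no $M_{ij}^{2}$, $M_{ij}M_{ik}$ or $r_iM_{ij}$ pieces, because in each of the products at hand the index sets of the two factors are disjoint. The quadratic part collects, over each 4-subset $\{a,b,c,d\}\subset A$, the combination $2(M_{ab}M_{cd}+M_{ac}M_{bd}+M_{ad}M_{bc})$, which vanishes by a Pl\"ucker-type identity: writing $M_{ij}=e_ie_jL_{ij}$ with $L_{ij}=x_iT_j-x_jT_i$, the Clifford signs combine with the operator identity $L_{ab}L_{cd}-L_{ac}L_{bd}+L_{ad}L_{bc}=0$, itself an immediate consequence of the commutativity of the $T_i$'s. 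For the cross terms, the coefficient of each $r_aM_{bc}$ with $a\notin\{b,c\}$ picks up contributions $+1$, $+(m-3)$ and $-(m-2)$ from the three sources and thus vanishes, while the $r_ar_k$ pieces receive $+2(m-2)\mu_a\mu_k r_ar_k$ from the first sum and $-2(m-2)\mu_a\mu_k r_ar_k$ from the second and cancel exactly.

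What remains is to match the coefficients of $N_A$, $S_A$, and the constant. Direct counting yields $\bigl[\frac{m-3}{2}+\frac{1}{2}\binom{m-2}{2}\bigr]N_A=\frac{m(m-3)}{4}N_A$ for the $M$-linear part, $\bigl[\frac{(m-1)(3m-8)}{4}-\frac{(m-2)^{2}}{2}\bigr]S_A=\frac{m(m-3)}{4}S_A$ for the $r$-linear part, and $\frac{1}{2}\binom{m}{2}\cdot\frac{m-3}{2}=\frac{m(m-3)}{4}\cdot\frac{m-1}{2}$ for the constant, all matching the right-hand side. The only algebraic input beyond the definitions is the Pl\"ucker-type identity; everything else is combinatorial bookkeeping of the multiplicities $\binom{m-2}{2}$, $(m-3)$, $(m-2)$ and $(m-1)$ with which each index pair arises. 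This organization is the main obstacle---the algebra is routine but the sheer number of distinct term types demands care. The cases $|A|=2$ (both sides equal $-\frac{1}{2}\Gamma_A$) and $|A|=3$ (both sides vanish) are immediate sanity checks, $|A|=4$ being the first genuinely nontrivial instance.
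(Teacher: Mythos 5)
Your proof is correct and takes essentially the same route as the paper's: a direct expansion in the realization \eqref{Gamma-A-Explicit}, with the vanishing of the disjoint-index products $M_{ab}M_{cd}+M_{ac}M_{bd}+M_{ad}M_{bc}$ as the only non-combinatorial input, followed by the same coefficient bookkeeping (your intermediate value $\tfrac{(|A|-1)(3|A|-8)}{4}$ for the $\mu_j r_j$ part matches the formula displayed in the paper's proof). Your Pl\"ucker formulation restricted to disjoint pairs is in fact the precise version of what the paper states as its ``main observation'' over all distinct pairs of $2$-subsets; only disjoint pairs actually occur in $C_A$, which is fortunate since the overlapping anticommutators $\{M_{ij},M_{ik}\}=-(1+2\mu_i r_i)M_{jk}$ appearing in the proof of Lemma~\ref{Lem-Relation} do not vanish.
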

\begin{proof}
The proof proceeds by a direct calculation using the realization \eqref{Gamma-A-Explicit} of the operators $\Gamma_{A}$. The main observation is that for a given set $A$, one has
\begin{align*}
\sum_{\{i,j\}\subset A}\sum_{\substack{\{k,\ell\}\subset A\\ \{k,\ell\}\neq \{i,j\}}} M_{ij}M_{k\ell}=0.
\end{align*}
This result follows directly from the Clifford relations \eqref{Clifford} and the definition of the operators $M_{ij}$ given in \eqref{M}.
Using this result and combinatorial arguments yields the expression
\begin{multline*}
\smashoperator{\sum_{\{i,j\}\subset A}}\,\Gamma_{\{i,j\}}\Gamma_{A\setminus \{i,j\}}=
\\
\Big[\textstyle{\frac{|A|(|A|-3)}{4}}\displaystyle\smashoperator{\sum_{\{i,j\}\subset A}} M_{ij}+(|A|-2)\sum_{i\in A}\mu_i\;\smashoperator{\sum_{\substack{\{k,\ell\}\subset A\\ k,\ell\neq i}}}M_{k\ell}
+2(|A|-2)\smashoperator{\sum_{\{i,j\}\subset A}} \mu_i \mu_j r_i r_{j}
\\
+\textstyle{\frac{(|A|-1)(3|A|-8)}{4}}\displaystyle\sum_{j\in A}\mu_j r_j
+\textstyle{\frac{|A|(|A|-1)(|A|-3)}{8}}\displaystyle\Big] \prod_{i\in A}r_i.
\end{multline*}
Upon expanding the second term $\sum_{i\in A}\mu_i\, \Gamma_{A\setminus \{i\}}$  of the operator $C_{A}$ using \eqref{Gamma-A-Explicit}, a straightforward calculation shows that one has indeed \eqref{Cas2-Value}.

\end{proof}
\section{A basis for the space of Dunkl monogenics}
In this section, a basis for the space $\mathcal{M}_{k}(\mathbb{R}^{n};V)$ of polynomial null-solutions of the Dirac--Dunkl operator $\underline{D}_{[n]}$ is constructed and explicitly presented in terms of Jacobi polynomials. The construction proceeds from a Cauchy-Kovalevskaia (CK) extension theorem and from the Fischer decomposition. It is shown that the basis functions stemming from that construction are joint eigenfunctions for the generators of the  Abelian subalgebra $\mathcal{Y}_{n}$ and the corresponding eigenvalues are computed explicitly. Another basis for $\mathcal{M}_{k}(\mathbb{R}^{n};V)$ corresponding to the Abelian subalgebra $\mathcal{Z}_{n}$ defined in \eqref{Abelian-Sub-2} is introduced. We conjecture that the expansion coefficients between the two bases are given in terms of some multivariate Bannai--Ito polynomials.

\subsection{Basis functions and the CK isomorphism} There is an isomorphism $\mathbf{CK}_{x_j}^{\mu_j}: \mathcal{P}_{k}(\mathbb{R}^{j-1})\otimes V\rightarrow \mathcal{M}_{k}(\mathbb{R}^{j};V)$ between the space of $V$-valued $k$-homogeneous polynomials in $x_1,\ldots ,x_{j-1}$ and the space of $k$-homogeneous Dunkl monogenics in $x_1,\ldots,x_{j}$. The mapping between the two spaces is as follows.
\begin{proposition}[\cite{2015_DeBie&Genest&Vinet_ArXiv_1501.03108}]
The isomorphism between $\mathcal{P}_{k}(\mathbb{R}^{j-1})\otimes V$ and $\mathcal{M}_{k}(\mathbb{R}^{j};V)$ denoted by $\mathbf{CK}_{x_j}^{\mu_j}$ is explicitly defined by the operator
\begin{align}
\mathbf{CK}_{x_j}^{\mu_j}=\Gamma(\mu_j+1/2)\Big[\widetilde{I}_{\mu_j-1/2}(x_{j}\underline{D}_{[j-1]})+\frac{1}{2}e_j x_j \underline{D}_{[j-1]}\widetilde{I}_{\mu_j+1/2}(x_j \underline{D}_{[j-1]})\Big],
\end{align}
with $\widetilde{I}_{\alpha}(x)=(\frac{2}{x})^{\alpha}I_{\alpha}(x)$ and $I_{\alpha}(x)$ the modified Bessel function \cite{2001_Andrews&Askey&Roy}. 
\end{proposition}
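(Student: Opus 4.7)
The plan is to construct $\mathbf{CK}_{x_j}^{\mu_j}$ directly from the Dirac--Dunkl equation by recursion. I would start from the ansatz
\[
P = \sum_{\ell=0}^{k} x_j^\ell\, p_\ell, \qquad p_\ell \in \mathcal{P}_{k-\ell}(\mathbb{R}^{j-1})\otimes V,
\]
which enforces homogeneity of $P$ of degree $k$. Splitting $\underline{D}_{[j]} = \underline{D}_{[j-1]} + e_j T_j$ and using that $\underline{D}_{[j-1]}$ commutes with $x_j$ (it involves only $\partial_{x_i}$ and $r_i$ for $i<j$), while $e_j$ commutes with $x_j$ but anticommutes with every $e_i$ appearing in $\underline{D}_{[j-1]}$ (so $e_j \underline{D}_{[j-1]} = -\underline{D}_{[j-1]} e_j$), the kernel condition $\underline{D}_{[j]} P = 0$ separates into one equation in $x_1,\ldots,x_{j-1}$ for each power of $x_j$.

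The computation $T_j x_j^\ell = \lambda_\ell x_j^{\ell-1}$ with $\lambda_{2m} = 2m$ and $\lambda_{2m+1} = 2m+1+2\mu_j$ (the parity dependence coming from $r_j x_j^\ell = (-1)^\ell x_j^\ell$) then delivers the one-step recursion $p_{\ell+1} = \lambda_{\ell+1}^{-1}\, e_j\, \underline{D}_{[j-1]}\, p_\ell$. Iterating, and using $(e_j \underline{D}_{[j-1]})^2 = \underline{D}_{[j-1]}^2$ (from $e_j^2=-1$ and the anticommutation above), produces
\[
p_{2m} = \frac{\underline{D}_{[j-1]}^{2m}\, p_0}{2^m\, m!\, (2\mu_j+1)(2\mu_j+3)\cdots(2\mu_j+2m-1)},
\]
together with an analogous formula for $p_{2m+1}$ carrying an extra factor $e_j\,\underline{D}_{[j-1]}/(2\mu_j+2m+1)$. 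The remaining step is to identify these Pochhammer ratios with the coefficients in the series $\widetilde{I}_\alpha(y) = \sum_m y^{2m}/(4^m\, m!\, \Gamma(m+\alpha+1))$: applying $\Gamma(\mu_j+1/2)/\Gamma(m+\mu_j+1/2) = 1/(\mu_j+1/2)_m$ and the double-factorial expansion $(\mu_j+1/2)_m = 2^{-m}(2\mu_j+1)(2\mu_j+3)\cdots(2\mu_j+2m-1)$, the even-$\ell$ sum collapses to $\Gamma(\mu_j+1/2)\,\widetilde{I}_{\mu_j-1/2}(x_j \underline{D}_{[j-1]})\, p_0$, while (commuting the scalar $e_j$ past the variable $x_j$) the odd-$\ell$ sum collapses to $\tfrac{1}{2}\Gamma(\mu_j+1/2)\, e_j\, x_j \underline{D}_{[j-1]}\, \widetilde{I}_{\mu_j+1/2}(x_j \underline{D}_{[j-1]})\, p_0$, yielding the stated formula.

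The infinite Bessel series terminates on any $p_0 \in \mathcal{P}_k(\mathbb{R}^{j-1})\otimes V$ because $\underline{D}_{[j-1]}^{k+1} p_0 = 0$, so $\mathbf{CK}_{x_j}^{\mu_j}(p_0)$ is a polynomial of degree $k$ in $\mathcal{M}_k(\mathbb{R}^j;V)$ by construction. Injectivity is immediate from $\mathbf{CK}_{x_j}^{\mu_j}(p_0)|_{x_j=0} = p_0$ (all $\ell\geq 1$ terms vanish), and surjectivity is forced by uniqueness of the recursion: any $P\in\mathcal{M}_k(\mathbb{R}^j;V)$ admits the ansatz expansion, and its constant term in $x_j$ determines all remaining coefficients. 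The principal obstacle I anticipate is bookkeeping: one must carefully track when the product $(e_j \underline{D}_{[j-1]})^\ell$ collapses to $\underline{D}_{[j-1]}^\ell$ versus $e_j \underline{D}_{[j-1]}^\ell$, and convert cleanly between the double-factorial products coming out of the recursion and the gamma functions appearing in the Bessel series; once these conversions are set up, the remaining calculations are routine.
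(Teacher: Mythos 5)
Your derivation is correct and follows exactly the standard route taken in the cited source \cite{2015_DeBie&Genest&Vinet_ArXiv_1501.03108} (the present paper gives no proof, only the reference): expand in powers of $x_j$, use $T_jx_j^{\ell}=\lambda_\ell x_j^{\ell-1}$ with the parity-dependent $\lambda_\ell$ to get the one-step recursion, and resum into the modified Bessel series, with bijectivity following from restriction to $x_j=0$ and uniqueness of the recursion. All the coefficient bookkeeping checks out against the explicit expansion \eqref{CK-Explicit}.
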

\begin{remark}
When $\mu_j=0$, $\mathbf{CK}_{x_j}^{\mu_j}$ reduces to $\mathbf{CK}_{x_j}=\exp (e_j x_j \underline{D}_{[j-1]})$ which corresponds to the well-known $\mathbf{CK}$ extension operator for the standard Dirac operator, as determined in \cite{1982_Brackx&Delanghe&Sommen}.
\end{remark}
Consider $p\in \mathcal{P}_{k}(\mathbb{R}^{j-1})\otimes V$; the $\mathbf{CK}$ map can be written explicitly as
\begin{multline}
\label{CK-Explicit}
\mathbf{CK}_{x_{j}}^{\mu_j}(p)=\sum_{\ell=0}^{\lfloor \frac{k}{2}\rfloor}\frac{\Gamma(\mu_j+1/2)}{2^{2\ell}\ell !\,\Gamma(\ell+\mu_j+1/2)}\, x_{j}^{2\ell}\underline{D}_{[j-1]}^{2\ell}\,p
\\
+\frac{e_j x_j \underline{D}_{[j]}}{2}\sum_{\ell=0}^{\lfloor \frac{k-1}{2}\rfloor}\frac{\Gamma(\mu_j+1/2)}{2^{2\ell}\ell ! \Gamma(\ell+\mu_j+3/2)}x_{j}^{2\ell}\underline{D}_{[j-1]}^{2\ell}\;p.
\end{multline}
The $\mathbf{CK}$ map can be combined with the Fischer decomposition to construct an explicit basis for the space $\mathcal{M}_{k}(\mathbb{R}^{n};V)$. Let $\{v_{s}\}$ for $s\in I=\{1,\ldots, \mathrm{dim} V\}$ be a set of basis vectors for the representation space $V$ of $\mathcal{C}\ell_{n}$. One has the following tower of $\mathbf{CK}$ extensions and Fischer decompositions
\begin{align}
\label{Tower}
\begin{aligned}
\mathcal{M}_{k}(\mathbb{R}^{n};V)&\cong\mathbf{CK}_{x_n}^{\mu_n}\big[\mathcal{P}_{k}(\mathbb{R}^{n-1})\otimes V\big]\cong\mathbf{CK}_{x_n}^{\mu_n}\big[\bigoplus_{j=0}^{k}\underline{x}_{[n-1]}^{k-j}\mathcal{M}_{j}(\mathbb{R}^{n-1};V)\big]
\\
&\cong\mathbf{CK}_{x_n}^{\mu_n}\Big[\bigoplus_{j=0}^{k}\underline{x}_{[n-1]}^{k-j} \mathbf{CK}_{x_{n-1}}^{\mu_{n-1}}\big[ \mathcal{P}_{j}(\mathbb{R}^{n-2})\otimes V\big]\Big]
\\
&\cong\mathbf{CK}_{x_n}^{\mu_n}\Big[\bigoplus_{j=0}^{k}\underline{x}_{[n-1]}^{k-j} \mathbf{CK}_{x_{n-1}}^{\mu_{n-1}}\big[\bigoplus_{\ell=0}^{j}\underline{x}_{[n-2]}^{j-\ell}\mathcal{M}_{\ell}(\mathbb{R}^{n-2};V)\big]\Big]\cong\cdots
\end{aligned}
\end{align}
until one reaches $\mathcal{P}_{j_1}(\mathbb{R})\otimes V$. Since the latter space is spanned by the basis vectors $x_1^{j_1} v_{s}$ with $s\in I$ and $j_1\in \mathbb{N}$, one concludes the following.
\begin{proposition}
Let $\mathbf{j}$ be defined as $\mathbf{j}=(j_1,j_2,\ldots,j_{n-2}, j_{n-1}=k-\sum_{i=1}^{n-2}j_i)$ where $j_1,\ldots, j_{n-2}$ are non-negative integers such that $\sum_{i=1}^{n-2}j_i\leq k$. Consider the set of functions $\Psi_{\mathbf{j}}^{s}(x_1,\ldots, x_{n})$ defined by
\begin{multline}
\label{Basis-1}
\Psi_{\mathbf{j}}^{s}(x_1,\ldots, x_{n})=
\\
\mathbf{CK}_{x_{n}}^{\mu_{n}}\bigg[\underline{x}_{[n-1]}^{j_{n-1}}\mathbf{CK}_{x_{n-1}}^{\mu_{n-1}}\Big[\underline{x}_{[n-2]}^{j_{n-2}}\big[\cdots \mathbf{CK}_{x_3}^{\mu_3}[\underline{x}_{[2]}^{j_2}\mathbf{CK}_{x_2}^{\mu_2}[x_1^{j_1}]\big]\Big]\bigg] v_{s},
\end{multline}
with $s\in I$. Then the functions $\Psi_{\mathbf{j}}^{s}$ form a basis for the space $\mathcal{M}_{k}(\mathbb{R}^{n};V)$ of $k$-homogeneous Dunkl monogenics.
\end{proposition}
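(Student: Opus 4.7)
The plan is a straightforward induction on $n$ using the tower \eqref{Tower}, which is already exhibited in the excerpt and essentially does the main work. The two ingredients I would repeatedly invoke are the CK isomorphism of Proposition~4, identifying $\mathcal{M}_k(\mathbb{R}^j;V)$ with $\mathcal{P}_k(\mathbb{R}^{j-1})\otimes V$, and the Fischer decomposition of Proposition~2, breaking $\mathcal{P}_k(\mathbb{R}^{j-1})\otimes V$ into a direct sum of shifted monogenic spaces. For the base case $n=2$, the multi-index $\mathbf{j}=(j_1)$ is forced to equal $(k)$, so the claim reduces to saying that $\{\mathbf{CK}_{x_2}^{\mu_2}[x_1^{k}v_s]\}_{s\in I}$ is a basis of $\mathcal{M}_k(\mathbb{R}^2;V)$; this is immediate from Proposition~4 together with the observation that $\{x_1^{k}v_s\}_{s\in I}$ is a basis of $\mathcal{P}_k(\mathbb{R})\otimes V$.

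For the inductive step, assuming the result in dimension $n-1$, I would first apply Proposition~4 to identify $\mathcal{M}_k(\mathbb{R}^n;V)\cong\mathbf{CK}_{x_n}^{\mu_n}[\mathcal{P}_k(\mathbb{R}^{n-1})\otimes V]$, and then apply the Fischer decomposition to write
\[
\mathcal{P}_k(\mathbb{R}^{n-1})\otimes V \;=\; \bigoplus_{j_{n-1}=0}^{k} \underline{x}_{[n-1]}^{j_{n-1}}\,\mathcal{M}_{k-j_{n-1}}(\mathbb{R}^{n-1};V).
\]
The induction hypothesis furnishes, for each summand, a basis of the form \eqref{Basis-1} truncated to dimension $n-1$, indexed by tuples $(j_1,\ldots,j_{n-2})$ with $j_1+\cdots+j_{n-2}\leq k-j_{n-1}$ and $s\in I$. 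Multiplying each such basis element by $\underline{x}_{[n-1]}^{j_{n-1}}$ and then applying $\mathbf{CK}_{x_n}^{\mu_n}$ produces precisely the family $\Psi_{\mathbf{j}}^{s}$ in \eqref{Basis-1}; that this family spans $\mathcal{M}_k(\mathbb{R}^n;V)$ and remains linearly independent follows at once from the direct-sum structure of the Fischer decomposition combined with the fact that $\mathbf{CK}_{x_n}^{\mu_n}$ is an isomorphism.

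There is no serious obstacle: the argument is mostly bookkeeping once the tower \eqref{Tower} is in hand. The only point that merits a small check is that the index constraint $\sum_{i=1}^{n-1}j_i=k$, with the convention $j_{n-1}=k-\sum_{i=1}^{n-2}j_i$, precisely parametrizes the disjoint union over $j_{n-1}\in\{0,1,\ldots,k\}$ of the tuples produced by the inductive hypothesis, which is a purely combinatorial identity. In principle one could alternatively verify the conclusion by a dimension count using the known dimensions of $\mathcal{M}_k(\mathbb{R}^n;V)$, but since Propositions~2 and~4 already supply explicit isomorphisms, the constructive route is cleaner and yields the basis \eqref{Basis-1} on the nose.
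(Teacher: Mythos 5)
Your proposal is correct and follows essentially the same route as the paper: the authors' proof is a one-line appeal to the tower of alternating $\mathbf{CK}$ isomorphisms and Fischer decompositions displayed just before the proposition, and your induction on $n$ is simply that tower written out carefully, with the base case and the combinatorics of the multi-index made explicit. No gaps.
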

\begin{proof}
The result follows from the application of the tower \eqref{Tower}.
\end{proof}
\begin{remark}
As we shall always work at a given value $k$ of the degree, the dependence on $k$ of  the multi-index $\mathbf{j}$  is kept implicit.
\end{remark}
Let us now present a Lemma which characterizes the action of the operators $\underline{D}$ and $\underline{x}$ on the space of Dunkl monogenics.
\begin{lemma}
Let $M_{\ell}\in \mathcal{M}_{\ell}(\mathbb{R}^{n};V)$. Then for non-negative integers $j,k$ such that $j\leq k$ one has
\begin{align*}
\underline{D}^{2j} \underline{x}^{2k} M_{\ell} & =  
\left[\frac{2^{2j} k! \Gamma(k+ \ell + \gamma_{[n]})}{(k-j)! \Gamma(k-j+ \ell + \gamma_{[n]})} \right]\underline{x}^{2k-2j} M_{\ell}\\
\underline{D}^{2j+1} \underline{x}^{2k} M_{\ell} & = - 
\left[\frac{2^{2j+1} k! \Gamma(k+ \ell + \gamma_{[n]})}{(k-j-1)! \Gamma(k-j+ \ell + \gamma_{[n]})} \right]\underline{x}^{2k-2j-1} M_{\ell}\\
\underline{D}^{2j} \underline{x}^{2k+1} M_{\ell} & =  \left[\frac{2^{2j} k! \Gamma(k+ \ell + \gamma_{[n]}+1)}{(k-j)! \Gamma(k-j+ \ell + \gamma_{[n]}+1)}\right] \underline{x}^{2k-2j+1} M_{\ell}\\
\underline{D}^{2j+1} \underline{x}^{2k+1} M_{\ell} & = - \left[\frac{2^{2j+1} k! \Gamma(k+ \ell + \gamma_{[n]}+1)}{(k-j)! \Gamma(k-j+ \ell + \gamma_{[n]})} \right]\underline{x}^{2k-2j} M_{\ell}.
\end{align*}
\end{lemma}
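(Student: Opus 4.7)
The plan is to reduce all four identities to two base-case relations,
\begin{align*}
\underline{D}\,\underline{x}^{2k} M_\ell &= -2k\,\underline{x}^{2k-1} M_\ell,\\
\underline{D}\,\underline{x}^{2k+1} M_\ell &= -2(k+\ell+\gamma_{[n]})\,\underline{x}^{2k} M_\ell,
\end{align*}
from which the general statement will follow by iteratively applying $\underline{D}$. The only ingredients needed are the $\mathfrak{osp}(1|2)$ anticommutator $\{\underline{x},\underline{D}\}=-2(\mathbb{E}_{[n]}+\gamma_{[n]})$ from Proposition~1, the fact that $\underline{D}\,M_\ell=0$, and the observation that $\underline{x}^m M_\ell$ is homogeneous of degree $m+\ell$, so $\mathbb{E}_{[n]}$ acts on it as multiplication by $m+\ell$.

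First I would prove these two base relations by simultaneous induction on $k$. The case $k=0$ of the first is just $\underline{D}\,M_\ell=0$, and of the second is $\underline{D}\,\underline{x}\,M_\ell=-2(\ell+\gamma_{[n]})\,M_\ell$, both immediate from the anticommutator. For the inductive step I would rewrite $\underline{D}\,\underline{x}^{2(k+1)} M_\ell=(\underline{D}\,\underline{x})\,\underline{x}^{2k+1} M_\ell$ as $(-\underline{x}\,\underline{D}-2(\mathbb{E}_{[n]}+\gamma_{[n]}))\,\underline{x}^{2k+1} M_\ell$, substitute the second base relation at level $k$ into the first term, and read off the $\mathbb{E}_{[n]}$-eigenvalue in the second; the odd-power step is handled in the same fashion using the first base relation at level $k+1$.

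With the two base relations in hand, I would induct on $j$. Formula~1 is obtained by applying $\underline{D}^{2}$ repeatedly to $\underline{x}^{2k} M_\ell$: each double application peels off one factor of the form $-2(k-i)\cdot(-2)(k-i-1+\ell+\gamma_{[n]})=4(k-i)(k-i-1+\ell+\gamma_{[n]})$, and these telescope into the factorial $k!/(k-j)!$ and the $\Gamma$-function ratio $\Gamma(k+\ell+\gamma_{[n]})/\Gamma(k-j+\ell+\gamma_{[n]})$ written in the statement. Formula~2 then drops out by applying $\underline{D}$ once more to Formula~1 and invoking the first base relation with parameter $k-j$. Formulas~3 and 4 are derived identically, now starting from $\underline{x}^{2k+1} M_\ell$ and using the second base relation at the initial step.

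The hard part will be purely bookkeeping: tracking the factorial, $\Gamma$-function, and sign factors through the iteration and verifying that they assemble into precisely the ratios in the statement. No conceptual difficulty is expected beyond the $\mathfrak{osp}(1|2)$ relations already recorded in Proposition~1 and the monogenicity of $M_\ell$.
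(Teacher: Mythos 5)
Your proposal is correct and follows essentially the same route as the paper, whose proof consists of the single sentence that the lemma ``easily follows from induction and from the relations \eqref{OSP}''; your two base relations and the subsequent iteration are precisely the induction the authors have in mind, and the coefficients you describe do assemble into the stated factorial and Gamma-function ratios.
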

\begin{proof}
The Lemma easily follows from induction and from the relations \eqref{OSP}
\end{proof}

In view of \eqref{CK-Explicit} and Lemma 13, one can expect the wavefunctions \eqref{Basis-1} to have an explicit expression in terms of hypergeometric functions. It turns out that this is indeed the case. To present the result, we introduce the following notation
\begin{align*}
\rvert \mathbf{j}_{\ell}\rvert=j_1+j_2+\cdots + j_{\ell},\qquad \rVert x_{[j]}\rVert^{2}=x_1^{2}+\cdots+x_{j}^{2}.
\end{align*}
\begin{proposition}
The basis functions $\Psi_{\mathbf{j}}^{s}$ have the expression
\begin{align}
\label{Psi-Exp}
\Psi_{\mathbf{j}}^{s}(x_1,\ldots, x_{n})=\overrightarrow{\left(\prod_{\ell=n}^{3}\right)}Q_{j_{\ell-1} }(x_{\ell}, \underline{x}_{[\ell-1]}) m_{j_1}(x_2,x_1) v_{s},
\end{align}
where
\begin{multline}
 \label{B1}
 Q_{j_{\ell-1} }(x_{\ell},\underline{x}_{[\ell-1]})=\frac{\beta! \rVert x_{[\ell]}\rVert^{2\beta}}{(\mu_{\ell}+1/2)_{\beta}}
 \\ \times
 \begin{cases}
  P_{\beta}^{(\rvert \mathbf{j}_{\ell-2}\rvert+\gamma_{[\ell-1]}-1, \mu_{\ell}-1/2)}\left(\frac{x_{\ell}^2-\rVert x_{[\ell-1]}\rVert^{2}}{\rVert x_{[\ell]} \rVert^{2}}\right)
  & \text{$j_{\ell-1}=2\beta$}
  \\[.2cm]
  \qquad  -\frac{e_{\ell} x_{\ell} \underline{x}_{[\ell-1]}}{\rVert x_{[\ell]}\rVert^{2}}\;P_{\beta-1}^{(\rvert \mathbf{j}_{\ell-2}\rvert+\gamma_{[\ell-1]},\mu_{\ell}+1/2)}\left(\frac{x_{[\ell]}^2-\rVert x_{[\ell-1]}\rVert^{2}}{\rVert x_{[\ell]}\rVert^{2}}\right)
  \\
  \\
  \underline{x}_{[\ell-1]}\;P_{\beta}^{(\rvert \mathbf{j}_{\ell-2}\rvert+\gamma_{[\ell-1]},\mu_{\ell}-1/2)}\left(\frac{x_{\ell}^2-\rVert x_{[\ell-1]}\rVert^{2}}{\rVert x_{[\ell]}\rVert^{2}}\right) 
  &
  \text{$j_{\ell-1}=2\beta+1$}
  \\
  \qquad
  -e_{\ell} x_{\ell}\left(\frac{\beta+\rvert \mathbf{j}_{\ell-2}\rvert+\gamma_{[\ell-1]}}{\beta+\mu_{\ell}+1/2}\right)\;P_{\beta}^{(\rvert \mathbf{j}_{\ell-2}\rvert+\gamma_{[\ell-1]}-1,\mu_{\ell}+1/2)}\left(\frac{x_{\ell}^2-\rVert x_{[\ell-1]}\rVert^{2}}{\rVert x_{[\ell]}\rVert^{2}}\right)
 \end{cases}
\end{multline}
and 
\begin{multline}
\label{B2}
 m_{j_1}(x_2,x_1)=\frac{(-1)^{\beta}\beta!(x_1^2+x_2^2)^{\beta}}{(\mu_2+1/2)_{\beta}}
 \\ \times
 \begin{cases}
   P_{\beta}^{(\mu_1-1/2,\mu_2-1/2)}\left(\frac{x_2^2-x_1^2}{x_1^2+x_2^2}\right)
  & \text{$j_{1}=2\beta$}
  \\[.2cm]
  \qquad 
  -\frac{e_2 e_1 x_2 x_1}{x_1^2+x_2^2}\,P_{\beta-1}^{(\mu_1+1/2, \mu_2+1/2)}\left(\frac{x_2^2-x_1^2}{x_1^2+x_2^2}\right), & 
  \\
  \\
   x_1P_{\beta}^{(\mu_1+1/2,\mu_2-1/2)}\left(\frac{x_2^2-x_1^2}{x_1^2+x_2^2}\right)
  &
  \text{$j_{1}=2\beta+1$}
  \\
  \qquad
  +\left(\frac{\beta+\mu_1+1/2}{\beta+\mu_2+1/2}\right)\,e_2e_1x_2\,P_{\beta}^{(\mu_1-1/2,\mu_2+1/2)}\left(\frac{x_2^2-x_1^2}{x_1^2+x_2^2}\right)
 \end{cases}
\end{multline}
In \eqref{B1} and \eqref{B2}, $P_{n}^{(\alpha,\beta)}(x)$ are the Jacobi polynomials  with $P_{-1}^{(\alpha,\beta)}(x)=0$ and $(a)_{n}=\frac{\Gamma(a+n)}{\Gamma(a)}$ the Pochhammer symbol \cite{2001_Andrews&Askey&Roy}.
\end{proposition}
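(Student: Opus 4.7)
The plan is to proceed by induction along the tower of CK extensions \eqref{Tower}. The inductive claim is: for each $3\le \ell \le n$ and every Dunkl monogenic $M\in \mathcal{M}_{|\mathbf{j}_{\ell-2}|}(\mathbb{R}^{\ell-1};V)$,
\[
\mathbf{CK}_{x_\ell}^{\mu_\ell}\bigl[\,\underline{x}_{[\ell-1]}^{\,j_{\ell-1}}\, M\,\bigr] \;=\; Q_{j_{\ell-1}}(x_\ell,\underline{x}_{[\ell-1]})\,M,
\]
with $Q_{j_{\ell-1}}$ as in \eqref{B1}. Iterating this identity for $\ell = 3, 4, \ldots, n$ and treating the innermost extension $\mathbf{CK}_{x_2}^{\mu_2}[x_1^{j_1}]$ separately as a base case to recover $m_{j_1}(x_2,x_1)$ delivers \eqref{Psi-Exp}. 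The argument naturally splits along two parity dichotomies: that of $j_{\ell-1}$ (i.e.\ $j_{\ell-1} = 2\beta$ versus $j_{\ell-1} = 2\beta+1$) and that of the two summands of the explicit CK formula \eqref{CK-Explicit}.

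The base case is a direct evaluation of $\mathbf{CK}_{x_2}^{\mu_2}[x_1^{j_1}]$. Since $\underline{D}_{[1]} = e_1 T_1$, the one-dimensional Dunkl operator acts on $x_1^{j_1}$ by multiplication by $j_1$ or by $j_1 + 2\mu_1$ (depending on the parity of $j_1$), with a concomitant decrement of the exponent. The resulting Pochhammer-ratio summation in \eqref{CK-Explicit} splits along the parity of $j_1$ into two Gauss hypergeometric series in $x_1^2/(x_1^2+x_2^2)$, which are precisely the Jacobi polynomial expressions listed in \eqref{B2}.

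For the inductive step, I would substitute $p = \underline{x}_{[\ell-1]}^{\,j_{\ell-1}} M$ into \eqref{CK-Explicit} and reduce each inner term $\underline{D}_{[\ell-1]}^{r}\,\underline{x}_{[\ell-1]}^{\,j_{\ell-1}} M$ via Lemma 13, taking the ambient dimension there to be $\ell-1$ (so that $\gamma_{[n]}$ becomes $\gamma_{[\ell-1]}$) and the monogenic degree to be $|\mathbf{j}_{\ell-2}|$. Using $\underline{x}_{[\ell-1]}^{2}=-\|x_{[\ell-1]}\|^{2}$, the expression collapses into a scalar series (multiplying $M$) together with a Clifford series (multiplying $e_\ell x_\ell\,\underline{x}_{[\ell-1]}M$), whose summands involve powers of $x_\ell^2$ and $\|x_{[\ell-1]}\|^2$ with coefficients that are ratios of Pochhammer symbols in $\mu_\ell + 1/2$ and $|\mathbf{j}_{\ell-2}|+\gamma_{[\ell-1]}$. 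After reindexing $m\mapsto\beta-m$, a direct comparison with the explicit polynomial form $P_\beta^{(\alpha,\mu)}(y) = \sum_s\binom{\beta+\alpha}{\beta-s}\binom{\beta+\mu}{s}(\tfrac{y-1}{2})^s(\tfrac{y+1}{2})^{\beta-s}$ evaluated at $y = (x_\ell^2 - \|x_{[\ell-1]}\|^2)/\|x_{[\ell]}\|^2$ identifies each of the four resulting series (one per parity combination) with the Jacobi polynomial appearing in the corresponding branch of \eqref{B1}.

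The principal technical point is precisely this matching: after reindexing, the coefficients obtained from Lemma 13 and the coefficients read off from the Jacobi explicit form must be shown equal, which reduces via the elementary identities $\binom{\beta+b-1}{\beta-s}=(b)_\beta/[(b)_s(\beta-s)!]$ and $\binom{\beta+a-1}{s}=(a)_\beta/[s!(a)_{\beta-s}]$ (with $a = \mu_\ell+1/2$, $b = |\mathbf{j}_{\ell-2}|+\gamma_{[\ell-1]}$) to an algebraic tautology between two Pochhammer-ratio expressions. Keeping track of the four parameter shifts in \eqref{B1} and of the sign conventions dictated by the four cases of Lemma 13 is somewhat delicate, but once the bookkeeping is set up the remaining identifications are entirely elementary.
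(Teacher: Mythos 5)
Your proposal is correct and follows essentially the same route as the paper, whose proof consists precisely of substituting into the explicit CK formula \eqref{CK-Explicit} and reducing the powers $\underline{D}_{[\ell-1]}^{r}\underline{x}_{[\ell-1]}^{j_{\ell-1}}M$ via Lemma 13, then matching the resulting Pochhammer-ratio series against Jacobi polynomials. Your organization of that ``long but straightforward calculation'' as an induction along the tower \eqref{Tower}, with the correct replacement of $\gamma_{[n]}$ by $\gamma_{[\ell-1]}$ and the monogenic degree by $\lvert\mathbf{j}_{\ell-2}\rvert$ at each level, is exactly what the paper's argument requires.
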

\begin{proof}
The result follows from a long but straightforward calculation using \eqref{CK-Explicit} and Lemma 13. See also \cite{2015_DeBie&Genest&Vinet_ArXiv_1501.03108}, where a similar calculation was performed.
\end{proof}
Let us now show that the basis functions given by \eqref{Basis-1} are joint eigenfunctions of the generators of the maximal Abelian subalgebra $\mathcal{Y}_{n}$ defined in \eqref{Abelian-Sub}.
\begin{proposition}
\label{Prop-Eigen}
The wavefunctions $\Psi_{\mathbf{j}}^{s}\in \mathcal{M}_{k}(\mathbb{R}^{n};V)$ satisfy
\begin{align}
\label{C1}
\Gamma_{[\ell]}\Psi_{\mathbf{j}}^{s}(x_1,\ldots, x_{n})=\lambda_{\ell}(\mathbf{j})\Psi_{\mathbf{j}}^{s}(x_1,\ldots, x_{n}),
\end{align}
where the eigenvalues are given by
\begin{align}
\label{C2}
\lambda_{\ell}(\mathbf{j})=(-1)^{\rvert \mathbf{j}_{\ell-1}\rvert}(\rvert \mathbf{j}_{\ell-1}\rvert+\gamma_{[\ell]}-1/2).
\end{align}
\end{proposition}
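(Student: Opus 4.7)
The plan is to reduce the eigenvalue claim, via the tower structure in \eqref{Basis-1}, to the eigenvalue equation for $\Gamma_{[\ell]}$ on $V$-valued Dunkl monogenics of $\mathbb{R}^{\ell}$ (the verbatim analogue of \eqref{Dunkl-Dirac-Eq-S} for the subset $[\ell]$, whose proof is identical). Fix $\ell \in \{2,\ldots,n-1\}$ and split \eqref{Basis-1} at level $\ell$ by writing $\Psi_{\mathbf{j}}^{s} = \mathcal{O}_\ell\,\Phi_\ell$, where
\begin{equation*}
\Phi_\ell := \mathbf{CK}_{x_\ell}^{\mu_\ell}\Bigl[\underline{x}_{[\ell-1]}^{j_{\ell-1}}\mathbf{CK}_{x_{\ell-1}}^{\mu_{\ell-1}}\bigl[\cdots \underline{x}_{[2]}^{j_2}\mathbf{CK}_{x_2}^{\mu_2}[x_1^{j_1}]\bigr]\Bigr]v_{s}
\end{equation*}
is the ``inner'' portion of the tower, involving only $x_1,\ldots,x_\ell$, and $\mathcal{O}_\ell$ collects the remaining factors $\underline{x}_{[j-1]}^{j_{j-1}}$ and $\mathbf{CK}_{x_j}^{\mu_j}$ for $j = \ell+1,\ldots,n$.

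A short induction on $\ell$ that alternates the CK isomorphism $\mathbf{CK}_{x_j}^{\mu_j}\colon \mathcal{P}_k(\mathbb{R}^{j-1})\otimes V \to \mathcal{M}_k(\mathbb{R}^j;V)$ with multiplication by $\underline{x}_{[j]}$ (which raises polynomial degree by one while sending monogenics to plain polynomials) shows that $\Phi_\ell \in \mathcal{M}_{|\mathbf{j}_{\ell-1}|}(\mathbb{R}^\ell;V)$. The computation \eqref{Dunkl-Dirac-Eq-S}, carried out with $[n]$ replaced by $[\ell]$, then gives
\begin{equation*}
\Gamma_{[\ell]}\Phi_\ell = (-1)^{|\mathbf{j}_{\ell-1}|}\bigl(|\mathbf{j}_{\ell-1}|+\gamma_{[\ell]}-1/2\bigr)\Phi_\ell = \lambda_\ell(\mathbf{j})\Phi_\ell,
\end{equation*}
which is exactly \eqref{C2}. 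The proposition will therefore follow once $[\Gamma_{[\ell]},\mathcal{O}_\ell]=0$ is checked.

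This last commutation is the main technical point. From the explicit form \eqref{Gamma-A-Explicit}, $\Gamma_{[\ell]}$ is built only from $x_i, T_i, r_i$ and the \emph{even} Clifford elements $e_i e_k$ with $i,k\in[\ell]$. Hence for every $j>\ell$ the ingredients $x_j, T_j, r_j$ commute with $\Gamma_{[\ell]}$ because they act on a disjoint variable, and $e_j$ commutes with each pair $e_i e_k$ ($i,k\in[\ell]$) by a double anticommutation. Combined with Lemma 1 applied inside the $[\ell]$-block, these observations upgrade to $[\Gamma_{[\ell]},\underline{D}_{[j-1]}]=[\Gamma_{[\ell]},\underline{x}_{[j-1]}]=0$ for every $j>\ell$. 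Since each $\mathbf{CK}_{x_j}^{\mu_j}$ appearing in $\mathcal{O}_\ell$ is, by its definition, an analytic function of $x_j, e_j$, and $\underline{D}_{[j-1]}$, the elementary commutations above propagate through power series and products to give $[\Gamma_{[\ell]},\mathcal{O}_\ell]=0$, whence $\Gamma_{[\ell]}\Psi_{\mathbf{j}}^{s} = \mathcal{O}_\ell\Gamma_{[\ell]}\Phi_\ell = \lambda_\ell(\mathbf{j})\Psi_{\mathbf{j}}^{s}$. The only subtle step is the parity argument ensuring $[e_j,\Gamma_{[\ell]}]=0$ for $j>\ell$; once this is in hand, commutation with each $\mathbf{CK}_{x_j}^{\mu_j}$ reduces to commutation with its atomic building blocks.
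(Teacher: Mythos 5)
Your proposal is correct and follows essentially the same route as the paper: split the tower \eqref{Basis-1} at level $\ell$, identify the inner piece as an element of $\mathcal{M}_{|\mathbf{j}_{\ell-1}|}(\mathbb{R}^{\ell};V)$ to which the $[\ell]$-analogue of Proposition 3 applies, and commute $\Gamma_{[\ell]}$ through the outer $\mathbf{CK}$ and $\underline{x}_{[j]}$ factors using $[\Gamma_{[\ell]},\underline{D}_{[j]}]=[\Gamma_{[\ell]},\underline{x}_{[j]}]=0$ for $j\geq\ell$. The only difference is that you spell out the atomic commutations (with $x_j$, $e_j$, $T_j$, $r_j$) justifying $[\Gamma_{[\ell]},\mathbf{CK}_{x_j}^{\mu_j}]=0$, which the paper leaves implicit.
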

\begin{proof}
For $\ell=n$, the result is equivalent to Proposition 3. For $\ell< n$, observe that $[\Gamma_{[\ell]}, \underline{D}_{[j]}]=[\Gamma_{[\ell]}, \underline{x}_{[j]}]=0$ for all $j\geq  \ell$. As a result, $\Gamma_{[\ell]}$ commutes with $\mathbf{CK}_{x_{j}}^{\mu_{j}}$ for $j-1\geq \ell$ and one can write
\begin{align*}
\Gamma_{[\ell]}\Psi_{\mathbf{j}}^{s}(x_1,\ldots, x_{n})=\mathbf{CK}_{x_{n}}^{\mu_{n}}\Big[\cdots \Gamma_{[\ell]}\mathbf{CK}_{x_{\ell}}^{\mu_{\ell}}\big[ \cdots \mathbf{CK}_{x_2}^{\mu_2}[x_1^{j_1}]\big]\Big]v_{s}.
\end{align*}
As $\mathbf{CK}_{x_{\ell}}^{\mu_{\ell}}\big[ \cdots \mathbf{CK}_{x_2}^{\mu_2}[x_1^{j_1}]\big]v_{s}\in \mathcal{M}_{\rvert \mathbf{j}_{\ell-1}\rvert }(\mathbb{R}^{\ell};V)$, the result once again follows from Proposition 3.
\end{proof}
\begin{remark}
It is easily verified from \eqref{Basis-1} that for fixed $k$ and $s$, the dimension of the space spanned by the eigenfunctions $\Psi_{\mathbf{j}}^{s}(x_1,\ldots,x_{n})$ is $\binom{n+k-2}{k}$, as expected from the isomorphism between $\mathcal{P}_{k}(\mathbb{R}^{j-1})\otimes V$  and $\mathcal{M}_{k}(\mathbb{R}^{j};V)$.
\end{remark}

\begin{remark}
If desired, the wavefunctions \eqref{Psi-Exp} can be normalized to satisfy an orthogonality relation of the form
\begin{align*}
\smashoperator{\int_{\mathbb{S}^{n-1}}} |x_1|^{2\mu_1}\mathrm{d}x_1\cdots  |x_n|^{2\mu_n}\mathrm{d}x_{n}\;\big[\Psi_{\mathbf{\mathbf{j}'}}^{s'}(x_1,\ldots, x_{n})\big]^{*}\cdot \Psi_{\mathbf{j}}^{s}(x_1,\ldots, x_{n})=\delta_{\mathbf{j}\mathbf{j}'}\delta_{ss'},
\end{align*} 
where $*$ denotes complex conjugation and where $\cdot$ stands for an appropriately defined inner product on the representation space $V$. One should also take $e_i^{*}=-e_i$ for the Clifford generators and set $(ab)^{*}=b^{*}a^{*}$ for $a,b\in \mathcal{C}\ell_{n}$. The normalization coefficients were given in \cite{2015_DeBie&Genest&Vinet_ArXiv_1501.03108} for the $n=3$ case and are readily generalized for $n >3$. The orthogonality property can be proven directly by induction on $n$ and follows from the well-known orthogonality relation satisfied by the Jacobi polynomials \cite{2001_Andrews&Askey&Roy}.
\end{remark}
From the common eigenfunctions $\Psi_{\mathbf{j}}^{s}(x_1,\ldots,x_{n})$ of the Abelian subalgebra $\mathcal{Y}_{n}$, one can easily construct a set of joint eigenfunctions for the Abelian subalgebra $\pi \mathcal{Y}_{n}$, where $\pi$ is a permutation of $S_{n}$. Indeed, it is directly verified that $\pi \Psi_{\mathbf{j}}^{s}(x_1,\ldots,x_{n})$ are common eigenfunctions for $\pi \mathcal{Y}_{n}$, where the permutation $\pi$ acts on $\Psi_{\mathbf{j}}^{s}(x_1,\ldots,x_{n})$ by simultaneously permuting the variables $x_1,\ldots, x_{n}$, the Clifford generators $e_1,\ldots, e_{n}$ and the parameters $\mu_1,\ldots, \mu_{n}$. In light of this observation, consider the functions defined as
\begin{align*}
\Phi_{\mathbf{j'}}^{s}(x_1,\ldots,x_{n})=\pi \Psi_{\mathbf{j}'}^{s}(x_1,\ldots,x_{n})\qquad \pi=(123\cdots n),
\end{align*}
which are common eigenfunctions for $\mathcal{Z}_{n}$, as defined in \eqref{Abelian-Sub-2}. It is manifest that the functions $\Phi_{\mathbf{j'}}^{s}(x_1,\ldots,x_{n})$ also form a basis for $\mathcal{M}_{k}(\mathbb{R}^{n};V)$. As both $\Phi_{\mathbf{j'}}^{s}(x_1,\ldots,x_{n})$ and $\Psi_{\mathbf{j}}^{s}(x_1,\ldots,x_{n})$ can be normalized to form finite-dimensional orthonormal bases for $\mathcal{M}_{k}(\mathbb{R}^{n};V)$, there exists a unitary transformation, or intertwining operator, connecting these two bases. In \cite{2015_DeBie&Genest&Vinet_ArXiv_1501.03108}, the matrix elements of this intertwining operator were seen to involve the Bannai--Ito polynomials. In the $n$-dimensional case, one can expect that the unitary transformation between $\Phi_{\mathbf{j'}}^{s}(x_1,\ldots,x_{n})$ and $\Psi_{\mathbf{j}}^{s}(x_1,\ldots,x_{n})$ involves multivariate polynomials, which could be viewed as multivariate analogs of the Bannai--Ito polynomials. Since a theory of Bannai--Ito polynomials in many variables is yet to be worked out, we formulate the following as a conjecture.
\begin{conjecture}
The matrix elements of the intertwining operator between the basis functions $\Psi_{\mathbf{j}}^{s}(x_1,\ldots,x_{n})$ and $\Phi_{\mathbf{j'}}^{s}(x_1,\ldots,x_{n})$ are expressed in terms of orthogonal polynomials which could be taken to define the multivariate extension of the Bannai--Ito polynomials.
\end{conjecture}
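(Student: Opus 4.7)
My approach would build on the rank-one ($n=3$) result from \cite{2015_DeBie&Genest&Vinet_ArXiv_1501.03108}, where the analogous intertwining matrix between the two eigenbases of $\mathcal{A}_3$ was identified explicitly with the Bannai--Ito polynomials via the Racah problem for $\mathfrak{osp}(1|2)$. The plan is to reduce the $n$-dimensional intertwining operator to a nested composition of rank-one intertwiners by exploiting the chain $\mathcal{A}_3\subset\mathcal{A}_4\subset\cdots\subset\mathcal{A}_n$ noted in Remark 3 and the tower of $\mathbf{CK}$ extensions in \eqref{Tower}.

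Concretely, I would interpolate between $\mathcal{Y}_n$ and $\mathcal{Z}_n$ by a sequence of maximal abelian subalgebras $\mathcal{Y}_n=\mathcal{B}_0,\mathcal{B}_1,\ldots,\mathcal{B}_N=\mathcal{Z}_n$ in which consecutive subalgebras $\mathcal{B}_t,\mathcal{B}_{t+1}$ differ in exactly one generator, the two differing operators together with their anticommutator generating a copy of $\mathcal{A}_3$ inside $\mathcal{A}_n$ (with the remaining generators of $\mathcal{B}_t\cap\mathcal{B}_{t+1}$ acting as scalars on each joint eigenspace by Proposition \ref{Prop-Eigen}). Each elementary change of basis is then governed by a rank-one Bannai--Ito Racah problem whose structure constants $\omega_i$ are the fixed eigenvalues carried by the commuting generators. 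Invoking the $n=3$ result of \cite{2015_DeBie&Genest&Vinet_ArXiv_1501.03108}, the matrix elements of each elementary step are Bannai--Ito polynomials in one discrete variable, with parameters recursively shifted by the eigenvalues \eqref{C2}. The full intertwining matrix would then be the resulting nested product.

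The key technical steps are: (i) choose and verify a good interpolating chain $\{\mathcal{B}_t\}$ — here the most natural choice follows the cyclic permutation $\pi=(1\,2\,\cdots\,n)$ decomposed into adjacent transpositions — and check that each $\mathcal{B}_t$ remains maximal abelian and non-degenerate on $\mathcal{M}_k(\mathbb{R}^n;V)$; (ii) at each step, identify inside $\mathcal{A}_n$ the embedded $\mathcal{A}_3$, and compute the central values $\omega_1,\omega_2,\omega_3$ in terms of the frozen eigenvalues $\lambda_\ell(\mathbf{j})$ using \eqref{BI-Relations} and Lemma \ref{Lem-Relation}; (iii) apply the Bannai--Ito matrix-element formula of \cite{2015_DeBie&Genest&Vinet_ArXiv_1501.03108} to each step and assemble the product, then define the multivariate Bannai--Ito polynomials as the resulting nested product and verify orthogonality via Fubini applied to the product weight appearing in Remark 9.

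The main obstacle, and the reason the statement is framed as a conjecture rather than a theorem, is step (iii): a self-contained theory of multivariate Bannai--Ito polynomials — with bispectrality, explicit orthogonality measure, and a clean recurrence — has not been developed. In particular, different interpolating chains between $\mathcal{Y}_n$ and $\mathcal{Z}_n$ will produce different nested products, and proving that they agree amounts to establishing a Biedenharn--Elliott / pentagon identity for the Bannai--Ito $6j$-symbols of $\mathfrak{osp}(1|2)$. Until such an identity is proved and the family is shown to be characterized intrinsically (rather than only as matrix elements), one cannot close the argument into a theorem; conversely, the chain-of-$\mathcal{A}_3$'s decomposition sketched above is, in my view, the natural route both to prove the conjecture and to \emph{define} the multivariate Bannai--Ito polynomials in the first place.
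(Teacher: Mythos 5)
This statement is a \emph{conjecture}: the paper offers no proof of it, and explicitly says that a theory of multivariate Bannai--Ito polynomials ``is yet to be worked out.'' So there is nothing in the paper to compare your argument against line by line; what you have written is a research program, not a proof, and you say so yourself. Judged on those terms, your program is the natural one and is entirely consistent with the motivation the authors give: it mirrors the way Tratnik-type multivariate Racah/Wilson polynomials arise as overlaps between coupling schemes, with each elementary recoupling governed by a rank-one Racah problem --- here the $\mathfrak{osp}(1|2)$ Racah problem whose $6j$-symbols are the Bannai--Ito polynomials of \cite{2015_DeBie&Genest&Vinet_ArXiv_1501.03108}. Your identification of the obstruction (no intrinsic characterization of the multivariate family, and the need for a Biedenharn--Elliott-type consistency identity to show independence of the interpolating chain) is exactly why the statement cannot currently be promoted to a theorem.

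Two technical points you should tighten if you pursue this. First, the intermediate subalgebras $\mathcal{B}_t$ are more naturally indexed by maximal nested chains of subsets $A_2\subset A_3\subset\cdots\subset A_{n-1}$ of $[n]$ (with $|A_i|=i$), the generators being $\Gamma_{A_2},\ldots,\Gamma_{A_{n-1}}$, whose commutativity is guaranteed by Lemma \ref{Lemma-Comm}; not every such chain is of the form $\pi\mathcal{Y}_n$, and conversely decomposing $\pi=(12\cdots n)$ into adjacent transpositions does not by itself produce subalgebras differing in exactly one generator. An elementary move replaces a single $A_i$ by some $A_i'$ with $A_{i-1}\subset A_i'\subset A_{i+1}$, and you must verify that the two generators $\Gamma_{A_i}$, $\Gamma_{A_i'}$ together with $\Gamma_{(A_i\cup A_i')\setminus(A_i\cap A_i')}$ close into a rank-one Bannai--Ito algebra whose central elements are built from $\Gamma_{A_{i-1}}$, $\Gamma_{A_{i+1}}$ and the relevant singleton generators, using \eqref{BI-Relations}; this is plausible but is itself a computation that needs to be done. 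Second, to apply the rank-one result at each step you need the joint spectrum of each intermediate $\mathcal{B}_t$ on $\mathcal{M}_k^{s}(\mathbb{R}^n;V)$ to be simple (so that the elementary overlap is determined up to phase); Proposition \ref{Prop-Eigen} and \eqref{C2} give this for $\mathcal{Y}_n$, but the analogous nondegeneracy for arbitrary chains is asserted, not proved, in your sketch. Neither point undermines the plan, but both are genuine gaps between the sketch and even a conditional proof.
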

\section{Action of the symmetry algebra}
In this section, the action of the symmetry algebra $\mathcal{A}_{n}$ on the basis functions $\Psi_{\mathbf{j}}^{s}(x_1,\ldots,x_{n})$ is studied. A set of raising and lowering operators is introduced and their actions are computed explicitly. Using these results, it is observed that $\mathcal{A}_{n}$ acts irreducibly on the space $\mathcal{M}_{k}(\mathbb{R}^{d-1};V)$ of $k$-homogeneous Dunkl monogenics.

For $\ell\in [n-2]$, consider the set of operators $K_{\ell}^{\pm}$ defined as
\begin{multline}
\label{Kpm}
K_{\ell}^{\pm}=
\\
(\Gamma_{\{\ell+1,\ell+2\}}\pm \Gamma_{[\ell+2]\setminus\{\ell+1\}})(\Gamma_{[\ell+1]}\mp 1/2)-(\Gamma_{\{\ell+2\}}\pm \Gamma_{[\ell+2]})(\Gamma_{[\ell]}\pm \Gamma_{\{\ell+1\}}).
\end{multline}
For simplicity, we write $K_{\ell}^{\pm}=A_{\ell}^{\pm}+B_{\ell}^{\pm}$ with
\begin{align*}
A_{\ell}^{\pm}&=(\Gamma_{\{\ell+1,\ell+2\}}\pm \Gamma_{[\ell+2]\setminus\{\ell+1\}})(\Gamma_{[\ell+1]}\mp 1/2),
\\
B_{\ell}^{\pm}&=-(\Gamma_{\{\ell+2\}}\pm \Gamma_{[\ell+2]})(\Gamma_{[\ell]}\pm \Gamma_{\{\ell+1\}}).
\end{align*}
The operators  $K_{\ell}^{\pm}$ have nice covariance properties with respect to the Abelian subalgebra $\mathcal{Y}_{n}$. These properties are described in the following proposition.
\begin{proposition}
\label{prop9}
Let $j\in [n]$ and $\ell\in [n-2]$. One has
\begin{align*}
[K_{\ell}^{\pm}, \Gamma_{[j]}]=0,\qquad j\neq \ell+1.
\end{align*}
Conversely when $j=\ell+1$ one has
\begin{align*}
\{K_{\ell}^{\pm}, \Gamma_{[\ell+1]}\}=\pm K_{\ell}^{\pm}.
\end{align*}
\end{proposition}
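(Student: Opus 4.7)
The plan is to separate the commutation statement ($j\neq\ell+1$) from the anticommutation statement ($j=\ell+1$); in both, the only structural input needed will be Lemma~\ref{Lemma-Comm} together with the defining anticommutation relations \eqref{BI-Relations}, plus the central role played by $\Gamma_{\{\ell+1\}}=\mu_{\ell+1}$ and $\Gamma_{\{\ell+2\}}=\mu_{\ell+2}$.

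For $j\neq\ell+1$, I split into the subcases $j\leq\ell$ and $j\geq\ell+2$. A quick inspection shows that in either subcase each of the sets appearing as an index in the factors of $K_\ell^\pm$, namely $\{\ell+1,\ell+2\}$, $[\ell+2]\setminus\{\ell+1\}$, $[\ell+1]$, $[\ell+2]$ and $[\ell]$, is either contained in $[j]$, contains $[j]$, or disjoint from $[j]$. An application of Lemma~\ref{Lemma-Comm} then gives that $\Gamma_{[j]}$ commutes with every factor occurring in $A_\ell^\pm$ and $B_\ell^\pm$, and hence with $K_\ell^\pm$.

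For $j=\ell+1$, the key observation is that $\Gamma_{[\ell+1]}$ still commutes with every factor of $B_\ell^\pm$, because $[\ell]\subset[\ell+1]\subset[\ell+2]$; consequently $[B_\ell^\pm,\Gamma_{[\ell+1]}]=0$ and therefore $\{B_\ell^\pm,\Gamma_{[\ell+1]}\}=2B_\ell^\pm\Gamma_{[\ell+1]}$. The only substantial computation is then $\{X,\Gamma_{[\ell+1]}\}$ for $X=\Gamma_{\{\ell+1,\ell+2\}}\pm\Gamma_{[\ell+2]\setminus\{\ell+1\}}$: I will apply \eqref{BI-Relations} to each of the two pieces (the intersections are $\{\ell+1\}$ and $[\ell]$ respectively) and regroup using that $\Gamma_{[\ell]}$ commutes with $\Gamma_{[\ell+2]}$, which should yield the compact identity $\{X,\Gamma_{[\ell+1]}\}=\pm X-2B_\ell^\pm$.

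The final step uses the general identity $\{XY,Z\}=\{X,Z\}Y$, valid whenever $[Y,Z]=0$. Taking $Y=\Gamma_{[\ell+1]}\mp 1/2$, which commutes with $Z=\Gamma_{[\ell+1]}$, gives $\{A_\ell^\pm,\Gamma_{[\ell+1]}\}=\pm A_\ell^\pm-2B_\ell^\pm(\Gamma_{[\ell+1]}\mp 1/2)$; adding the $B_\ell^\pm$ contribution makes the $B_\ell^\pm\Gamma_{[\ell+1]}$ terms cancel and the $\pm B_\ell^\pm$ remainders assemble, leaving exactly $\pm(A_\ell^\pm+B_\ell^\pm)=\pm K_\ell^\pm$. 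The main obstacle is not any individual step but the careful tracking of the coupled upper/lower signs throughout the expansion of $\{X,\Gamma_{[\ell+1]}\}$; no algebraic input beyond \eqref{BI-Relations} and Lemma~\ref{Lemma-Comm} should be required.
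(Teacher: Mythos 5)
Your argument is correct and is essentially identical to the paper's own proof: the $j\neq\ell+1$ case is dispatched by Lemma~\ref{Lemma-Comm} applied factorwise, and the $j=\ell+1$ case rests on the same computation of $\{\Gamma_{\{\ell+1,\ell+2\}}\pm\Gamma_{[\ell+2]\setminus\{\ell+1\}},\Gamma_{[\ell+1]}\}=\pm(\Gamma_{\{\ell+1,\ell+2\}}\pm\Gamma_{[\ell+2]\setminus\{\ell+1\}})-2B_\ell^{\pm}$ via \eqref{BI-Relations}, followed by the cancellation of the $B_\ell^{\pm}\Gamma_{[\ell+1]}$ terms. No gaps; the sign bookkeeping you flag as the main hazard indeed works out as you describe.
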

\begin{proof}
The first relation follows immediately from the definition of $K_{\ell}^{\pm}$ and from Lemma \ref{Lemma-Comm}. For the second identity, one has
\begin{align*}
\{K_{\ell}^{\pm},\Gamma_{[\ell+1]}\}&=2B_{\ell}^{\pm}\Gamma_{[\ell+1]}+\{\Gamma_{\{\ell+1,\ell+2\}}\pm \Gamma_{[\ell+2]\setminus\{\ell+1\}}, \Gamma_{[\ell+1]}\}(\Gamma_{[\ell+1]}\mp 1/2)
\\
&=2B_{\ell}^{\pm}\Gamma_{[\ell+1]}\pm A_{\ell}^{\pm}+2(\Gamma_{[\ell]}\pm \Gamma_{\{\ell+1\}})(\Gamma_{\{\ell+2\}}\pm\Gamma_{[\ell+2]})(\Gamma_{[\ell+1]}\mp 1/2)
\\
&=2B_{\ell}^{\pm}\Gamma_{[\ell+1]}\pm A_{\ell}^{\pm}-2B_{\ell}^{\pm}(\Gamma_{[\ell+1]}\mp 1/2)=\pm K_{\ell}^{\pm},
\end{align*}
where the relations \eqref{BI-Relations} were used.
\end{proof}
In order to characterize the action of the operators $K_{\ell}^{\pm}$ on the basis functions $\Psi_{\mathbf{j}}^{s}(x_1,\ldots,x_{n})$, we shall need the following technical Lemma.
\begin{lemma}
One has the identity
\begin{align}
\label{I1}
\Gamma_{\{\ell+1,\ell+2\}}^{2}+\Gamma_{[\ell+2]\setminus \{\ell+1\}}^2=\Gamma_{[\ell+2]}^2-\Gamma_{[\ell+1]}^2+\Gamma_{[\ell]}^2+\Gamma_{\{\ell+2\}}^2+\Gamma_{\{\ell+1\}}^2-1/4.
\end{align}
\end{lemma}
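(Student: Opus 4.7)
The plan is to reduce everything to the Casimir identity of Lemma 9, which expresses each $\Gamma_A^{2}$ as $Q_A$ plus explicit scalar corrections. For each of the sets
\[
A\in\bigl\{\{\ell+1,\ell+2\},\;[\ell+2]\setminus\{\ell+1\},\;[\ell+2],\;[\ell+1],\;[\ell]\bigr\}
\]
I would apply Lemma 9 to write
\[
\Gamma_{A}^{2}=Q_{A}-(|A|-2)\sum_{i\in A}\mu_{i}^{2}+\tfrac{1}{8}(|A|-1)(|A|-2),
\]
and I would use the Remark after Lemma 1 to identify $\Gamma_{\{\ell+1\}}^{2}=\mu_{\ell+1}^{2}$ and $\Gamma_{\{\ell+2\}}^{2}=\mu_{\ell+2}^{2}$.

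Next I would handle the $Q_A$ contributions by a simple combinatorial observation on the definition \eqref{QA}: $Q_{A}$ is the sum of $\Gamma_{\{i,j\}}^{2}$ over all $2$-subsets of $A$, so
\[
Q_{[\ell+2]}-Q_{[\ell+1]}=\sum_{i=1}^{\ell+1}\Gamma_{\{i,\ell+2\}}^{2},\qquad
Q_{[\ell+2]\setminus\{\ell+1\}}-Q_{[\ell]}=\sum_{i=1}^{\ell}\Gamma_{\{i,\ell+2\}}^{2}.
\]
Subtracting gives $Q_{[\ell+2]}-Q_{[\ell+1]}-Q_{[\ell+2]\setminus\{\ell+1\}}+Q_{[\ell]}=\Gamma_{\{\ell+1,\ell+2\}}^{2}$, which is exactly the $Q$-contribution on the LHS of the identity, so all the $Q_A$'s cancel.

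It then remains to verify that the scalar corrections collapse to $\mu_{\ell+1}^{2}+\mu_{\ell+2}^{2}-1/4$, which I would do by two bookkeeping checks. For the $\mu^{2}$ part, writing $S=\mu_{1}^{2}+\cdots+\mu_{\ell}^{2}$, the coefficient of $S$ is $-(\ell-1)+\ell-(\ell-1)+(\ell-2)=0$, while the coefficients of $\mu_{\ell+1}^{2}$ and $\mu_{\ell+2}^{2}$ each reduce to $+1$. For the constants, a direct calculation gives
\[
\tfrac{1}{8}\bigl[\ell(\ell-1)-(\ell+1)\ell+\ell(\ell-1)-(\ell-1)(\ell-2)\bigr]=-\tfrac{1}{4}.
\]

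The computation is entirely routine once the right reductions are in place. The only mildly delicate step is the combinatorial telescoping of the $Q_A$'s: one must be careful that the four sets $[\ell+2],[\ell+1],[\ell+2]\setminus\{\ell+1\},[\ell]$ are arranged so that their $2$-subset families overlap in exactly the right way to leave $\Gamma_{\{\ell+1,\ell+2\}}^{2}$ as the residue, thereby matching $Q_{\{\ell+1,\ell+2\}}$ on the LHS. Everything else is arithmetic.
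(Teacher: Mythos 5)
Your proposal is correct and follows essentially the same route as the paper: the paper's own proof also reduces the identity to Lemma 9 (the relation between $Q_A$ and $\Gamma_A^2$) combined with the telescoping of the $2$-subset sums defining the $Q_A$'s, merely packaging the combinatorics as an add-and-subtract of $\sum_{i\in[\ell]}\Gamma_{\{i,\ell+1\}}^2$ and $\Gamma_{[\ell]}^2$ rather than converting all five squares to $Q$'s at once. Your bookkeeping of the $\mu_i^2$ and constant corrections checks out.
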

\begin{proof}
The proof follows from the application of Lemma \ref{Lem-Relation}. One first writes
\begin{multline*}
\Gamma_{\{\ell+1,\ell+2\}}^{2}+\Gamma_{[\ell+2]\setminus \{\ell+1\}}^2=
\\
(\Gamma_{\{\ell+1,\ell+2\}}^2+\sum_{i\in [\ell]}\Gamma_{\{i,\ell+1\}}^2+\Gamma_{[\ell+2]\setminus \{\ell+1\}}^2)-(\sum_{i\in [\ell]}\Gamma_{\{i,\ell+1\}}^2+\Gamma_{[\ell]}^2)+\Gamma_{[\ell]}^2.
\end{multline*}
To get the result, one first uses \eqref{Relation} on the first two terms. Applying \eqref{Relation} again on the result and simplifying the expression yields \eqref{I1}.
\end{proof}
Quite strikingly, the squares of the operators $K_{\ell}^{\pm}$ have an exact factorization.
\begin{proposition}
\label{Prop-Square}
One has
\begin{multline*}
[K_{\ell}^{\pm}]^2=
(\Gamma_{[\ell+2]} -\Gamma_{[\ell+1]} \pm \Gamma_{\{\ell+2\}} \pm 1/2)
(\Gamma_{[\ell+2]} + \Gamma_{[\ell+1]} \pm \Gamma_{\{\ell+2\}} \mp 1/2)
\\ \times
(\Gamma_{[\ell]} - \Gamma_{[\ell+1]} \pm \Gamma_{\{\ell+1\}} \pm 1/2)
(\Gamma_{[\ell]} + \Gamma_{[\ell+1]} \pm \Gamma_{\{\ell+1\}} \mp 1/2).
\end{multline*}
\end{proposition}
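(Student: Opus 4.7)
The plan is to introduce concise notation that reduces $[K_\ell^\pm]^2$ to a product $(Z^2-Y^2)(W^2-Y^2)$ of two commuting differences of squares, which then factors into the four linear terms displayed in the statement. Set
\begin{align*}
X &= \Gamma_{\{\ell+1,\ell+2\}}\pm\Gamma_{[\ell+2]\setminus\{\ell+1\}}, & Y &= \Gamma_{[\ell+1]}\mp 1/2,\\
Z &= \Gamma_{\{\ell+2\}}\pm\Gamma_{[\ell+2]}, & W &= \Gamma_{[\ell]}\pm\Gamma_{\{\ell+1\}},
\end{align*}
so that $K_\ell^\pm=XY-ZW$. Using Remark 2 (the singletons $\Gamma_{\{\ell+1\}}$, $\Gamma_{\{\ell+2\}}$ are central) together with Lemma~\ref{Lemma-Comm} applied to the pairwise nested or disjoint inclusions among $[\ell]$, $[\ell+1]$, $[\ell+2]$, $\{\ell+1,\ell+2\}$ and $[\ell+2]\setminus\{\ell+1\}$, one verifies that $Z$ and $W$ commute with one another and with each of $X$ and $Y$; in particular $[XY,ZW]=0$.

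The heart of the argument consists of two algebraic identities derived from the Bannai--Ito relations. First, applying \eqref{BI-Relations} twice---once to the pair of subsets $(\{\ell+1,\ell+2\},[\ell+1])$ and once to $([\ell+2]\setminus\{\ell+1\},[\ell+1])$---and combining the two anticommutators, the terms proportional to $X$ cancel out against the contribution of the $\mp 1/2$ shift in $Y$, leaving
\[
\{X,Y\}=2\,ZW,\qquad\text{equivalently,}\qquad K_\ell^\pm=\frac{1}{2}[X,Y].
\]
Second, expanding $X^2$ and invoking the identity \eqref{I1} together with \eqref{BI-Relations} for the pair $(\{\ell+1,\ell+2\},[\ell+2]\setminus\{\ell+1\})$, one finds after simplification
\[
X^2=Z^2+W^2-Y^2.
\]

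With these two identities in hand, the remainder is a short calculation in a commutative environment. From $\{X,Y\}=2ZW$ one has $YX=2ZW-XY$, whence $(XY)^2=X(YX)Y=2(ZW)(XY)-X^2Y^2$. Combining this with the expansion $(XY-ZW)^2=(XY)^2-2(ZW)(XY)+(ZW)^2$, which is valid since $[XY,ZW]=0$, gives
\[
[K_\ell^\pm]^2=(ZW)^2-X^2Y^2=Z^2W^2-(Z^2+W^2-Y^2)Y^2=(Z^2-Y^2)(W^2-Y^2),
\]
where the last equality uses the mutual commutativity of $Y^2$, $Z^2$ and $W^2$. Since $Y$ commutes with both $Z$ and $W$, each difference of squares factors further as $Z^2-Y^2=(Z-Y)(Z+Y)$ and $W^2-Y^2=(W-Y)(W+Y)$. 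Substituting back the definitions of $Y$, $Z$, $W$ identifies these four commuting factors with those displayed in the proposition: for the $+$ case the matching is immediate, whereas in the $-$ case $Z\pm Y$ coincide with the negatives of two of the stated factors, leaving the overall product unchanged.

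I expect the main obstacle to be the verification of $\{X,Y\}=2ZW$: one must apply \eqref{BI-Relations} to two pairs of subsets with different intersection and symmetric-difference patterns, and then track the $\pm$ signs carefully to witness the cancellation that produces a clean product of two commuting factors on the right. The companion identity $X^2=Z^2+W^2-Y^2$ involves the same kind of set-theoretic bookkeeping with one invocation of \eqref{I1}. Once both are secured, the factorization of $[K_\ell^\pm]^2$ follows by elementary commutative algebra.
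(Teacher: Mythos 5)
Your proposal is correct and follows essentially the same route as the paper: with $A_\ell^{\pm}=XY$ and $B_\ell^{\pm}=-ZW$, your identity $\{X,Y\}=2ZW$ is exactly the paper's observation that $(\Gamma_{[\ell+1]}\mp 1/2)(\Gamma_{\{\ell+1,\ell+2\}}\pm\Gamma_{[\ell+2]\setminus\{\ell+1\}})=-A_\ell^{\pm}-2B_\ell^{\pm}$, and your relation $X^2=Z^2+W^2-Y^2$ is the paper's application of \eqref{I1}, after which both arguments reduce $[K_\ell^{\pm}]^2$ to $(Z^2-Y^2)(W^2-Y^2)$ and factor. The only addition is your explicit remark on the sign cancellation in the lower-sign case, which the paper leaves implicit.
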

\begin{proof}
Since $A_{\ell}^{\pm}$ and $B_{\ell}^{\pm}$ commute with one another, one has
\begin{align*}
(K_{\ell}^{\pm})^2=(A_{\ell}^{\pm})^2+(B_{\ell}^{\pm})^2+2A_{\ell}^{\pm} B_{\ell}^{\pm}.
\end{align*}
Upon observing that one has
\begin{align*}
(\Gamma_{[\ell+1]}\mp 1/2)(\Gamma_{\{\ell+1,\ell+2\}}\pm \Gamma_{[\ell+2]\setminus \{\ell+1\}})=-A_{\ell}^{\pm}-2B_{\ell}^{\pm},
\end{align*}
which follows directly from the commutation relations \eqref{BI-Relations}, one can write
\begin{align}
\label{TMP1}
[K_{\ell}^{\pm}]^2=(B_{\ell}^{\pm})^{2}-(\Gamma_{\{\ell+1,\ell+2\}}\pm \Gamma_{[\ell+2]\setminus \{\ell+1\}})^2 (\Gamma_{[\ell+1]}\mp 1/2)^2.
\end{align}
Using \eqref{I1}, one has
\begin{multline*}
(\Gamma_{\{\ell+1,\ell+2\}}\pm \Gamma_{[\ell+2]\setminus \{\ell+1\}})^2=(\Gamma_{\{\ell+2\}}\pm \Gamma_{[\ell+2]})^2
\\
+(\Gamma_{[\ell]}\pm \Gamma_{\{\ell+1\}})^2-(\Gamma_{[\ell+1]}\mp 1/2)^2.
\end{multline*}
Upon inserting the above expression in \eqref{TMP1}, one gets
\begin{multline*}
[K_{\ell}^{\pm}]^2=\left[(\Gamma_{[\ell+2]}\pm\Gamma_{\{\ell+2\}})^2-(\Gamma_{[\ell+1]}\mp 1/2)^2\right]
\\
\times \left[(\Gamma_{[\ell]}\pm \Gamma_{\{\ell+1\}})^2-(\Gamma_{[\ell+1]}\mp 1/2)^2\right].
\end{multline*}
Factoring each term gives the anticipated result.
\end{proof}
\begin{lemma}
The action of $(K_{\ell}^{\pm})^2$ on the $\Psi_{\mathbf{j}}^{s}(x_1,\ldots,x_{n})$ is given by
\begin{align*}
\left( K_\ell^{\pm} \right)^2 \Psi_{\mathbf{j}}^{s}(x_1,\ldots,x_{n}) =  \alpha_{\pm}^{\ell}(\mathbf{j}) \Psi_{\mathbf{j}}^{s}(x_1,\ldots,x_{n}),
\end{align*}
where
\begin{align}
\label{Coef}
\begin{aligned}
\alpha_{\pm}^{\ell}(\mathbf{j})&= 
(\lambda_{\ell+2}(\mathbf{j})  - \lambda_{\ell+1}(\mathbf{j}) \pm \mu_{\ell+2}  \pm 1/2)
(\lambda_{\ell+2}(\mathbf{j})  + \lambda_{\ell+1}(\mathbf{j}) \pm \mu_{\ell+2} \mp 1/2 )\\
& \quad \times
(\lambda_{\ell}(\mathbf{j}) - \lambda_{\ell+1}(\mathbf{j})\pm \mu_{\ell+1}  \pm 1/2)
(\lambda_{\ell}(\mathbf{j}) + \lambda_{\ell+1}(\mathbf{j}) \pm \mu_{\ell+1} \mp 1/2).
\end{aligned}
\end{align}
The coefficients $\alpha_{\pm}^{\ell}(\mathbf{j})$ are non-negative for all admissible $\mathbf{j}$. In particular, they are non-zero except in the following cases
\begin{align}
\label{vanish}
\begin{cases}
j_{\ell+1}=0\;\text{and}\;|\mathbf{j}_{\ell}|\;\text{even} & \Rightarrow\alpha_{-}^{\ell}(\mathbf{j})=0
\\
j_{\ell+1}=0\;\text{and}\;|\mathbf{j}_{\ell}|\;\text{odd} & \Rightarrow\alpha_{+}^{\ell}(\mathbf{j})=0
\\
j_{\ell}=0\;\text{and}\;|\mathbf{j}_{\ell+1}|\;\text{even} & \Rightarrow\alpha_{+}^{\ell}(\mathbf{j})=0
\\
j_{\ell}=0\;\text{and}\;|\mathbf{j}_{\ell+1}|\;\text{odd} & \Rightarrow\alpha_{-}^{\ell}(\mathbf{j})=0
\end{cases}
\end{align}
\end{lemma}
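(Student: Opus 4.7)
The plan is to deduce the result as a corollary of Proposition \ref{Prop-Square} (the factorization of $[K_\ell^\pm]^2$) together with Proposition \ref{Prop-Eigen} (the eigenvalues of the Abelian chain on $\Psi_{\mathbf{j}}^s$). The key observation is that each of the four linear factors appearing in Proposition \ref{Prop-Square} is built solely from operators $\Gamma_{[\ell]}$, $\Gamma_{[\ell+1]}$, $\Gamma_{[\ell+2]}$ of the commutative subalgebra $\mathcal{Y}_n$, together with the scalar generators $\Gamma_{\{\ell+1\}}=\mu_{\ell+1}$ and $\Gamma_{\{\ell+2\}}=\mu_{\ell+2}$ provided by \eqref{Obs}. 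By Lemma \ref{Lemma-Comm} these four factors mutually commute, and by Proposition \ref{Prop-Eigen} they act diagonally on $\Psi_{\mathbf{j}}^s$ via $\Gamma_{[j]}\mapsto \lambda_j(\mathbf{j})$. The eigenvalue formula \eqref{Coef} then follows immediately upon substituting these scalar values into the factorization.

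For non-negativity, I would partition the four-factor product into an ``upper'' pair involving $\lambda_{\ell+2}$, $\lambda_{\ell+1}$, $\mu_{\ell+2}$ and a ``lower'' pair involving $\lambda_{\ell}$, $\lambda_{\ell+1}$, $\mu_{\ell+1}$. Using the explicit form
\[
\lambda_j(\mathbf{j}) = (-1)^{|\mathbf{j}_{j-1}|}\bigl(|\mathbf{j}_{j-1}| + \gamma_{[j]} - 1/2\bigr)
\]
together with the identity $\gamma_{[j+1]}-\gamma_{[j]} = 1/2+\mu_{j+1}$, a short case analysis on the parities of $|\mathbf{j}_{\ell-1}|$, $|\mathbf{j}_{\ell}|$ and $|\mathbf{j}_{\ell+1}|$ reveals that in every configuration the two factors of a given pair either share the same sign or one of them vanishes. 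Consequently each pair contributes a non-negative quantity and so does their product.

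The vanishing cases listed in \eqref{vanish} fall out of the same enumeration. The upper pair can only collapse when $j_{\ell+1}=0$: this forces the signs of $\lambda_{\ell+1}$ and $\lambda_{\ell+2}$ to coincide, so $\lambda_{\ell+2}-\lambda_{\ell+1}$ reduces to $\pm(1/2+\mu_{\ell+2})$, and exactly one of the two sign choices in $K_\ell^\pm$ then causes it to cancel against $\pm\mu_{\ell+2}\pm 1/2$; which choice is selected is dictated by the parity of $|\mathbf{j}_{\ell}|$. An identical argument with $(\lambda_\ell,\lambda_{\ell+1},\mu_{\ell+1})$ in place of $(\lambda_{\ell+1},\lambda_{\ell+2},\mu_{\ell+2})$ handles the lower pair and delivers the remaining two conditions, upon tracking which parity of the relevant partial sum triggers each sign.

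The main obstacle is simply the bookkeeping across the handful of parity subcases; each reduces to verifying a one-line arithmetic identity via $\gamma_{[j+1]}-\gamma_{[j]} = 1/2+\mu_{j+1}$, so no new algebraic input is required beyond Propositions \ref{Prop-Square} and \ref{Prop-Eigen}.
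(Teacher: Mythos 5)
Your proposal is correct and follows essentially the same route as the paper: the eigenvalue formula is obtained by combining the factorization of $[K_{\ell}^{\pm}]^2$ from Proposition \ref{Prop-Square} with the diagonal action of $\Gamma_{[\ell]},\Gamma_{[\ell+1]},\Gamma_{[\ell+2]}$ and the scalars $\Gamma_{\{\ell+1\}}=\mu_{\ell+1}$, $\Gamma_{\{\ell+2\}}=\mu_{\ell+2}$ from Proposition \ref{Prop-Eigen} and \eqref{Obs}, after which the vanishing conditions and non-negativity are read off from the explicit eigenvalues \eqref{C2} together with the integrality of the $j_i$ and the positivity of the $\mu_i$. The paper's proof is a terse statement of exactly this; your parity case analysis merely spells out the "direct" verification it invokes.
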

\begin{proof}
The eigenvalue follows from the combination of Propositions \ref{Prop-Eigen} and \ref{Prop-Square}. The vanishing of the coefficients $\alpha_{\pm}^{\ell}(\mathbf{j})$ is determined directly with the help of the formula \eqref{C2} for the eigenvalues. The fact that they are positive everywhere else follows from the fact that the indices $j_{\ell}$ are non-negative integers and from the assumption that $\mu_i>0$.
\end{proof}

We now derive the action of the operators $K_{\ell}^{\pm}$ on the basis functions. To describe the result, we introduce the following notation. Let $\mathbf{h}_i=(0,\ldots, 1, -1,\ldots,0)$ be the vector with $n-1$ entries that has a $1$ at position $i$, a $-1$ at position $i+1$, and zeros elsewhere. We also define $\bar{\mathbf{h}}_{n-2}=(0,\ldots,1,0)$.

\begin{proposition}  The action of the operators $K_{\ell}^{\pm}$ on the functions $\Psi_{\mathbf{j}}^{s}(x_1,\ldots,x_{n})$ of $\mathcal{M}_{k}(\mathbb{R}^{n};V)$ is as follows. For $\ell\in \{1,2,\ldots,n-3\}$, one has
\begin{align}
\label{rr}
K_{\ell}^{\pm}\Psi_{\mathbf{j}}^{s}(x_1,\ldots,x_{n})\propto
\begin{cases}
\Psi_{\mathbf{j}\pm h_{\ell}}^{s}(x_1,\ldots, x_{n}) & \text{$|\mathbf{j}_{l}|$ is odd}
\\
\Psi_{\mathbf{j}\mp h_{\ell}}^{s}(x_1,\ldots,x_{n}) & \text{$|\mathbf{j}_{l}|$ is even}
\end{cases}
\end{align}
while for $\ell=n-2$, one has
\begin{align}
\label{rr2}
K_{n-2}^{\pm}\Psi_{\mathbf{j}}^{s}(x_1,\ldots,x_{n})\propto
\begin{cases}
\Psi_{\mathbf{j}\pm \bar{h}_{n-2}}^{s}(x_1,\ldots, x_{n}) & \text{$|\mathbf{j}_{n-2}|$ is odd}
\\
\Psi_{\mathbf{j}\mp \bar{h}_{n-2}}^{s}(x_1,\ldots,x_{n}) & \text{$|\mathbf{j}_{n-2}|$ is even}
\end{cases}.
\end{align}
\end{proposition}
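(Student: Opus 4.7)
My plan is to combine the covariance properties of Proposition~\ref{prop9} with the eigenvalue formula \eqref{C2} of Proposition~\ref{Prop-Eigen} in order to show that $K_\ell^\pm\Psi_\mathbf{j}^s$ is a joint eigenfunction of $\mathcal{Y}_n$ whose eigenvalues pick out a unique shifted multi-index $\mathbf{j}'$. First I compute the action of each generator $\Gamma_{[j]}$ ($j=2,\ldots,n-1$) of $\mathcal{Y}_n$ on $K_\ell^\pm\Psi_\mathbf{j}^s$. For $j\neq \ell+1$, $[K_\ell^\pm,\Gamma_{[j]}]=0$ yields
\begin{align*}
\Gamma_{[j]}\,K_\ell^\pm\Psi_\mathbf{j}^s = \lambda_j(\mathbf{j})\,K_\ell^\pm\Psi_\mathbf{j}^s,
\end{align*}
whereas $\{K_\ell^\pm,\Gamma_{[\ell+1]}\}=\pm K_\ell^\pm$ rearranges to $\Gamma_{[\ell+1]}K_\ell^\pm = K_\ell^\pm(\pm 1-\Gamma_{[\ell+1]})$, so
\begin{align*}
\Gamma_{[\ell+1]}\,K_\ell^\pm\Psi_\mathbf{j}^s = \bigl(\pm 1 - \lambda_{\ell+1}(\mathbf{j})\bigr)\,K_\ell^\pm\Psi_\mathbf{j}^s.
\end{align*}

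To identify the shifted multi-index $\mathbf{j}'$ I solve $\lambda_{\ell+1}(\mathbf{j}') = \pm 1 - \lambda_{\ell+1}(\mathbf{j})$ using $\lambda_{\ell+1}(\mathbf{j})=(-1)^{|\mathbf{j}_\ell|}(|\mathbf{j}_\ell|+\gamma_{[\ell+1]}-1/2)$; a short parity analysis gives $|\mathbf{j}'_\ell|=|\mathbf{j}_\ell|\pm 1$ for $|\mathbf{j}_\ell|$ odd and $|\mathbf{j}'_\ell|=|\mathbf{j}_\ell|\mp 1$ for $|\mathbf{j}_\ell|$ even. All other partial sums satisfy $|\mathbf{j}'_j|=|\mathbf{j}_j|$, so for $\ell\le n-3$ the shift must preserve the total degree $k$ and can only be $\pm h_\ell$. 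For $\ell=n-2$, the entry $j_{n-1}$ is pinned by the central eigenvalue of $\Gamma_{[n]}$ (equivalently by the constraint $\sum_i j_i = k$), so the change is carried entirely by $j_{n-2}$, giving the $\bar{h}_{n-2}$-shift of \eqref{rr2}. At this point I can assert $K_\ell^\pm\Psi_\mathbf{j}^s\in\mathrm{span}\{\Psi_{\mathbf{j}'}^{s'}:s'\in I\}$.

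The main obstacle is to upgrade this containment to a bona fide proportionality with the same $s$-label. The eigenvalue analysis above is blind to $s$ because each $\lambda_j(\mathbf{j})$ is $s$-independent, so one must invoke the explicit construction \eqref{Basis-1}. The seed vector $v_s$ enters only as a rightmost tensor factor of a chain of CK-extensions and $\underline{x}_{[j]}$-powers, all of which are differential-difference operators with Clifford coefficients; since $K_\ell^\pm$ belongs to the same class of left-acting operators, its action on the Clifford-polynomial part proceeds uniformly in $s$, giving $K_\ell^\pm\Psi_\mathbf{j}^s = \Phi_{\mathbf{j}'}\, v_s$ for a $\mathbf{j}'$-dependent operator $\Phi_{\mathbf{j}'}$ independent of $s$. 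Matching $\Phi_{\mathbf{j}'}\,v_s$ against $\Psi_{\mathbf{j}'}^s$ in the joint eigenspace singled out by the fixed $\mathbf{j}'$ and $s$ then yields the proportionality. The nontrivial content of this last step is the dimension count for joint eigenspaces of $\mathcal{Y}_n$ within $\mathcal{M}_k(\mathbb{R}^n;V)$, which can be verified by induction on $n$ via the CK isomorphism, starting from the $n=3$ case of \cite{2015_DeBie&Genest&Vinet_ArXiv_1501.03108}.
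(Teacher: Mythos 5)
Your proof follows essentially the same route as the paper's: you use the covariance relations of Proposition~\ref{prop9} to show that $K_{\ell}^{\pm}\Psi_{\mathbf{j}}^{s}$ is a joint eigenfunction of $\mathcal{Y}_{n}$ whose only altered eigenvalue is that of $\Gamma_{[\ell+1]}$, and then identify the shifted multi-index from the parity structure of the eigenvalue formula \eqref{C2}, treating $\ell=n-2$ separately because $j_{n-1}$ is pinned by the total degree. The paper's proof stops there; your closing paragraph on the $\dim V$-fold degeneracy in the label $s$ (which the $\Gamma_{A}$-eigenvalue data alone cannot resolve) addresses a point the published argument leaves implicit, and your observation that $K_{\ell}^{\pm}$ acts by a Clifford-valued left-multiplication/differential operator independent of the seed vector $v_{s}$ is the right mechanism for closing it.
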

\begin{proof}
Consider the case where $\ell\in [n-3]$. Since $K_{\ell}^{\pm}$ commutes with $\Gamma_{[i]}$ for $i=1,\ldots,\ell$, it preserves the quantum numbers $j_1, j_2,\ldots, j_{\ell-1}$. Since $K_{\ell}^{\pm}$ also commutes with $\Gamma_{k}$ for $k=\ell+2,\ldots, n$, it must preserve the sum $j_{\ell}+j_{\ell+1}$, the individual quantum numbers $j_{\ell+2},\ldots, j_{n-2}$ and the total degree $k$ which implicitly appears in the quantum number $j_{n-1}$. Thus, $K_{\ell}^{\pm}$ can only act on $\Psi_{\mathbf{j}}^{s}(x_1,\ldots,x_{n})$ by changing $j_{\ell}$ and $j_{\ell+1}$ while preserving the value of their sum. Furthermore, it follows from Propositions \ref{Prop-Eigen} and \ref{prop9}  that 
\begin{align*}
\Gamma_{[\ell+1]} K_{\ell}^{\pm}\Psi_{\mathbf{j}}^{s}(x_1,\ldots, x_{n})&=\Big[\{\Gamma_{[\ell+1]}, K_{\ell}^{\pm}\}-K_{\ell}^{\pm}\Gamma_{[\ell+1]}\Big]\Psi_{\mathbf{j}}^{s}(x_1,\ldots, x_{n})
\\
&=\big[\pm 1-\lambda_{\ell+1}(\mathbf{j})\big] K_{\ell}^{\pm}\Psi_{\mathbf{j}}^{s}(x_1,\ldots, x_{n}).
\end{align*}
The above equation means that $K_{\ell}^{\pm}\Psi_{\mathbf{j}}^{s}(x_1,\ldots, x_{n})$ is an eigenfunction of the operator $\Gamma_{[\ell+1]}$ with eigenvalue $[\pm 1-\lambda_{\ell+1}(\mathbf{j})]$.  Upon examining the explicit form \eqref{C2} of the eigenvalues $\lambda_{\ell+1}(\mathbf{j})$ while taking into account that only $j_{\ell}$ and $j_{\ell+1}$ are allowed to change, one writes
\begin{align*}
\big[1-\lambda_{\ell+1}(\mathbf{j})\big]&=
\begin{cases}
\lambda_{\ell+1}(\mathbf{j}+\bar{h}_{\ell}) & \text{$|\mathbf{j}_{\ell}|$ odd}
\\
\lambda_{\ell+1}(\mathbf{j}-\bar{h}_{\ell}) & \text{$|\mathbf{j}_{\ell}|$ even}
\end{cases}
\\
-\big[1+\lambda_{\ell+1}(\mathbf{j})\big]&=
\begin{cases}
\lambda_{\ell+1}(\mathbf{j}-\bar{h}_{\ell}) & \text{$|\mathbf{j}_{\ell}|$ odd}
\\
\lambda_{\ell+1}(\mathbf{j}+\bar{h}_{\ell}) & \text{$|\mathbf{j}_{\ell}|$ even}
\end{cases}
\end{align*}
which yields \eqref{rr}. The proof of \eqref{rr2} proceeds similarly, with the difference of notation coming from the special role played by the component $j_{n-1}$.
\end{proof}
\begin{remark}
The exact proportionality factor in \eqref{rr} and \eqref{rr2} is not needed for our purposes. This factor can be evaluated explicitly in terms of the coefficients $\alpha_{\pm}^{\ell}(\mathbf{j})$ and is non-zero in view of Lemma 15. This could be done, for example, by demanding that the operators $K_{\ell}^{\pm}$ be skew-hermitian in the orthonormalized basis arising from the wavefunctions $\Psi_{\mathbf{j}}^{s}(x_1,\ldots, x_{n})$; this is a natural choice if one wishes to interpret the symmetry operators $\Gamma_{A}$ as quantum mechanical observables.
\end{remark}
\begin{remark} 
As can be seen from \eqref{C1}, \eqref{rr}, and \eqref{rr2}, the quantum number $s$ is not associated to any of the symmetry operators $\Gamma_{A}$. Since $s$ is tied to the choice of a basis for the representation space $V$, the operator associated to this quantum number depends on that choice. For example, one could take the vectors $v_{s}$ as eigenvectors of $e_1e_2$ in the representation $V$. The corresponding symmetry operator would then be $P=e_1e_2 r_1 r_2$. Note that the presence of $r_1r_2$ is necessary to ensure that $P$ commutes with every generator of $\mathcal{Y}_{n}$.
\end{remark}
From the actions \eqref{rr} and \eqref{rr2} of the raising and lowering operators, the expression \eqref{Kpm} and the eigenvalue equations of Proposition \ref{Prop-Eigen}, it is apparent that one can reconstruct the action of any generator $\Gamma_{A}$ on the basis functions $\Psi_{\mathbf{j}}^{s}(x_1,\ldots,x_{n})$. However, since the relation between raising/lowering operators is quadratic, there is no systematic way of doing so and the end result is often very involved. We now present the last proposition for this section.
\begin{proposition}
For a fixed $s$, the higher rank Bannai--Ito algebra $\mathcal{A}_{n}$ acts irreducibly on the space
\begin{align*}
\mathcal{M}_{k}^{s}(\mathbb{R}^{n};V)=\mathrm{Span}\{\Psi_{\mathbf{j}}^{s}(x_1,\ldots,x_{n})\},
\end{align*}
with $\mathbf{j}=(j_1,\ldots,j_{n-1})$,  where $j_1,\ldots, j_{n-1}$ are non-negative integers with $j_{n-1}$ defined as $j_{n-1}=k-\sum_{i=1}^{n-2}j_i$ and $\sum_{i=1}^{n-2}j_{i}\leq k$. The space of $k$-homogeneous Dunkl monogenics in $\mathbb{R}^{n}$ splits in $\dim V$ copies of this irreducible subspace
\begin{align*}
\mathcal{M}_{k}(\mathbb{R}^{n};V)=\bigoplus_{s=1}^{\dim V}\mathcal{M}_{k}^{s}(\mathbb{R}^{n};V)
\end{align*}
\end{proposition}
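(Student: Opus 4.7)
The plan is to dispatch the direct sum decomposition first and then to attack irreducibility via a spectral-plus-ladder argument built on the Abelian subalgebra $\mathcal{Y}_n$ and the raising/lowering operators $K_\ell^\pm$. The splitting $\mathcal{M}_{k}(\mathbb{R}^{n};V)=\bigoplus_{s=1}^{\dim V}\mathcal{M}_{k}^{s}(\mathbb{R}^{n};V)$ is immediate from the construction of the basis $\{\Psi_{\mathbf{j}}^s\}$ in Proposition 6: the vector $v_s$ sits only in the rightmost slot of \eqref{Basis-1} and is untouched by the CK operators and by the intermediate factors $\underline{x}_{[j]}$, so the basis is genuinely bi-indexed by $(\mathbf{j},s)$ and splits over $s$.

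For irreducibility, I would let $W\subseteq \mathcal{M}_{k}^{s}(\mathbb{R}^{n};V)$ be a non-zero $\mathcal{A}_n$-invariant subspace, pick any $0\neq \psi=\sum_{\mathbf{j}} c_{\mathbf{j}}\Psi_{\mathbf{j}}^s\in W$, and first extract a single basis vector. The generators $\Gamma_{[2]},\ldots,\Gamma_{[n-1]}$ of $\mathcal{Y}_n\subset \mathcal{A}_n$ act diagonally on $\{\Psi_{\mathbf{j}}^s\}$ with the eigenvalues $\lambda_\ell(\mathbf{j})$ given by \eqref{C2}, and this joint spectrum separates the $\mathbf{j}$'s: the magnitude of $\lambda_\ell(\mathbf{j})$ recovers $|\mathbf{j}_{\ell-1}|$ while its sign records the parity of $|\mathbf{j}_{\ell-1}|$, so the sequence of eigenvalues determines $(|\mathbf{j}_1|,\ldots,|\mathbf{j}_{n-2}|)$, hence each $j_\ell=|\mathbf{j}_\ell|-|\mathbf{j}_{\ell-1}|$, and finally $j_{n-1}=k-\sum_{i=1}^{n-2}j_i$. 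Lagrange interpolation on the commuting tuple $(\Gamma_{[2]},\ldots,\Gamma_{[n-1]})$ then provides a polynomial in these operators that projects onto any chosen joint eigenvalue; applying it to $\psi$ yields some $\Psi_{\mathbf{j}_0}^s\in W$.

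The second step is to spread this seed to every other basis vector using the raising/lowering operators $K_\ell^\pm\in\mathcal{A}_n$ of Proposition 11. By \eqref{rr} and \eqref{rr2}, each $K_\ell^\pm$ sends $\Psi_{\mathbf{j}}^s$ to a scalar multiple of $\Psi_{\mathbf{j}\pm h_\ell}^s$ or $\Psi_{\mathbf{j}\pm \bar h_{n-2}}^s$, the scalar squaring to $\alpha_\pm^\ell(\mathbf{j})$. The vanishing list \eqref{vanish} shows that this scalar can fail to be non-zero only at the boundary of the index set (when $j_\ell=0$ or $j_{\ell+1}=0$) and only in the direction that would push a component of $\mathbf{j}$ below zero, while the inward transfer is always available for the appropriate choice of sign $\pm$ dictated by the current parity of $|\mathbf{j}_\ell|$. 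The admissible transitions therefore generate the full nearest-neighbour connectivity graph on compositions of $k$ of length $n-1$, which is connected, so a finite chain of $K_\ell^\pm$'s reaches any $\Psi_{\mathbf{j}'}^s$ from $\Psi_{\mathbf{j}_0}^s$, forcing $W=\mathcal{M}_k^s(\mathbb{R}^n;V)$.

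The main obstacle I anticipate is the bookkeeping inside this last step: at each move one has to choose the sign $+$ or $-$ so that the current parity of $|\mathbf{j}_\ell|$ produces the desired shift and simultaneously keeps $\alpha_\pm^\ell(\mathbf{j})$ positive. This is combinatorial but finite, and it is handled systematically by inspecting the four cases in \eqref{vanish} at every boundary. With that matching in place, the argument reduces to the two clean ingredients just described—spectral separation of the basis by $\mathcal{Y}_n$ and connectivity of the composition graph under the ladder operators—and the proposition follows.
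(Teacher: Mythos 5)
Your argument is correct and follows the same route as the paper, whose own proof is only two sentences ("the decomposition is immediate from Remark 10; irreducibility is manifest from the actions \eqref{rr} and \eqref{rr2}"): you simply make explicit the two ingredients the paper leaves implicit, namely that the joint spectrum of $\mathcal{Y}_{n}$ separates the basis (so a spectral projection extracts a single $\Psi_{\mathbf{j}_0}^{s}$ from any invariant subspace) and that the nonvanishing pattern \eqref{vanish} of the $\alpha_{\pm}^{\ell}(\mathbf{j})$ makes the ladder moves connect all admissible $\mathbf{j}$. No gap; this is a faithful, fully detailed version of the paper's proof.
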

\begin{proof}
The decomposition of $\mathcal{M}_{k}(\mathbb{R}^{n};V)$ as a direct sum of $\mathcal{M}_{k}^{s}(\mathbb{R}^{n};V)$ is immediate from remark 10. The irreducibility of the representation is manifest from the form of the actions \eqref{rr} and \eqref{rr2}.
\end{proof}
\section{A scalar model of the higher rank Bannai--Ito algebra}
In this section, an alternative realization of the higher rank Bannai--Ito algebra is proposed. The model does not involve Clifford generators, and is hence ``scalar''.

Let $D$, $X$ be defined as
\begin{align*}
D=\sum_{i=1}^{n}T_i R_i,\qquad X=\sum_{i=1}^{n} X_i R_i,
\end{align*}
where $R_i=\prod_{j=i+1}^{n}r_{j}$. It is readily verified that $D$ also provides a factorization of the Dunkl--Laplace operator; one has indeed
\begin{align*}
D^2=\Delta,\qquad X^2=\sum_{i=1}^{n}x_i^2.
\end{align*}
Similarly, one can define intermediate operators in the following way. For $A\subset [n]$, we consider
\begin{align*}
D_{A}=\sum_{i\in A}D_i R_i,\qquad X_{A}=\sum_{i\in A}x_i R_i.
\end{align*}
One can verify that the operators $D_{A}$ and $X_{A}$ together with the Euler operator $\mathbb{E}_{A}$, the Dunkl-Laplace operator $\Delta_{A}$ and the square-radius operator $\rVert x_{A}\rVert^{2}$ provide a realization of the $\mathfrak{osp}(1|2)$ Lie superalgebra as in \eqref{OSP}. Following the steps of the first section, one can introduce the sCasimir operator
\begin{align*}
S_{A}=\frac{1}{2}\Big([D_{A},X_{A}]-1\Big).
\end{align*}
which satisfies $\{S_{A}, D_{A}\}=0$ and $\{S_{A},X_{A}\}=0$. Upon defining,
\begin{align*}
\Gamma_{A}=S_{A}\prod_{k\in A}r_{k},
\end{align*}
it follows that $[\Gamma_{A},D_{A}]=[\Gamma_{A}, X_{A}]=0$. With these definitions, it can be verified that the operators $\Gamma_{A}$ satisfy the commutation relations \eqref{BI-Relations}. Moreover, one can confirm that the realization-dependent values taken by the Casimir operators \eqref{Relation} and \eqref{Cas2-Value} are also identical. Furthermore, it is possible to adapt the $\mathbf{CK}$ extension operator to the scalar case.
\begin{remark}
The $n=3$ case of this scalar model was already examined in \cite{2015_Genest&Vinet&Zhedanov_CommMathPhys_336_243}.
\end{remark}
\begin{remark}
The Proposition 12 and the existence of the scalar realization indicate that the Clifford algebra representation space $V$ does not play a significant role.
\end{remark}
\begin{remark}
One might ask why the focus was put on the Dirac--Dunkl model rather than the scalar model, which appears simpler. It turns out that the Dirac--Dunkl is advantageous in the construction of the symmetry algebra. Moreover, this model can be viewed as a deformation of the classical Dirac equation by Dunkl operators, and is hence more interesting ``physically''. However, let us point out that the scalar realization is likely to be more adapted for the study of Conjecture 1.
\end{remark}
\section{Conclusion}
Summing up, we studied the kernel of the $n$-dimensional Dirac--Dunkl operator associated to the reflection group $\mathbb{Z}_2^{n}$ as well as the corresponding Dirac--Dunkl equation on the $(n-1)$-sphere. We obtained the symmetries of this equation and shown that they generate a higher rank extension of the Bannai--Ito algebra. Moreover, a basis for the polynomial null-solutions of the Dirac--Dunkl operator was constructed explicitly in terms of Jacobi polynomials. Furthermore, the action of the higher rank Bannai-Ito algebra on this basis was given through raising and lowering operators for this algebra.

The results of this paper call for a further study of the structure and the properties of the higher rank Bannai--Ito algebra $\mathcal{A}_{n}$. One of the most interesting question in that regard is that of the classification of its unitary irreducible representations. This has not been investigated so far, even in the rank-one case, which corresponds to $\mathcal{A}_3$. In light of the inclusion $\mathcal{A}_{n}\subset \mathcal{A}_{n-1}$, the representations of $\mathcal{A}_{n}$ could be studied using induction. Another interesting avenue is the analysis of the symmetries of Dirac--Dunkl operators associated to other reflection groups.
\section*{Acknowledgements}
\noindent
HDB is grateful for the hospitality extended to him by the Centre de recherches math\'ematiques, where part of this research was carried. The research of HDB is supported by the Fund for Scientific Research-Flanders (FWO-V), project ``Construction of algebra realisations using Dirac-operators'', grant G.0116.13N. VXG holds a postdoctoral fellowship from the Natural Science and Engineering Research Council of Canada (NSERC). The research of LV is supported in part by NSERC. 

\end{document}